\pdfoutput=1
\documentclass[aps,pra,amsmath,amssymb,twocolumn,10pt]{revtex4-2}
\usepackage{mathrsfs}
\usepackage[T1]{fontenc}
\fontencoding{T1}  
\usepackage[utf8]{inputenc}
\usepackage{enumitem}
\usepackage{tikz}
\usepackage{booktabs}
\usepackage{changes}

\newcommand{\vertiii}[1]{{\left\vert\kern-0.25ex\left\vert\kern-0.25ex\left\vert #1 
    \right\vert\kern-0.25ex\right\vert\kern-0.25ex\right\vert}}

\usepackage{amsfonts}
\usepackage{amsmath,amssymb,amsthm}
\usepackage{graphicx}
\usepackage{fancyhdr}
\usepackage[breakable]{tcolorbox}
\usepackage{bbm}
\usepackage{url}
\usepackage{mathtools}
\usepackage{braket}
\usepackage{upgreek}
\usepackage{dsfont}
\usepackage{collectbox}
\usepackage{braket}
\usepackage{quantikz}

\usepackage{float}

\usepackage[plain]{fancyref} %references with object type, no pages
\usepackage[ruled]{algorithm2e}

\usepackage{algpseudocode}

\usepackage{hyperref}

\newtheorem{thm}{Theorem}
\newtheorem*{thm*}{Theorem}

\newtheorem{lemma}[thm]{Lemma}
\newtheorem*{lemma*}{Lemma}
\newtheorem{proposition}[thm]{Proposition}
\newtheorem*{prop*}{Proposition}
\newtheorem{coro}[thm]{Corollary}
\newtheorem{cor}[thm]{Corollary} % Alias por si lo usas en el texto
\newtheorem*{cor*}{Corollary}
\newtheorem{condition}[thm]{Condition}

\newtheorem*{cj*}{Conjecture}

\newtheorem{problem}[thm]{Problem}

\theoremstyle{definition}
\newtheorem{definition}[thm]{Definition}
\newtheorem*{Def*}{Definition}
\newtheorem{rem}[thm]{Remark}
\newtheorem*{rem*}{Remark}

\def\beq{\begin{equation}}
\def\bq{\begin{quote}}
\def\eq{\end{quote}}
\def\ben{\begin{enumerate}}
\def\bit{\begin{itemize}}
\def\eit{\end{itemize}}

\def\r|{\right|}

\newcommand{\stirlingII}[2]{\genfrac{\{}{\}}{0pt}{}{#1}{#2}}

\newcommand{\norm}[1]{\left\|#1\right\|}

\newcommand{\Tr}{\operatorname{Tr}}

\newcommand\be{\begin{equation}}

\usepackage{subcaption}
\usepackage{graphicx}
\usepackage{caption}
\usepackage{ragged2e}
\usepackage{cancel}
\justifying

\newcommand{\Vt}{|\widetilde V(\widetilde G)|}
\newcommand{\wc}{w_{\mathrm c}}
\newcommand{\GC}{\mathcal G_C}
\DeclareMathOperator{\supp}{supp}
\DeclareMathOperator*{\Prob}{Pr}

\begin{document}
    \title{Efficient Quantum Monte Carlo through Cluster Expansions }

\author{\begingroup
\hypersetup{urlcolor=navyblue}
\href{https://orcid.org/0009-0002-1274-4747}{Jorge S\'anchez-Segovia
\endgroup}
}
\email[Jorge S\'anchez-Segovia]{ jorge.sanchezsegovia@estudiante.uam.es}
\affiliation{Instituto de F\'{i}sica Te\'{o}rica UAM/CSIC, C. Nicol\'{a}s Cabrera 13-15, Cantoblanco, 28049 Madrid, Spain}

\author{\begingroup
\hypersetup{urlcolor=navyblue}
\href{https://orcid.org/0000-0002-5889-4022}{\'Alvaro M. Alhambra
\endgroup}
}
\email[\'Alvaro M. Alhambra]{ alvaro.alhambra@csic.es}
 \affiliation{Instituto de F\'{i}sica Te\'{o}rica UAM/CSIC, C. Nicol\'{a}s Cabrera 13-15, Cantoblanco, 28049 Madrid, Spain}

\begin{abstract}
\setlength{\parindent}{0pt}
The sign problem remains a major obstacle to extending the success of Monte Carlo methods to generic quantum many-body systems. A powerful alternative relies on cluster expansions, which provide convergent representations of free energies and related quantities, with applications ranging from tensor network contractions to quantum dynamics and circuits. Within their convergence regime, these expansions often yield deterministic polynomial-time algorithms. However, such methods require exhaustive enumeration of connected clusters, leading to a computational overhead of $(N\varepsilon^{-1})^{\mathcal{O}(\log\Delta)}$, where $N$ is the system size, $\varepsilon$ the desired precision, and $\Delta$ the interaction-graph degree. This becomes particularly severe in long-range systems, where $\Delta\equiv N$, resulting in quasi-polynomial runtimes. Here, we introduce a Markov chain algorithm that samples cluster expansions directly in the space of connected clusters, replacing deterministic enumeration by Monte Carlo sampling. Because sampling occurs in an abstract cluster space rather than a physical configuration space, the conventional exponential sign problem is absent within the convergence regime. For general polymer models, we establish an FPRAS with runtime $\tilde{\mathcal{O}}(N^4\varepsilon^{-3})$. For quantum partition functions at temperatures $\beta\leq\beta_{\mathrm{sampling}}\equiv\mathcal{O}(1)$, this improves to $\tilde{\mathcal{O}}(N^{3+1/D}\varepsilon^{-3})$, where $D$ is the lattice dimension. The framework directly applies to long-range quantum systems with interactions $\propto1/r^\alpha$, including $2$-local models with $\alpha>2D$.

\end{abstract}

\maketitle

	\section{Introduction}

    Monte Carlo methods are ubiquitous across most computationally-demanding fields,  since they can replace the explicit evaluation of exponentially large sums with stochastic sampling. Notably, in classical many body physics, they have provided efficient schemes for computing equilibrium quantities, study phase transitions and critical phenomena. 

    Extending these methods to quantum many-body systems has long been a central goal in computational physics, but at finite temperature this is often obstructed by the \emph{sign problem} \cite{troyer2005computational}. The alternant sign of weights to sample from ends up leading to an exponentially large number of samples for approximating quantities of interest \cite{troyer2005computational,loh1990sign,kaul2013sign}. As a result, quantum Monte Carlo (QMC) is highly successful in sign-problem-free regimes, but its applicability remains fundamentally limited in generic interacting systems \cite{troyer2005computational}.  

The sign problem also appears in broader contexts beyond equilibrium QMC, including real-time path integrals, finite-density field theories, open quantum systems, and tensor-network contraction problems \cite{zheng2018sign}. While a variety of specialized representations and basis choices can sometimes alleviate the difficulty, and some models can be rigorously shown to be efficient (e.g.\cite{BravyiTerhal,bravyigosset,Crosson2021rapidmixingofpath,Crosson2025classicalsimulation,rayudu2025fastmixingoperatorlooppathintegral}), we cannot provide a general cure on complexity theoretic grounds \cite{Marvian2019,Hangleiter_2020}.

This computational barrier has driven extensive research across multiple fronts. On the quantum side, a current efforts focus on algorithms for thermal state preparation, prominently featuring quantum Gibbs sampling (e.g. \cite{chen2023efficient,chen_efficient_2025,rouze_efficient_2026,ding2024efficientquantumgibbssamplers}). In parallel, the search for efficient classical simulation methods has led to approximation schemes based on \emph{cluster expansions} \cite{kotecky1986cluster,fernandez2007cluster}, which have emerged as a particularly powerful algorithmic tool for quantum counting and partition-function problems. Rooted in mathematical physics, these expansions provide a systematic way to organize perturbative corrections around a reference point and often lead to deterministic approximation algorithms for free energies and related quantities in suitable convergence regimes \cite{kotecky1986cluster,fernandez2007cluster,mann2024algorithmic}. They have recently been used in algorithmic applications ranging from quantum partition functions to the contraction of tensor networks \cite{Harrow_2020,Bravyi_2021,Haah_2024, Wild2023,mann2024algorithmic,mann2026efficientalgorithmsweaklyinteractingquantum,midha2025beliefpropagationclustercorrectedtensor}. 

However, most existing approaches remain enumeration-based, and require exhaustive summation over connected clusters. This yields at the very least a $(N\varepsilon^{-1})^{\mathcal{O}(\log \Delta )}$ complexity, where $N$ is the system size, $\varepsilon$ the desired precision and $\Delta$ the degree of the graph. These algorithms thus become prohibitively expensive in higher dimensions and even more so in long-range systems, where their runtime degrades to quasi-polynomial scaling \cite{mann2024algorithmic,Segovia_PRXQuantum2025,zlokapa2026sykthermalexpectationsclassically,zlokapa_rigorous_2026}. 
It was conjectured that Markov Chain Monte Carlo (MCMC) methods can yield better run-times for problems which can be written as a cluster expansion \cite{bergamaschi2026fastmixingalltoallquantum,mann2024algorithmic}. That this is possible is supported by recent numerical works \cite{sim2026stochasticloopcorrectionsbelief}, by the existence of similar results for computing cumulant expansions in the fermionic setting \cite{chen2025convergencecumulantexpansionpolynomialtime} and by the empirical success of the closely related diagrammatic Monte Carlo \cite{DiagMC_Van_Houcke_2010,RossiDiagMC_PhysRevLett.119.045701}.

This conjecture is also supported by the fact that sampling methods for polymer models have been developed in the classical setting, yielding better run-times \cite{blanca2023fast,chen2021fastalgorithmslowtemperatures}. These works sample from \emph{disconnected} configurations to approximate $Z$ to multiplicative error. To do so, they assume that the weights of the clusters are nonnegative. While these works have very recently been adapted to stoquastic quantum Hamiltonians \cite{mann2026fastalgorithmsstoquasticspin}, they cannot be straightforwardly replicated in the general quantum non-commuting setting: to do this, one needs to consider the sign as observable, recovefring the usual \textit{exponential sign problem} of QMC approaches (See App. \ref{app:sign_problem}). Therefore, genuinely new methods need to be developed to apply MCMC techniques to this problem.

In this work, we propose an alternative approach to address this bottleneck: we construct a MCMC algorithm to sample the \emph{connected} cluster expansion directly, yielding an additive-error approximation of $\log Z$. Crucially, the algorithm does this by exploring the space of clusters rather than deterministically enumerating them. 

This sampling approach circumvents the sign problem that typically plagues QMC methods. Because our MCMC chain operates on the space of clusters rather than the physical spin configurations, it does not require the positivity of local weights, but it also does not suffer from variances inflated by vanishingly small average signs. Instead, the variance—and consequently, the runtime—is controlled by the analytic convergence conditions of the expansion. This enables us to compute an $\varepsilon$-additive approximation of $\log Z$ in expected runtime $\mathcal{O}\big( \vert V(G)\vert^4 \varepsilon^{-3} \big)$. This is a low-degree polynomial, which thus overcomes the graph-degree ($\Delta$) dependence that limits deterministic enumeration methods (see Fig. \ref{fig:parte_c}).

\begin{figure*}[t]
    \centering
    \label{fig:illustration_method}
    \begin{subfigure}{0.31\textwidth}
        \centering
        \includegraphics[width=\textwidth]{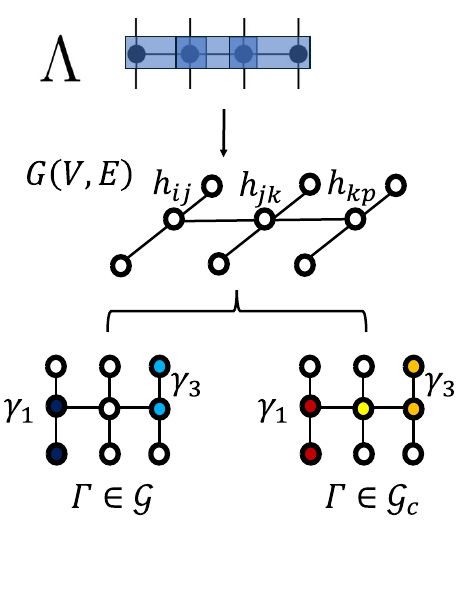}
        \caption{}
        \label{fig:parte_a}
    \end{subfigure}\hfill
    \begin{subfigure}{0.45\textwidth}
        \centering
        \includegraphics[width=\textwidth]{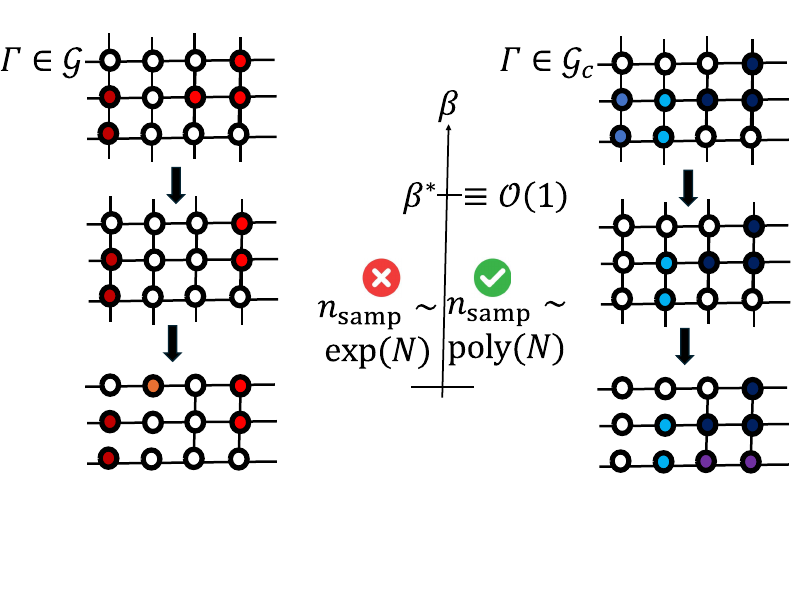}
        \caption{}
        \label{fig:parte_b}
    \end{subfigure}\hfill
    \begin{subfigure}{0.21\textwidth}
        \centering
        \includegraphics[width=\textwidth]{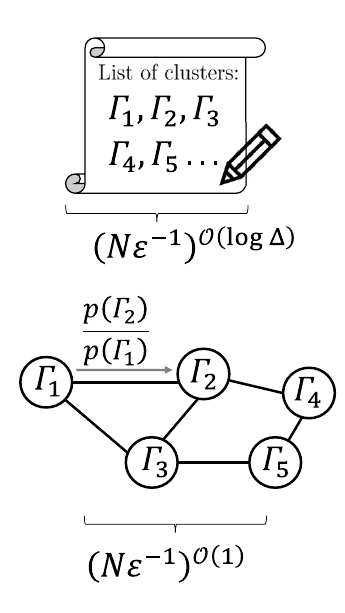}
        \caption{}
        \label{fig:parte_c}
    \end{subfigure}
    
    \caption{\justifying \textbf{Illustration of our sign-problem-free cluster sampling framework.} \textbf{(a)} Polymer model representation on a host graph $G(V,E)\equiv G_{\text{int}}(V,E)$ for a quantum many-body system on a lattice $\Lambda$, where vertices correspond to local Hamiltonian terms $h_i$ and polymers are connected subgraphs of $G(V,E)$. \textbf{Top:} A cluster $\Gamma \in \mathcal{G}_C$ (connected multiset of polymers). \textbf{Bottom:} An admissible configuration $\Gamma \in \mathcal{G}$ (mutually disconnected polymers). Note different polymers are depicted by different colors. \textbf{(b)} Typical approaches such as Glauber dynamics sample physical configurations targeting $Z$, suffering from the exponential sign problem, whereas sampling connected clusters targeting $\log Z$ is sign-problem-free for the convergence regime $\beta \leq \beta ^* \equiv \mathcal{O}(1)$. \textbf{(c)} Illustration of deterministic enumeration ($\Delta$-dependent) against our randomized FPRAS ($\Delta$-independent).}
    \label{fig:figura_conjunta}
\end{figure*}

A central operational requirement of our algorithm is the random generation of these local cluster configurations. While localized cluster sampling strategies have been previously developed for strictly bounded-degree graphs \cite{blanca2023fast}, they cannot be directly adapted to systems with long-range interactions. To bridge this gap, we introduce a tailored spatial sampling strategy designed to efficiently construct local clusters even when interactions decay algebraically. By doing so, our method successfully evaluates $k$-local quantum models with long-range interactions($\propto 1/r^\alpha$) for decay rates $\alpha > kD$ with runtime $\mathcal{O}\big( N^{3+1/D}\varepsilon^{-3} \big)$,  closing the quasipolynomial complexity gap of prior deterministic approaches \cite{Segovia_PRXQuantum2025,tong2026complexityclassificationhightemperaturebosons}.

During the preparation of this manuscript, we became aware of independent work by Putterman, Zlokapa, and Cotler~\cite{putterman2026quantumthermalstateslook}, who similarly recognized that truncated cluster expansions at high temperatures can be estimated via randomized sampling. While their work focuses on sampling combinatorial objects, yielding efficient estimators specifically for long-range Pauli Hamiltonians at high temperatures, our contribution establishes a MCMC approach that applies to general polymer models. We defer a detailed technical comparison of these complementary approaches to Sec. \ref{sec::conclusions}. 

\section{Technical preliminaries}

\subsection{Quantum Hamiltonians and partition functions} \label{sec:quantumHamilton}

We consider quantum systems with $N$ qudits on a $D$-dimensional lattice $\Lambda$ with $\vert \Lambda \vert =N$, governed by Hamiltonians with $k$-body (or $k-$local) terms
	\begin{equation}
		H= \sum_{Z: \vert Z \vert \leq k} h_{Z}. \label{eq:k-body-term}
	\end{equation}
At thermal equilibrium these systems are described by Gibbs states
    \begin{align}
        \rho_\beta= \frac{\mathrm{e}^{-\beta H}}{Z_\beta},
    \end{align}
where $Z_\beta=\Tr[\mathrm{e}^{-\beta H}] $ is the partition function. The main computational problem of interest in that setting \cite{bravyi2021complexity} is to approximate the partition function as follows.
    \begin{problem}[Quantum Partition Function ($k$-QPF)]
Given a $k$-local Hamiltonian $H$ acting on $N$ qudits and a precision parameter $\varepsilon $, compute an estimate $\widehat{\log Z}$ such that
\begin{align}
        \vert \log Z_\beta - \widehat{\log Z_\beta} \vert \leq \varepsilon,
    \end{align}
\end{problem}
This is another way of stating the problem of computing $Z$ with relative error, which in the most general case  this problem is QMA-hard \cite{bravyi2021complexity}. It is thus important to find regimes in which the partition function can be computed efficiently. With this purpose in mind, let us introduce the \textit{principle of exclusion-inclusion}.

    \begin{lemma}(Principle of inclusion-exclusion)
    \label{lemma_inclusion-exclusion}
        Let $f$ be a function defined on the subsets of a finite set $E$, then
        \begin{align*}
            f(E) = \sum_{S \subseteq E} (-1)^{|S|} \sum_{T \subseteq S} (-1)^{|T|} f(T)
        \end{align*}

    \end{lemma}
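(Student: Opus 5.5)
Swap the order of summation in the double sum and use the binomial cancellation $\sum_{k=0}^{m}\binom{m}{k}(-1)^k = 0$ for $m\ge 1$, with value $1$ at $m=0$.

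For each fixed $T \subseteq E$, we collect all pairs $(S,T)$ with $T\subseteq S\subseteq E$. The right-hand side becomes
\begin{align*}
\sum_{T\subseteq E} (-1)^{|T|} f(T) \sum_{S:\, T\subseteq S\subseteq E} (-1)^{|S|}.
\end{align*}
This step is legitimate because $E$ is finite, so all sums are finite.

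To evaluate the inner sum, parametrize $S = T \cup U$ with $U \subseteq E\setminus T$, which is a bijection. Then $(-1)^{|S|} = (-1)^{|T|}(-1)^{|U|}$, and the inner sum equals
\begin{align*}
(-1)^{|T|}\sum_{U\subseteq E\setminus T} (-1)^{|U|} = (-1)^{|T|}(1-1)^{|E\setminus T|}.
\end{align*}
Grouping subsets $U$ by cardinality gives the binomial theorem, so this is $(-1)^{|T|}$ when $T=E$ and $0$ otherwise.

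Substituting back, every term with $T\neq E$ vanishes. The surviving term is $(-1)^{|E|}f(E)\cdot(-1)^{|E|} = f(E)$, which proves the identity. The only point needing care is the convention $0^0=1$ for the case $T=E$, where the inner sum is just the single term $S=E$; otherwise the argument is a routine Möbius inversion on the Boolean lattice. I would note this interpretation, since the lemma presumably feeds into writing $\log Z$ or connected weights through Möbius-type inversions over subsets of edges or polymers.
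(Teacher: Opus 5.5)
Your proof is correct: exchanging the order of summation and evaluating $\sum_{S:\,T\subseteq S\subseteq E}(-1)^{|S|}=(-1)^{|T|}(1-1)^{|E\setminus T|}$, which vanishes unless $T=E$, is the standard argument for this identity. The paper does not prove the lemma itself but defers to the cited reference (Mann and Minko), so there is no different route in the paper to compare yours against.
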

\noindent For a proof we refer the reader to \cite{mann2024algorithmic}.

It is useful at this point to define the \textit{interaction graph} $G_{\text{int}}(V,E)$, where each vertex corresponds to a Hamiltonian term $h_Z$, and an edge $e \in E(G_{\text{int}})$ connects two vertices if their corresponding terms have overlapping support on the lattice $\Lambda$.
Applying Lem. \ref{lemma_inclusion-exclusion} to the set of vertices $V(G_{\text{int}})$ yields  
\begin{align*}
    Z_\beta &= \text{Tr}\left[\mathrm{e}^{-\beta \sum_{Z \in V(G_{\text{int}})} h_Z}\right] \\       
    &= \sum_{S \subseteq V(G_{\text{int}})} (-1)^{|S|} \sum_{T \subseteq S} (-1)^{|T|} \text{Tr} \left[ \mathrm{e}^{-\beta \sum_{Z \in T} h_Z} \right].
\end{align*}

Crucially, any set of vertices $S$ can be uniquely partitioned into its connected components, which we define as \textit{polymers} $\gamma \subseteq V(G_{\text{int}})$. Let $\Gamma \in \mathcal{G}$ denote a set of these mutually disconnected polymers. Because these polymers do not share any edges in the interaction graph, their corresponding Hamiltonian terms act on disjoint regions of the lattice and therefore commute. This allows the trace to factorize over the components of $\Gamma$

\begin{align*}
    Z_\beta &= \sum_{\Gamma \in \mathcal{G}} \prod_{\gamma \in \Gamma} \left( (-1)^{\|\gamma\|} \sum_{T \subseteq E(\gamma)} (-1)^{|T|} \text{Tr} \left[ \mathrm{e}^{-\beta \sum_{Z \in T} h_Z} \right] \right) \\
    &= \sum_{\Gamma \in \mathcal{G}} \prod_{\gamma \in \Gamma} w_\gamma,
\end{align*}
where we define the polymer weight as
\begin{align}
    w_\gamma := (-1)^{\|\gamma\|} \sum_{T \subseteq E(\gamma)} (-1)^{|T|} \operatorname{Tr} \left[ \mathrm{e}^{-\beta \sum_{Z \in T} h_Z} \right].
\end{align}
This yields a \textit{polymer model} description of quantum partition functions \cite{mann2024algorithmic}(See Fig. \ref{fig:parte_a} for an illustration), which we introduce more formally in the following section, Sec. \ref{subsec_technical_polymermodels}. 

For simplicity of notation, we adopt the normalized trace convention throughout this work, defined such that $\operatorname{Tr}[\mathbb{I}] = 1$. Note that this convention simply evaluates a shifted log-partition function; the standard physical value is exactly recovered at the end of the algorithm by adding the analytic constant $ \log \mathcal{D}_\mathcal{H}$, where $\mathcal{D}_\mathcal{H}$ is the dimension of the full Hilbert space.
	
\subsection{Polymer models}\label{subsec_technical_polymermodels}

	After the example of the previous section, we now introduce \emph{polymer models} formally. This is a convenient way of writing partition functions and other related quantities. In an abstract setting, a polymer model is a triple $(\mathcal{C}, w, \sim)$, where $\mathcal{C}$ is a countable set whose objects are called polymers. The function $w : \mathcal{C} \to \mathbb{C}$ assigns to each polymer $\gamma \in \mathcal{C}$ a complex number $w_\gamma$ called the \emph{weight} of the polymer, and $\sim$ is a symmetric compatibility relation such that each polymer is incompatible with itself. Equivalently, the incompatibility relation $\nsim$ is a symmetric and reflexive relation. A set of polymers is called \textit{admissible} if all the polymers in the set are pairwise compatible. Note that the empty set is admissible.
Let $\mathcal{G}$ denote the collection of all admissible sets of polymers from $\mathcal{C}$. The polymer expansion of a partition function is defined by
	\begin{equation} \label{eq:polymerexp}
		Z(\mathcal{C}, w) = \sum_{\Gamma \in \mathcal{G}} \prod_{\gamma \in\Gamma} w_\gamma,
	\end{equation}
	where $\Gamma $ is an admissible set of polymers.

    More concretely, we focus on \textit{subset polymer models} \cite{blanca2023fast} in which all polymers are vertex-induced subgraphs from a host graph $G=(V,E)$, which in the partition function setting is $G_{\text{int}}(V,E)$. These consist of 
    \begin{itemize}
    \item A polymer set $\mathcal{C} = \mathcal{C}(G)$, where each polymer is a subgraphs in $G$. We denote supp$ (\gamma)$ to the vertices of $G(V,E)$ which contains.
    \item Complex weights $w_\gamma$ assigned to each $\gamma \in \mathcal{C}$. 
    \item A connectivity-based incompatibility relation $\nsim$. Two polymers $\gamma, \gamma' \in \mathcal{C}$ are considered incompatible (denoted $\gamma \nsim \gamma'$) if the union of their vertex sets induces a connected subgraph in $G$. Otherwise, they are compatible, denoted as $\gamma \sim \gamma'$.
\end{itemize}
Let $\norm{\cdot} : \mathcal{C} \to \mathbb{Z}^+$ assign to each polymer
$\gamma \in \mathcal{C}$ a positive integer $\norm{\gamma}$, its \textit{size},
given by the number of vertices of $G(V,E)$ contained in $\text{supp}(\gamma)$.

\subsection{The cluster expansion}

We now define the abstract cluster expansion \cite{kotecky1986cluster,fernandez2007cluster,Mann2021} of a polymer model. Let $(\Gamma, m_\gamma(\Gamma))$ be a multiset of polymers with multiplicities $m_\gamma(\Gamma)$. The incompatibility graph $H_\Gamma$ of $\Gamma$ is the graph with vertex set $\Gamma$ and edges between any two polymers if and only if they are incompatible. $\Gamma$ is called a \textit{cluster} if its incompatibility graph $H_\Gamma$ is connected. A polymer and a cluster are compatible if the polymer is compatible with every polymer in the cluster. 

Let $\mathcal{G}_{\mathcal{C}}$ denote the set of all clusters of polymers from $\mathcal{C}$. The abstract cluster expansion is a formal power series for $\log Z(\mathcal{C}, w)$ in the variables $w_\gamma$, defined by
\begin{align}
\log(Z(\mathcal{C}, w)) = \sum_{(\Gamma, m_\gamma) \in \mathcal{G}_C} \varphi(\Gamma) \prod_{\gamma \in \Gamma} \frac{w_\gamma^{m_\gamma(\Gamma)}}{m_\gamma(\Gamma)!},
\end{align}
where

\begin{align}
\varphi(\Gamma) := \sum_{\substack{S \subseteq E(H_\Gamma) \\ \text{spanning, connected}}} (-1)^{|S|},    
\end{align}
denotes the \textit{Ursell function} of the incompatibility graph $H_\Gamma$. For a derivation of the cluster expansion, we refer the reader to \cite{StatisticalMechanicsofLatticeSystems_friedli_velenik_2017}. 
For a cluster $(\Gamma, m_\gamma)$ we write
\begin{align}
\vert \Gamma \vert = \sum_{\gamma \in \Gamma} m_\gamma(\Gamma),
\qquad
\norm{\Gamma} = \sum_{\gamma \in \Gamma} m_\gamma(\Gamma) \norm{\gamma},
\end{align}
for the number of polymers it contains and for its total size.

Note that in the literature it is more common to consider clusters as ordered tuples of polymers, but our notation is better suited for algorithmic purposes \cite{Mann2021, kotecky1986cluster,Segovia_PRXQuantum2025,StatisticalMechanicsofLatticeSystems_friedli_velenik_2017}.

Throughout this work we focus on polymer models that satisfy particular cases of the following condition. 
\begin{condition}[Cluster Sampling Condition]
\label{con:clustersamplingcondition}
Let $(\mathcal{C}, w, \sim)$ be a polymer model, and let $a, d : \mathcal{C} \to \mathbb{R}^+$ be real-valued functions. The polymer model satisfies the Cluster Sampling condition if
\begin{align}\label{eq:cluster_sampling_condition}
    \sum_{\gamma^* \not\sim \gamma} \vert w_{\gamma^*} \vert^{1/2} \mathrm{e}^{a(\gamma^*) + d(\gamma^*)} \le a(\gamma),
\end{align}
for all polymers $\gamma \in \mathcal{C}$. Throughout this work, we specifically set $a(\gamma)=\norm{\gamma} /2$ and $d(\gamma)= 2 \log 2 \cdot \norm{\gamma}$. 
\end{condition} 
The specific choice of $d(\gamma)= 2 \log 2 \cdot\norm{\gamma}$ and $a= \norm{\gamma} /2$ is motivated by algorithmic efficiency. As detailed in Appendix \ref{app:ursell_calculation_cost}, our algorithm requires computing the Ursell functions $\varphi$. This choice of $d(\gamma)$ guarantees that the expected runtime of this subroutine is bounded by an $\mathcal{O}(1)$ constant. On the other hand the choice of $a(\gamma)$ is motivated in Remark \ref{rem_achoice}.
One could relax this to more general choices of $d(\gamma)$ and $a(\gamma)$, but doing so would introduce a $\mathrm{poly}(\vert V(G) \vert, \varepsilon^{-1})$ overhead to the overall algorithmic complexity (See App. \ref{app:ursell_calculation_cost}, \ref{app_rapidmixingproof} for more details).

This condition is a strengthening of the usual condition of convergence of cluster expansions \cite{kotecky1986cluster}. 
\begin{condition}\label{con:kotecky_preiss_criteria}
Let $(\mathcal{C}, w, \sim)$ be a polymer model and let $a(\gamma) : \mathcal{C} \to \mathbb{R}^+$ and $d: \mathcal{C} \to \mathbb{R}^+$.  A polymer model satisfies the Koteck\'y -Preiss condition if 
\begin{align}\label{eq:kotecky_preiss_criteria}
    \sum_{\gamma^* \not\sim \gamma} |w_{\gamma^*}| \mathrm{e}^{a(\gamma^*) + d(\gamma^*)} \le a(\gamma),
\end{align}
for all polymers $\gamma \in \mathcal{C}$. 
\end{condition}
The difference is the square root in the polymer weights, which we require to bound the variance of the Monte-Carlo estimator we consider.
\begin{proposition}
    Con. \ref{con:clustersamplingcondition} $\implies$ Con.\ref{con:kotecky_preiss_criteria}.
\end{proposition}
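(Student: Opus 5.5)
The plan is a termwise comparison: I will show that Con.~\ref{con:clustersamplingcondition} forces every weight to satisfy $\vert w_\gamma\vert \le 1$. Once that holds, $\vert w_{\gamma^*}\vert \le \vert w_{\gamma^*}\vert^{1/2}$ for every polymer. Each summand on the left of \eqref{eq:kotecky_preiss_criteria} is then dominated by the corresponding summand of \eqref{eq:cluster_sampling_condition}, using the same functions $a(\gamma)=\norm{\gamma}/2$ and $d(\gamma)=2\log 2\cdot\norm{\gamma}$. Summing over $\gamma^*\nsim\gamma$ gives the Koteck\'y--Preiss bound by $a(\gamma)$ directly.

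\textbf{Step 1: show $\vert w_\gamma\vert\le 1$.} I will use that the incompatibility relation is reflexive, so $\gamma\nsim\gamma$. The sum in \eqref{eq:cluster_sampling_condition} taken at $\gamma$ therefore contains the term $\gamma^*=\gamma$. All summands are nonnegative, so dropping the others gives
\begin{align*}
\vert w_\gamma\vert^{1/2}\,\mathrm{e}^{\norm{\gamma}/2}\,4^{\norm{\gamma}} \le \frac{\norm{\gamma}}{2}.
\end{align*}
Since $\norm{\gamma}\ge 1$ is an integer, we have $\mathrm{e}^{\norm{\gamma}/2}4^{\norm{\gamma}} > 4^{\norm{\gamma}} \ge \norm{\gamma} > \norm{\gamma}/2$. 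Hence $\vert w_\gamma\vert^{1/2} < 1$, and so $\vert w_\gamma\vert \le \vert w_\gamma\vert^{1/2}$.

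\textbf{Step 2: compare termwise.} For every $\gamma$,
\begin{align*}
\sum_{\gamma^*\nsim\gamma}\vert w_{\gamma^*}\vert \mathrm{e}^{a(\gamma^*)+d(\gamma^*)} \le \sum_{\gamma^*\nsim\gamma}\vert w_{\gamma^*}\vert^{1/2}\mathrm{e}^{a(\gamma^*)+d(\gamma^*)} \le a(\gamma),
\end{align*}
which is Con.~\ref{con:kotecky_preiss_criteria} with the same $a$ and $d$.

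The only nontrivial point is Step 1: the square root weakens the weights only when $\vert w\vert\le 1$, so this bound must be extracted from the condition itself via the self-incompatibility term. With the paper's specific choice of $a$ and $d$ it is the elementary inequality above. For general $a$ and $d$ one would need $a(\gamma)\le \mathrm{e}^{a(\gamma)+d(\gamma)}$, which always holds because $x<\mathrm{e}^x$ and $d\ge 0$. So the argument in fact goes through for arbitrary admissible $a$ and $d$.
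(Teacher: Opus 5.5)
Your proof is correct and is the paper's argument: isolate the self-incompatible term $\gamma^*=\gamma$ in \eqref{eq:cluster_sampling_condition} to obtain $\vert w_\gamma\vert^{1/2}\le a(\gamma)\mathrm{e}^{-a(\gamma)-d(\gamma)}\le 1$, deduce $\vert w_\gamma\vert\le\vert w_\gamma\vert^{1/2}$, and then compare the two sums term by term. Your closing remark that only $x<\mathrm{e}^x$ and $d\ge 0$ are needed is exactly the inequality the paper uses, so the implication indeed holds for arbitrary $a$ and $d$.
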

\begin{proof}
        Since a polymer is self-incompatible ($\gamma \not\sim \gamma$), isolating the $\gamma^* = \gamma$ term in Eq. \eqref{eq:cluster_sampling_condition} implies that $\vert w_\gamma \vert^{1/2} \le a(\gamma) \mathrm{e}^{-a(\gamma)-d(\gamma)} \le 1$. Consequently, $\vert w_\gamma \vert \le \vert w_\gamma \vert^{1/2}$ for all $\gamma \in \mathcal{C}$.
\end{proof}

One of the central results of the theory of polymer models \cite{kotecky1986cluster} states that Con. \ref{con:kotecky_preiss_criteria} is sufficient for the absolute convergence of the cluster expansion.

\begin{lemma}
\label{lem:kotecky_preiss}
Let $(\mathcal{C}, w, \sim)$ be a polymer model such that Con. \ref{con:kotecky_preiss_criteria} holds.
Then, the cluster expansion for $\log(Z(\mathcal{C}, w))$ converges absolutely, $Z(\mathcal{C}, w) \neq 0$, and
\[
\sum_{\substack{\Gamma \in \mathcal{G}_{\mathcal{C}} \\ \Gamma \not\sim \gamma^* }} \left| \varphi(\Gamma) \prod_{\gamma \in \Gamma} \frac{w_{\gamma}^{m_\gamma}}{m_\gamma(\Gamma)!} \right| \mathrm{e}^{\sum_{\gamma \in \Gamma} d(\gamma)} \le a(\gamma^*),
\]
for all polymers $\gamma^* \in \mathcal{C}$.
\end{lemma}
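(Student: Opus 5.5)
We follow the standard Kotecký–Preiss argument, in the inductive form of Fernández–Procacci / Friedli–Velenik. The strategy is to reduce the statement to a bound on finite polymer sets and then pass to the limit. Fix a finite subset $\mathcal{C}'\subseteq\mathcal{C}$. The claimed inequality for $\mathcal{C}$ follows from the same inequality for every finite $\mathcal{C}'$ by monotone convergence, since the clusters of $\mathcal{C}'$ increase to those of $\mathcal{C}$ and all summands are nonnegative. Absolute convergence of the expansion is a consequence of the bound itself: every cluster is incompatible with some polymer, so summing the bound over polymers controls the full series, at least locally in the finite setting relevant to the paper, where $\mathcal{C}$ is finite for a finite host graph.

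The key step is a combinatorial bound on the Ursell function via tree-graph inequalities. Passing to ordered tuples $(\gamma_1,\dots,\gamma_n)$, which replaces the multiset weight $\prod w^{m}/m!$ by $\frac{1}{n!}\prod_i w_{\gamma_i}$, we use Penrose's tree-graph inequality
\begin{equation*}
|\varphi(H)|\le \#\{\text{spanning trees of } H\}.
\end{equation*}
This reduces the claim to a bound on a sum over labeled trees $T$ on $\{0,1,\dots,n\}$, rooted at a vertex $0$ carrying $\gamma^*$. Each edge $\{i,j\}$ of $T$ requires $\gamma_i\nsim\gamma_j$, and the summand is
\begin{equation*}
\prod_{i\ge 1}|w_{\gamma_i}|\,\mathrm{e}^{d(\gamma_i)}.
\end{equation*}
The condition $\Gamma\nsim\gamma^*$ is handled by attaching $\gamma^*$ as an extra root: each cluster incompatible with $\gamma^*$ has $H_{\Gamma\cup\{\gamma^*\}}$ connected, and the root degree is at least one. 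The combinatorial core is then to show, by induction on tree depth, that
\begin{equation*}
F(\gamma)\;\le\;\exp\!\Big(\sum_{\gamma'\nsim\gamma}|w_{\gamma'}|\mathrm{e}^{a(\gamma')+d(\gamma')}\Big)\le \mathrm{e}^{a(\gamma)}.
\end{equation*}
Here $F(\gamma)$ is the rooted-tree generating function, with root $\gamma$ carrying no weight and children summed over incompatible polymers. The first inequality uses the fact that the $k$ children of a node give a factor $\frac{1}{k!}\big(\sum_{\gamma'\nsim\gamma}|w_{\gamma'}|\mathrm{e}^{d(\gamma')}F(\gamma')\big)^k$. The second inequality is exactly Con.~\ref{con:kotecky_preiss_criteria}. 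The induction is made rigorous by truncating trees at depth $L$ and using $F_L\le \mathrm{e}^{a}$ at every level, starting from $F_0=1\le\mathrm{e}^{a}$. Finally, requiring the root degree to be at least one gives $F(\gamma^*)-1\le \mathrm{e}^{a(\gamma^*)}-1$. The sharper bound $a(\gamma^*)$ comes from not exponentiating at the root: the sum over nonempty root-forests is bounded by $\sum_{\gamma'\nsim\gamma^*}|w_{\gamma'}|\mathrm{e}^{d(\gamma')}F(\gamma')\le a(\gamma^*)$. More precisely, one bounds the clusters via trees in which the root has exactly one child, which is the Fernández–Procacci refinement; alternatively one can follow Kotecký–Preiss's original induction on the number of polymers.

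Nonvanishing of $Z$ is proved by induction on $|\mathcal{C}'|$ using the identity
\begin{equation*}
Z(\mathcal{C}')=Z(\mathcal{C}'\setminus\{\gamma\})+w_\gamma Z(\mathcal{C}'\setminus N[\gamma]),
\end{equation*}
where $N[\gamma]$ denotes $\gamma$ together with all polymers incompatible with it. The induction hypothesis is the ratio bound
\begin{equation*}
\Big|\log\frac{Z(\mathcal{C}'\setminus\{\gamma\})}{Z(\mathcal{C}')}\Big|\le |w_\gamma|\mathrm{e}^{a(\gamma)}.
\end{equation*}
Once $Z\neq0$, $\log Z$ is analytic in a polydisc of weights, and the formal cluster series coincides with its Taylor expansion. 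The main obstacle I expect is the root step: obtaining $a(\gamma^*)$ rather than $\mathrm{e}^{a(\gamma^*)}-1$, and correctly matching the multiset normalization $1/m_\gamma!$ with the ordered-tuple $1/n!$. I would first try to handle the root via the one-child decomposition, using that any cluster incompatible with $\gamma^*$ contains a polymer incompatible with $\gamma^*$ that can serve as the tree root, at the cost of overcounting only.
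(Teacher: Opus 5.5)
The paper gives no proof of this lemma; it imports it from \cite{kotecky1986cluster}, whose original argument is inductive and does not use tree-graph bounds. You take the tree-graph route instead, the ``old approach'' revisited in \cite{fernandez2007cluster}. You pass to ordered tuples and use Penrose's bound $|\varphi(H)|\le\#\{\text{spanning trees of }H\}$; this can be read off the paper's own Tutte identity in Lemma~\ref{lemm::calculation_ursell}, since $T_H$ has nonnegative coefficients and so $T_H(1,0)\le T_H(1,1)$. You then control the rooted-tree generating function by the depth recursion $F_{L+1}(\gamma)=\exp\bigl(\sum_{\gamma'\nsim\gamma}|w_{\gamma'}|\mathrm{e}^{d(\gamma')}F_L(\gamma')\bigr)\le\mathrm{e}^{a(\gamma)}$.

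This is a correct proof. Compared with the cited induction it buys three things. It makes the combinatorics explicit and depends only on $|w_\gamma|$, so it holds uniformly over complex weights in a polydisc, which is what the analytic part needs. It has the same shape as the root decomposition the paper uses in App.~\ref{app:A_convergence} for Lemma~\ref{lemma_Aextensive}, with the Penrose step being what lets that recursion control $|\varphi|$. And it sharpens to the Fern\'andez--Procacci criterion if you keep only Penrose trees. The inductive proof, in exchange, needs no tree-graph inequality.

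Two steps need to be written more carefully.

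\emph{The root step.} Attaching $\gamma^*$ as an extra vertex of degree $\ge1$ only gives $\mathrm{e}^{a(\gamma^*)}-1$. Your sentence that the sum over nonempty root-forests is bounded by $S:=\sum_{\gamma'\nsim\gamma^*}|w_{\gamma'}|\mathrm{e}^{d(\gamma')}F(\gamma')$ is false, since that sum equals $\mathrm{e}^{S}-1\ge S$. What works is the version you end with:
\begin{itemize}
\item apply the Penrose bound to $H_\Gamma$ alone;
\item root each spanning tree at an index $i$ with $\gamma_i\nsim\gamma^*$, overcounting by summing over all such $i$;
\item use the symmetry of the tuple sum, $n/n!=1/(n-1)!$.
\end{itemize}
This bounds the left-hand side of the lemma by $S\le\sum_{\gamma'\nsim\gamma^*}|w_{\gamma'}|\mathrm{e}^{a(\gamma')+d(\gamma')}\le a(\gamma^*)$. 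The multiset normalization you worried about is not an issue: each multiset corresponds to $n!/\prod_\gamma m_\gamma!$ tuples with isomorphic incompatibility graphs.

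\emph{The induction for $Z\neq0$.} It does close, but only if you split off the self-term $\gamma'=\gamma$ of the criterion. With $y=|w_\gamma|\mathrm{e}^{a(\gamma)}$ one has $\sum_{\gamma'\nsim\gamma,\,\gamma'\neq\gamma}|w_{\gamma'}|\mathrm{e}^{a(\gamma')}\le a(\gamma)-y$. Writing $Z(\mathcal{C}')/Z(\mathcal{C}'\setminus\{\gamma\})=1+x$, the telescoping over $N[\gamma]\setminus\{\gamma\}$ then gives $|x|\le y\mathrm{e}^{-y}<1$. Hence $|\log(1+x)|\le-\log(1-y\mathrm{e}^{-y})\le y$, the last step being $\mathrm{e}^{y}\ge1+y$. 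Bounding the exponent crudely by $a(\gamma)$ gives only $|x|\le y$, which may exceed $1$ and does not reproduce the hypothesis.

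You can also skip this induction entirely. Your tree bound holds for all complex $z$ with $|z_\gamma|\le|w_\gamma|$, so the cluster series converges absolutely and uniformly on that closed polydisc. The formal identity $\exp(\text{cluster series})=Z$ therefore becomes an identity of functions there, which gives $Z\neq0$ and identifies the series with $\log Z$ at once.
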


 That is, Con. \ref{con:kotecky_preiss_criteria} yields analytical convergence bounds (and is the one often proven in applications \cite{mann2024algorithmic,Segovia_PRXQuantum2025,zlokapa2026sykthermalexpectationsclassically}), whereas the square-root strengthening in Con. \ref{con:clustersamplingcondition} is required in our algorithmic framework (See Lem. \ref{lemma_finitevariance}). As we see here, while the regimes of validity might differ, most of the existing proofs of one condition can be straightforwardly adapted to the other.

\subsection{MCMC and canonical paths method}

Since we consider sampling from cluster expansions with Markov chains, we now introduce some technical preliminaries on Markov chains and their convergence. Formally, a Markov chain on a finite state space $\mathcal{X}$ is a sequence of random variables $(X_t)_{t=0}^\infty$ satisfying the Markov property: the probability distribution of the next state $X_{t+1}$ depends solely on the current state $X_t$. The dynamics of this process are entirely governed by a transition matrix $P$, where 
\begin{align}
    \Pr &(X_{t+1} = x_{t+1} \mid X_t = x_t, \dots, X_0 = x_0) \\
    &= \Pr(X_{t+1} = x_{t+1} \mid X_t = x_t) \equiv P(x_t, x_{t+1}),
\end{align}
for all $x_t, x_{t+1} \in \mathcal{X}$. A fundamental property of such chains is the existence of a \textit{stationary distribution} $\pi$, defined as a probability distribution over $\mathcal{X}$ that remains invariant under the application of the transition matrix 
\begin{equation}
    \sum_{x \in \mathcal{X}} \pi(x) P(x, x') = \pi(x') \quad \text{for all } x' \in \mathcal{X}.
\end{equation}
If the Markov chain is irreducible (any state can eventually be reached from any other state) and aperiodic (it does not cycle deterministically), the distribution of $X_t$ is guaranteed to converge to a unique stationary distribution $\pi$ as $t \to \infty$ \cite{levin2017markovmixing}.

To quantify the rate of this convergence, we measure the distance between the distribution of $X_t$ and $\pi$. The standard metric used for this purpose is the \textit{total variation distance}. For two probability distributions $\mu$ and $\nu$ on $\mathcal{X}$, the total variation distance $d_{\text{TV}}(\mu, \nu)$ is defined as
\begin{align*}
    d_{\text{TV}}(\mu, \nu) &= \frac{1}{2} \sum_{x \in \mathcal{X}} |\mu(x) - \nu(x)|.
\end{align*}
Let $P^t(x, \cdot)$ denote the distribution of the chain after $t$ steps starting from an initial state $x$. The maximum distance to the stationary distribution at time $t$ is given by
\begin{equation}
    d(t) = \max_{x \in \mathcal{X}} d_{\text{TV}}(P^t(x, \cdot), \pi).
\end{equation}
With this metric established, the \textit{mixing time} $t_{\text{mix}}(\varepsilon_{\text{mix}})$ is defined as the minimum number of steps required to guarantee that the total variation distance is at most $\varepsilon_{\text{mix}}$, regardless of the starting state:
\begin{equation}
    t_{\text{mix}}(\varepsilon_{\text{mix}}) = \min \{ t \geq 0 : d(t) \leq \varepsilon_{\text{mix}} \}.
\end{equation}
To rigorously bound the mixing time $t_{\text{mix}}(\varepsilon_{\text{mix}})$ of a Markov chain on a finite state space $\mathcal{X}$, a powerful method we consider is that of \emph{canonical paths} \cite{levin2017markovmixing}. Let $G_{MC} = (\mathcal{X}, E)$ be the transition graph associated with the chain, where an edge $e = (u, v) \in E$ exists if the transition probability $P(u, v) > 0$. For every pair of states $x, w \in \mathcal{X}$, we define a unique, deterministic sequence of valid transitions—a canonical path $\mathcal{P}_{xw}$—connecting $x$ to $w$. The number of transitions in this path is denoted by $\vert{}\mathcal{P}_{xw}\vert{}$, and the maximum path length across all pairs in the state space is defined as $L= \max_{x,w} \vert{}\mathcal{P}_{xw}\vert{}$. The efficiency of the Markov chain depends on ensuring that no single transition is bottle-necked by these overlapping routes. This is quantified by the maximum \emph{congestion parameter} $\rho$, defined as
\begin{align} \label{eq:rho_congestion}     \rho = \max_{e=(u,v) \in E} \frac{1}{\pi(u) P(u,v)} \sum_{\mathcal{P}_{xw} \ni e} \pi(x) \pi(w) \vert{}\mathcal{P}_{xw}\vert{}
\end{align}
It is often mathematically convenient to decouple the path length from the probabilistic flow accumulation by introducing the unweighted congestion parameter $\bar{\rho}$
\begin{align} \label{eq:bar_rho_congestion}     \bar{\rho} = \max_{e=(u,v) \in E} \frac{1}{\pi(u) P(u,v)} \sum_{\mathcal{P}_{xw} \ni e} \pi(x) \pi(w) ,\end{align}
such that $\rho \le \bar{\rho} L$. Bounding the congestion directly lower-bounds the spectral gap of the transition matrix, which guarantees that the mixing time required to reach a total variation distance of at most $\varepsilon_{\text{mix}}$ scales as \cite{levin2017markovmixing}
\begin{align} \label{eq:mixing_time_bound}     
t_{\text{mix}}(\varepsilon_{\text{mix}}) \le \bar{\rho} L \log\left(\frac{1}{p_{\text{min}} \varepsilon_{\text{mix}}}\right).
\end{align}
In interacting many-body systems and polymer models defined on a host graph $G(V,E)$, transitions in the state space are strictly constrained by the geometry of $G$. The routing distance between elements in this graph, and crucially its diameter $\text{diam}(G)$, dictate the maximum number of local updates required to propagate a deformation across the system. In the context of canonical paths, ensuring that the maximum path length satisfies $L \le \mathcal{O}(\text{diam}(G))$ is the key to obtain tight bounds on the mixing time. 

Throughout this work, we frequently rely on Markov chains to estimate the expected value of various quantities. Let $\pi$ be the stationary distribution of a Markov chain on a state space $\mathcal{X}$, and let $O: \mathcal{X} \to \mathbb{C}$ be an observable of interest. The standard MCMC estimator for the expected value $\mathbb{E}_\pi[O]$ is the sample mean over $ n_s$ states generated by the chain
\begin{equation}
    \widehat{O} = \frac{1}{n_s} \sum_{i=1}^{n_s} O(z_i).
\end{equation}
To bound the error of this estimator and provide rigorous algorithmic guarantees, we use Chebyshev's inequality.

\begin{lemma}[Chebyshev's Inequality]
\label{lemma:chebyshev}
Let $x$ be a random variable with finite expected value $\mu$ and finite variance $\sigma^2$. Then, for any real number $\varepsilon > 0$, the probability that $x$ deviates from its mean by at least $\varepsilon$ is bounded by:
\begin{equation}
    \Pr(|x - \mu| \geq \varepsilon) \leq \frac{\sigma^2}{\varepsilon^2}.
\end{equation}
\end{lemma}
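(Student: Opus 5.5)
The plan is to derive Chebyshev's inequality from Markov's inequality, applied to the nonnegative random variable $(x-\mu)^2$. Since the claim is the standard one, I will not set up anything new. I will only check that the hypotheses (finite mean $\mu$, finite variance $\sigma^2$) are exactly what each step needs.

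First, I establish Markov's inequality: for a nonnegative random variable $Y$ and any $t>0$, $\Pr(Y\ge t)\le \mathbb{E}[Y]/t$. The proof uses the pointwise bound $t\,\mathbbm{1}\{Y\ge t\}\le Y$, which holds because $Y\ge 0$. Taking expectations and using monotonicity and linearity of $\mathbb{E}$ gives $t\Pr(Y\ge t)\le \mathbb{E}[Y]$. Nothing beyond the existence of $\mathbb{E}[Y]$ in $[0,\infty]$ is needed.

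Second, I set $Y=(x-\mu)^2$ and $t=\varepsilon^2>0$. The events $\{|x-\mu|\ge\varepsilon\}$ and $\{(x-\mu)^2\ge\varepsilon^2\}$ coincide, because squaring is monotone on $[0,\infty)$. Hence
\begin{align*}
\Pr(|x-\mu|\ge\varepsilon)=\Pr\big((x-\mu)^2\ge\varepsilon^2\big)\le \frac{\mathbb{E}[(x-\mu)^2]}{\varepsilon^2}=\frac{\sigma^2}{\varepsilon^2}.
\end{align*}

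There is no genuine obstacle here. The only point needing care arises because the paper applies this to possibly complex-valued observables $O$. In that case I would read $|x-\mu|$ as the complex modulus and define $\sigma^2=\mathbb{E}|x-\mu|^2$. The same argument then goes through verbatim, since $|x-\mu|^2$ is still a nonnegative real random variable. For the MCMC sample mean, the lemma is applied to $x=\widehat O$, whose variance must be bounded separately, presumably via Lem.~\ref{lemma_finitevariance}.
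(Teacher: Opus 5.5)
Your proof is correct. The paper states Chebyshev's inequality without proof as a standard fact, and your derivation via Markov's inequality applied to $(x-\mu)^2$ is exactly the textbook argument it implicitly relies on. Your complex-modulus remark is also sound: it would let the paper apply the bound to $\xi$ directly, using $\mathbb{E}\vert \xi - \mathbb{E}\xi\vert^2 \le \mathbb{E}\vert\xi\vert^2$, instead of splitting into real and imaginary parts.
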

Applying Lem. \ref{lemma:chebyshev} to our MCMC estimator $\widehat{O}$, we require that the probability of the estimation error exceeding a tolerance $\varepsilon$ is bounded by a maximum failure probability $\delta$. If the samples are strictly independent by construction, the variance of the estimator scales exactly as $  \text{Var}(\widehat{O}) = \text{Var}_\pi(O) / n_s$. Consequently, to achieve the desired precision with probability of success $\geq 1-\delta$, the number of required parallel samples scales as
\begin{equation} \label{eq:sample_complexity}
    n_s \geq \text{Var}_\pi(O)\varepsilon^{-2} \delta^{-1}.
\end{equation}
Eq. \eqref{eq:sample_complexity} establishes a fundamental constraint for randomized approximation algorithms: if either the variance of the observable $\text{Var}_\pi(O)$ is excessively large or the desired precision $\varepsilon$ excessively small, the required number of samples $n_s$ diverges, rendering the MCMC approach inefficient. 

\subsection{Truncation and the Algorithmic Polymer Model}
\label{subsection_algorithmicpolymermodel}

A useful quantity for algorithmic purposes is the truncated cluster expansion $T_m(\mathcal{C}, w)$ for $\log Z(\mathcal{C}, w)$
\begin{align}\label{eq::def_T_m}
    T_m(\mathcal{C}, w) := \sum_{\substack{(\Gamma, m_\gamma) \in \mathcal{G}_{\mathcal{C}} \\ \|\Gamma\| \leq m}} \varphi(\Gamma) \prod_{\gamma \in \Gamma} \frac{w_{\gamma}^{m_\gamma}}{m_\gamma(\Gamma)!}.
\end{align}

To compare this with the full cluster expansion via the Koteck\'y-Preiss criterion (Con. \ref{con:kotecky_preiss_criteria}), we introduce the notion of a \textit{connected covering subgraph} $\tilde{G}(\tilde{V}, \tilde{E}) \subseteq G(V, E)$, whose effective volume is characterized by its number of vertices $\vert \tilde{V} \vert$. We define $\tilde{G}$ as the minimal connected subgraph of the underlying host graph such that every polymer $\gamma \in \mathcal{C}$ in the model is incompatible with at least one vertex in the covering set $\tilde{V}$. Thus, we have the following:

\begin{cor}[Truncation Error Bound]\label{cor:truncation_error}
Under Con. \ref{con:kotecky_preiss_criteria}, the truncation error of the estimator $T_m$ is bounded by
\begin{equation}
    \vert \log Z(\mathcal{C}, w) - T_m(\mathcal{C}, w) \vert \leq \mathcal{O}\left( \vert \tilde{V}(\tilde{G}) \vert \exp{(-m)} \right),
\end{equation}
yielding an $\varepsilon$-additive approximation by truncating at $m=\mathcal{O}( \log (\vert \tilde{V}(\tilde{G}) \vert \varepsilon^{-1}) )$.
\end{cor}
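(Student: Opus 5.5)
The plan is to bound the tail of the cluster expansion by the Koteck\'y--Preiss bound of Lem.~\ref{lem:kotecky_preiss}, using the exponential factor $\mathrm{e}^{\sum_{\gamma\in\Gamma} d(\gamma)}$ as a penalty for large clusters. First, by definition of $T_m$ and absolute convergence (Lem.~\ref{lem:kotecky_preiss}), the error is exactly the sum over clusters with $\norm{\Gamma} > m$:
\begin{align*}
\vert \log Z - T_m\vert \le \sum_{\substack{\Gamma\in\mathcal{G}_{\mathcal{C}}\\ \norm{\Gamma}>m}} \Big\vert \varphi(\Gamma)\prod_{\gamma\in\Gamma}\frac{w_\gamma^{m_\gamma}}{m_\gamma(\Gamma)!}\Big\vert .
\end{align*}

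Next I use the covering subgraph. By definition of $\tilde G$, every polymer is incompatible with at least one vertex $v\in\tilde V$. Here I view the single vertex $v$ as the polymer $\gamma_v$ supported on $\{v\}$; if it is not already in $\mathcal{C}$, I adjoin it with weight zero, which leaves Con.~\ref{con:kotecky_preiss_criteria} intact for the original polymers. Consequently every cluster $\Gamma$ is incompatible with some $\gamma_v$, $v\in\tilde V$. A union bound then gives
\begin{align*}
\vert \log Z - T_m\vert \le \sum_{v\in\tilde V}\ \sum_{\substack{\Gamma\not\sim \gamma_v\\ \norm{\Gamma}>m}} \vert\Phi(\Gamma)\vert,
\end{align*}
where $\Phi(\Gamma)$ denotes the cluster term above.

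The key step is the exponential tail. For $\norm{\Gamma}>m$, the choice $d(\gamma)=2\log 2\,\norm{\gamma}$ gives
\begin{align*}
\sum_{\gamma\in\Gamma} m_\gamma d(\gamma) = 2\log 2\,\norm{\Gamma} \ge \norm{\Gamma} > m,
\end{align*}
so $1\le \mathrm{e}^{-m}\,\mathrm{e}^{\sum_\gamma d(\gamma)}$. Inserting this factor and applying Lem.~\ref{lem:kotecky_preiss} to each inner sum bounds it by $\mathrm{e}^{-m} a(\gamma_v)$. The sum over $\tilde V$ then contributes at most $\vert\tilde V\vert\, \mathrm{e}^{-m} \max_v a(\gamma_v)$. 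With $a(\gamma)=\norm{\gamma}/2$ and $\norm{\gamma_v}=1$, this maximum is $1/2$, which gives the claimed $\mathcal{O}(\vert\tilde V\vert \mathrm{e}^{-m})$. Setting this bound equal to $\varepsilon$ yields $m=\log(\vert\tilde V\vert\varepsilon^{-1})+\mathcal{O}(1)$.

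I expect the main obstacle to be justifying the application of Lem.~\ref{lem:kotecky_preiss} at the single-vertex "polymer" $\gamma_v$, since the lemma is stated for $\gamma^*\in\mathcal{C}$. I would handle it in one of two ways. The first is the zero-weight extension above, after checking that the Koteck\'y--Preiss sum centred at $\gamma_v$ is at most $a(\gamma_v)$. This follows because the polymers incompatible with $\gamma_v$ form a subset of those incompatible with any polymer in $\mathcal{C}$ containing $v$, and some polymer of $\mathcal{C}$ containing $v$ is typically available. The second option is to choose such a polymer $\gamma_v\in\mathcal{C}$ containing $v$ directly, which only changes the constant through $a(\gamma_v)$, and that constant is absorbed into the $\mathcal{O}$.
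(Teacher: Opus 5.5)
Your proposal is correct and follows the paper's proof step for step: bound the error by the tail $\norm{\Gamma}>m$, union-bound over the covering vertices $v\in\tilde V(\tilde G)$, insert the factor $\mathrm{e}^{\sum_{\gamma} d(\gamma)}$, and apply Lem.~\ref{lem:kotecky_preiss} at each $v$ (the paper keeps the sharper factor $4^{-m}$, which you relax to $\mathrm{e}^{-m}$). The only difference is that the paper simply writes $a(v)$, silently treating $v$ as a size-one polymer, whereas you make this explicit. Your second option, a bounded-size polymer of $\mathcal{C}$ containing $v$, is the cleaner fix: in the first option the subset argument only bounds the Koteck\'y--Preiss sum at $\{v\}$ by $a(\gamma)$ for some $\gamma\ni v$, so you would have to set $a(\gamma_v):=a(\gamma)$ (harmless, since $w_{\gamma_v}=0$) rather than $1/2$.
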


\begin{proof}
The truncation error can be bounded in the following way
\begin{align}
    \vert \log Z(\mathcal{C}, w) - T_m & (\mathcal{C}, w)\vert \leq \sum_{\substack{\Gamma \in \mathcal{G}_{\mathcal{C}} \\ \|\Gamma\| > m}}\left \vert \varphi(\Gamma) \prod_{\gamma \in \Gamma} \frac{w_{\gamma}^{m_\gamma}}{m_\gamma(\Gamma)!} \right \vert \nonumber \\
    &\leq \sum_{v \in \tilde{V}(\tilde{G})} \sum_{\substack{\Gamma \in \mathcal{G}_{\mathcal{C}} \\ \|\Gamma\| > m \\ \exists \gamma \in \Gamma:\, \gamma \nsim v}}\left \vert \varphi(\Gamma) \prod_{\gamma \in \Gamma} \frac{w_{\gamma}^{m_\gamma}}{m_\gamma(\Gamma)!} \right \vert \nonumber \\
    &\leq 4^{- m}\sum_{v \in \tilde{V}(\tilde{G})} a(v) \leq  \vert \tilde{V}(\tilde{G}) \vert 4^{- m} ,
\end{align}
where we introduced $d(\gamma)=2 \cdot \log 2 \cdot \norm{\gamma}$ inside the sum and used Lemma \ref{lem:kotecky_preiss} in the last line. 
\end{proof}

\begin{rem}
    In standard polymer models defined on finite-degree graphs, the truncation error is typically expressed directly in terms of the total graph volume $\vert V(G) \vert$, since the covering set scales trivially as $\vert \tilde{V}(\tilde{G}) \vert = \Theta(\vert V(G) \vert)$. However, explicitly isolating the effective volume $\vert \tilde{V}(\tilde{G}) \vert$ via a connected covering set becomes very useful in physical models with dense interactions. For instance, in quantum long-range systems, the abstract interaction graph $V(G)$ consists of all possible $k$-body Hamiltonian terms, leading to a total volume $\vert V(G) \vert \propto N^k$. By contrast, the connected covering set $\tilde{V}(\tilde{G})$ can be naturally chosen to represent the \textit{underlying local model} (e.g., the grid of single-site fields or nearest-neighbor couplings) on the physical lattice. Since every non-local interaction must overlap with this physical skeleton, $\tilde{V}(\tilde{G})$ successfully covers the entire polymer model while guaranteeing that its size scales linearly with the number of physical constituents, $\vert \tilde{V}(\tilde{G}) \vert \propto N$.
\end{rem}

So far, we have established that truncating the cluster expansion to $m= \mathcal{O}(\log (\vert \tilde{V}(\tilde{G})\vert \varepsilon^{-1}))$ achieves an $\varepsilon$-additive approximation. While discarding large polymers is standard in deterministic cluster expansion methods \cite{Mann2021,mann2024algorithmic}, our randomized algorithms additionally require avoiding clusters whose statistical weight is vanishingly small (See Eq.~\eqref{eq:mixing_time_bound}). This motivates the concept of the \textit{algorithmic polymer model}. Before
defining it, we make explicit the form in which the Kotecký--Preiss criterion is
verified in practice. In all models we consider, Con.
\ref{con:kotecky_preiss_criteria} is established not for the weights themselves
but for an explicit majorant
\begin{align}\label{eq:majorant}
    \vert w_\gamma \vert \;\leq\; \bar{w}_\gamma .
\end{align}
In standard finite-range models
one may simply take $\bar{w}_\gamma = \mathcal{O}(\mathrm{e}^{-c\norm{\gamma}})$,
whereas for the long-range Hamiltonians of Sec. \ref{sec::aplication} the
majorant also absorbs the algebraic decay of the couplings (see App.
\ref{app:kp_longrange}). Having such an explicit majorant is what allows the
algorithm to decide which polymers are negligible.

\begin{definition}[Algorithmic Polymer Model]\label{def::algorithmic_polymer_model}
    Let $\kappa: \mathcal{C} \to \mathbb{R}^+$. Given a polymer model
    $\mathcal{P}=(\mathcal{C}, w, \sim)$, we define its algorithmic truncation as
    $\mathcal{P}_{\text{alg}}=(\mathcal{C}_{\text{alg}}, w, \sim, m, \kappa)$,
    where the truncated set of polymers is restricted to
    \begin{equation}
        \mathcal{C}_{\text{alg}} = \{ \gamma \in \mathcal{C} \mid
        \norm{\gamma} \leq m \text{ and } \vert w_\gamma \vert > \kappa(\gamma) \},
    \end{equation}
    and the clusters assembled from them are retained up to twice that size,
    \begin{equation}
        \mathcal{G}_{\text{alg}} = \{ \Gamma \in \mathcal{G}_{\mathcal{C}_{\text{alg}}}
        \mid \norm{\Gamma} \leq 2m \}.
    \end{equation}
\end{definition}

\begin{lemma}[Algorithmic Truncation Bound]\label{lem:algorithmic_truncation}
    Let $\mathcal{P}=(\mathcal{C}, w, \sim)$ satisfy Con.
    \ref{con:kotecky_preiss_criteria} with majorant $\bar{w}$. Then its
    algorithmic truncation, with size limit and weight threshold
    \begin{align}
        m = \mathcal{O}\!\left(\log \left(2\vert \tilde{V}(\tilde{G}) \vert
        \varepsilon^{-1}\right)\right),
        \qquad
        \kappa(\gamma) = \frac{\varepsilon}{2 \vert \tilde{V}(\tilde{G})\vert}\,
        \bar{w}_\gamma ,
    \end{align}
    provides an $\varepsilon$-additive approximation for $\log Z$,
    \begin{align}
        \vert \log Z (\mathcal{C}, w) - T_m (\mathcal{C}_{\text{alg}}, w) \vert
        \le \varepsilon .
    \end{align}
\end{lemma}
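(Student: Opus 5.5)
We split the error by the triangle inequality into a truncation part and a thresholding part:
\begin{align*}
\vert \log Z(\mathcal{C},w) - T_m(\mathcal{C}_{\text{alg}},w)\vert \le{}& \vert \log Z(\mathcal{C},w) - T_m(\mathcal{C},w)\vert \\
&+ \vert T_m(\mathcal{C},w) - T_m(\mathcal{C}_{\text{alg}},w)\vert .
\end{align*}
The first term is controlled directly by Cor.~\ref{cor:truncation_error}. It is at most $\vert \tilde V(\tilde G)\vert 4^{-m}$, so choosing $m = \mathcal{O}(\log(2\vert\tilde V(\tilde G)\vert\varepsilon^{-1}))$ makes it $\le \varepsilon/2$. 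Clusters in $\mathcal{G}_{\text{alg}}$ with $m<\norm{\Gamma}\le 2m$ are absent from $T_m$ by definition \eqref{eq::def_T_m}, so they do not enter this comparison. If $T_m(\mathcal{C}_{\text{alg}},w)$ is instead read as summing over all of $\mathcal{G}_{\text{alg}}$, they only change a tail that Cor.~\ref{cor:truncation_error} already bounds.

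For the second term, every cluster in $T_m(\mathcal{C},w)$ built only from polymers in $\mathcal{C}_{\text{alg}}$ appears identically in both sums. Indeed, $\norm{\Gamma}\le m$ forces each $\norm{\gamma}\le m$, so the size restriction in $\mathcal{C}_{\text{alg}}$ removes nothing new. The difference is therefore bounded by the absolute sum over clusters $\Gamma$ with $\norm{\Gamma}\le m$ containing at least one polymer $\gamma_0$ with $\vert w_{\gamma_0}\vert \le \kappa(\gamma_0) = \frac{\varepsilon}{2\vert\tilde V\vert}\bar w_{\gamma_0}$. The key step is to run the Kotecký--Preiss bound (Lemma~\ref{lem:kotecky_preiss}) on a modified weight system. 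We define $\tilde w_\gamma = \bar w_\gamma$ for $\gamma\in\mathcal{C}_{\text{alg}}$ and $\tilde w_\gamma = \frac{2\vert\tilde V\vert}{\varepsilon}\vert w_\gamma\vert \le \bar w_\gamma$ for discarded $\gamma$. Then $\tilde w\le\bar w$ pointwise, so $\tilde w$ still satisfies Con.~\ref{con:kotecky_preiss_criteria}.

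In any offending cluster at least one factor $\vert w_{\gamma_0}\vert$ equals $\frac{\varepsilon}{2\vert\tilde V\vert}\tilde w_{\gamma_0}$, and all other factors satisfy $\vert w_\gamma\vert \le \tilde w_\gamma$. Hence each offending term is at most $\frac{\varepsilon}{2\vert\tilde V\vert}$ times the corresponding term with weights $\tilde w$. Exactly as in the proof of Cor.~\ref{cor:truncation_error}, we route every cluster through a vertex $v\in\tilde V(\tilde G)$ incompatible with some polymer of it, which is possible since $\tilde G$ covers all polymers. Lemma~\ref{lem:kotecky_preiss} applied to $\tilde w$ then gives a total of at most $\frac{\varepsilon}{2\vert\tilde V\vert}\sum_{v\in\tilde V} a(v)$. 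Summing the two parts yields $\varepsilon$.

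The main obstacle is the normalisation $a(v)\le 1$ used in the last step of Cor.~\ref{cor:truncation_error}. For $v$ a single vertex regarded as a polymer, $a(v)=\norm{v}/2=1/2$, which gives $\le \varepsilon/4$ and is fine. We must check that the $\mathcal{O}$-constants in $m$ absorb any factors of order one here. We must also check that "at least one discarded polymer" is handled without overcounting. Overcounting only helps, since we bound by the full sum over all clusters touching $v$ with reweighted factors. The subtle point is that the factor $\frac{\varepsilon}{2\vert\tilde V\vert}$ can be extracted from a single polymer while monotonicity in the remaining weights is used for the others; that is exactly what the construction of $\tilde w$ accomplishes.
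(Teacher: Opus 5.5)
Your argument is correct and is essentially the paper's. It uses the same triangle-inequality split, Cor.~\ref{cor:truncation_error} for the size tail, and extraction of the factor $\varepsilon/(2\vert\tilde V(\tilde G)\vert)$ from one discarded polymer before bounding the remaining cluster sum by Koteck\'y--Preiss over the covering vertices. The auxiliary weights $\tilde w$ are unnecessary: the paper works with $\bar w$ directly, since $\vert w_{\gamma_0}\vert\le\kappa(\gamma_0)=\frac{\varepsilon}{2\vert\tilde V(\tilde G)\vert}\bar w_{\gamma_0}$ already gives the extraction. As you define $\tilde w$, the claim $\tilde w\le\bar w$ also fails for polymers excluded only because $\norm{\gamma}>m$, though this is harmless since such polymers never occur in clusters with $\norm{\Gamma}\le m$.
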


\begin{proof}
    By the triangle inequality,
    \begin{align}
        \vert\log Z (\mathcal{C}, w) - T_m (\mathcal{C}_{\text{alg}}, w) \vert
        &\leq \vert \log Z (\mathcal{C}, w) - T_m (\mathcal{C}, w) \vert \nonumber \\
        &\quad + \vert T_m (\mathcal{C}, w)- T_m (\mathcal{C}_{\text{alg}}, w) \vert .
    \end{align}
    The first term is bounded by $\varepsilon/2$ by the choice of $m$ (Cor.
    \ref{cor:truncation_error}). Every cluster contributing to the second term
    contains at least one discarded polymer, for which $\vert w_\gamma \vert \leq
    \kappa(\gamma)$, while all its remaining weights obey $\vert w_\gamma \vert
    \leq \bar{w}_\gamma$. Extracting the threshold fraction from the discarded
    polymer,
    \begin{align}
        \vert T_m (\mathcal{C}, w) -& T_m (\mathcal{C}_{\text{alg}}, w) \vert
        \\
        &\leq \frac{\varepsilon}{2\vert \tilde{V}(\tilde{G})\vert}
        \sum_{\Gamma \in \mathcal{G}_{\mathcal{C}}} \left \vert \varphi(\Gamma)
        \prod_{\gamma \in \Gamma} \frac{\bar{w}_\gamma^{\,m_\gamma(\Gamma)}}
        {m_\gamma(\Gamma)!} \right \vert \leq \frac{\varepsilon}{2},
    \end{align}
    where the last step is Con. \ref{con:kotecky_preiss_criteria} applied to the
    majorant, which bounds the unconstrained cluster sum by $\vert
    \tilde{V}(\tilde{G}) \vert$. Combining both terms yields the claim.
\end{proof}

The truncation above discards every polymer whose weight falls below
$\kappa(\gamma)$. For the Markov chain of Sec.
\ref{subsec::cluster_markov_chain} to mix in polynomial time we additionally
need this threshold not to be super-polynomially small, since the mixing bound
depends on the minimal stationary probability through $\log(1/p_{\text{min}})$
(see App. \ref{app_rapidmixingproof}).

\begin{condition}[Regularity of Weights]\label{ass::weight_regularity}
    The majorant of Eq. \eqref{eq:majorant} is bounded from below by
    \begin{align}
        \bar{w}_\gamma \;\geq\; \left( \frac{\varepsilon^{p}}
        {\vert \tilde{V}(\tilde{G}) \vert^{k}} \right)^{\norm{\gamma}} ,
    \end{align}
    for some global constants $k,p > 0$.
\end{condition}

This is a mild requirement, inp ractice satisfied by all the models we consider: it only
excludes weights decaying faster than any polynomial in $\varepsilon$ and
$\vert \tilde{V}(\tilde{G}) \vert^{-1}$.

Without loss of generality, we assume throughout this work that the state space of the polymer model is explicitly lower-truncated by the weight threshold $\kappa(\gamma)$. For any general polymer model, this truncation can be rigorously enforced by allocating $\varepsilon/2$ of the global error budget to bound the aggregate weight of the discarded polymers. Because this truncation merely rescales the target precision $\varepsilon$ by a constant factor without altering the asymptotic complexity bounds, we consider models strictly bounded by $\kappa(\gamma)$ by default to ease notation. Likewise, all cluster sums from here on are taken over
$\mathcal{G}_{\text{alg}}$, i.e. restricted to $\norm{\Gamma} \leq 2m$, and we
keep writing $T_m$ for the truncated expansion $T_m(\mathcal{C}_{\text{alg}},w)$
of the algorithmic model.

\section{Sampling the cluster expansion with MCMC}
We now construct our family of Monte Carlo algorithms to estimate the cluster expansion to additive error $\varepsilon$ for polymer models. Note that we can rewrite the truncated cluster expansion $T_m$ as
	
	\begin{align}\label{eq_Tmestimator}
		T_m =& \sum_{\substack{\Gamma \in \mathcal{G}_{\text{alg}} \\ \norm{\Gamma} \leq m}}
       \varphi(\Gamma) \prod_{\gamma \in \Gamma}
       \frac{w_\gamma^{m_\gamma(\Gamma)}}{m_\gamma(\Gamma)!}
   = A_m \cdot \mathbb{E}_{p(\Gamma)}[\xi(\Gamma)]
	\end{align}
	where we defined the cluster distribution
    \begin{align} \label{eq:targetp}
        p(\Gamma) = \begin{cases}
    \dfrac{\prod_{\gamma \in \Gamma}\left\vert w_\gamma^{m_\gamma(\Gamma)}\right\vert}{A_m}
      & \Gamma \in \mathcal{G}_{\text{alg}} \\[4pt]
    0 & \text{otherwise}
\end{cases}
    \end{align}
 with normalizing constant
 \begin{align}
        A_m = \sum_{\Gamma \in \mathcal{G}_{\text{alg}}}
      \prod_{\gamma \in \Gamma} \left\vert w_\gamma^{m_\gamma(\Gamma)} \right\vert,
    \end{align}
and the observable
    \begin{align}
        \xi(\Gamma)=s (\Gamma) \cdot \frac{\vert   \varphi(\Gamma) \vert}{\prod_{\gamma \in \Gamma}m_\gamma(\Gamma)!}  ,
    \end{align}
where $s (\Gamma)$ is the phase/sign associated to the cluster 
    \begin{align}
        s (\Gamma)=\frac{  \varphi(\Gamma)  }{\vert  \varphi(\Gamma) \vert }  \cdot \prod_{\gamma \in \Gamma} \frac{ w_\gamma^{m_\gamma(\Gamma)}}{ \left \vert w_\gamma^{m_\gamma(\Gamma)} \right \vert }.
    \end{align}
In this section, we construct, through a Markov chain over the cluster configurations, a sampler $S_{\text{MC}}$ with target distribution $p(\Gamma)$. Using it, we obtain estimates $\widehat{\xi}$ and $\widehat{A_m}$ such that their product approximates $T_m$ with an error bounded by $\varepsilon$
\begin{align}
    \vert \widehat{\xi} \cdot \widehat{A_m} - T_m \vert \leq \varepsilon.
\end{align}
By Lem. \ref{lem:algorithmic_truncation}, the output of our algorithm is a
$2\varepsilon$-close approximation to $\log Z$ provided
$m = \mathcal{O}(\log(\vert \tilde{V}(\tilde{G})\vert \varepsilon^{-1}))$.

To construct a valid Markov chain, we require a strictly positive target distribution. We achieve this in Eq. \ref{eq:targetp} by absorbing any complex phases or alternating signs into the cluster observable $\xi(\Gamma)$, effectively sampling from a distribution $p(\Gamma)$ proportional to the absolute weights $\vert w_{\gamma}\vert$. 

In standard QMC methods, such absolute-value transformations typically lead to an algorithmic bottleneck known as the \textit{sign problem} \cite{signpbm_Pan_2024}. This manifests as an exponential blow-up in the required number of samples because the target quantity is only recovered via reweighting, which involves dividing by an exponentially small normalization ratio \cite{signpbm_Pan_2024} (corresponding to a vanishing mean sign). 

By estimating the truncated expansion directly as a product $T_m = A_m \cdot \mathbb{E}_{p(\Gamma)}[\xi(\Gamma)]$ (Eq.~\eqref{eq_Tmestimator}), we circumvent any division by an exponentially small denominator. However, for this product estimator to be efficient, we must ensure that neither the normalization constant $A_m$ nor the statistical variance of the observable $\xi(\Gamma)$ diverge substantially.

The absolute convergence guaranteed by Con. \ref{con:clustersamplingcondition} imposes strict mathematical control over both factors. We first formalize this intrinsic variance control for the observable.

\begin{lemma}\label{lemma_finitevariance}
    Let the polymer model $(\mathcal{C}, w, \sim)$ satisfy Con. \ref{con:clustersamplingcondition}. Then
    \begin{align}
        \mathbb{E}_{p(\Gamma)}[\vert\xi(\Gamma)\vert^2] \leq \mathcal{O}(\vert \tilde{V}(\tilde{G}) \vert /A_m),
    \end{align}
    
\end{lemma}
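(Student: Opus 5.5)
We expand the second moment directly from the definition of $p$. Since $|\xi(\Gamma)|^2 = |\varphi(\Gamma)|^2/\prod_\gamma m_\gamma(\Gamma)!^2$, we have
\begin{align*}
\mathbb{E}_{p}[|\xi|^2] = \frac{1}{A_m}\sum_{\Gamma\in\mathcal{G}_{\text{alg}}} \frac{|\varphi(\Gamma)|^2}{\prod_{\gamma}m_\gamma(\Gamma)!^2}\prod_{\gamma\in\Gamma}|w_\gamma|^{m_\gamma(\Gamma)} .
\end{align*}
It therefore suffices to show that the numerator sum is $\mathcal{O}(|\tilde V(\tilde G)|)$. The plan is to recognize this as a cluster sum of the type in Lem.~\ref{lem:kotecky_preiss}, applied not to the original weights but to the square-root weights $|w_\gamma|^{1/2}$. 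This is exactly why Con.~\ref{con:clustersamplingcondition} carries the square root.

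\emph{Step 1 (removing $|\varphi|^2$).} The quantity $|\varphi(\Gamma)|^2$ is not controlled by Lem.~\ref{lem:kotecky_preiss}, which only handles $|\varphi|$. Write $|\varphi|^2/\prod m! \,\cdot\, 1/\prod m! \le |\varphi|/\prod m! \cdot |\varphi|$. Then bound $|\varphi(\Gamma)|$ by the number of spanning trees of $H_\Gamma$, using the Penrose tree-graph bound. Finally, bound the spanning-tree count crudely by something exponential in $|\Gamma|\le\norm{\Gamma}$, e.g.\ $|\varphi(\Gamma)|\le 2^{\binom{|\Gamma|}{2}}$ is too weak, so instead use the tree bound together with the factor $\prod_\gamma |w_\gamma|^{m_\gamma/2}$. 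The cleanest route is to split
\[
\prod|w_\gamma|^{m_\gamma} = \prod |w_\gamma|^{m_\gamma/2}\cdot \prod|w_\gamma|^{m_\gamma/2}.
\]
One factor is used with $|\varphi|$ in a KP sum with weights $|w|^{1/2}$. The other factor is used to absorb the second $|\varphi|$: via Con.~\ref{con:clustersamplingcondition} each $|w_\gamma|^{1/2}\le a e^{-a-d}\le e^{-d(\gamma)}\cdot\mathcal O(1)$, which gives a factor $4^{-\norm{\gamma}}$ per polymer.

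\emph{Step 2 (the key obstacle).} We must show $|\varphi(\Gamma)|\prod_\gamma |w_\gamma|^{m_\gamma/2}$ is bounded by something like $e^{\sum d(\gamma)}\prod |w_\gamma|^{m_\gamma/2}/\prod m_\gamma!$ up to constants, or more simply that $|\varphi(\Gamma)|\prod_\gamma e^{-c\norm{\gamma}m_\gamma}$ stays bounded. I would first try the Penrose bound $|\varphi(\Gamma)|\le \#\{\text{spanning trees of }H_\Gamma\}$ together with the fact that a polymer $\gamma$ has at most $\mathcal{O}(\norm{\gamma})\cdot$(number of incompatible polymers) neighbours. However, on dense graphs this degree is unbounded, so a pure counting bound fails. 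The fallback is to not bound $|\varphi|$ pointwise at all. Instead, interpret $\sum |\varphi|^2 \prod(\cdot)$ by inserting the tree expansion twice and summing over trees rooted at a vertex of $\tilde V$ inductively, exactly as in the standard proof of Lem.~\ref{lem:kotecky_preiss}. Each tree edge costs a factor controlled by $\sum_{\gamma^*\nsim\gamma}|w_{\gamma^*}|^{1/2}e^{a+d}\le a(\gamma)$, and the extra room $e^{d}=4^{\norm{\gamma}}$ pays for the second tree's branching. This is the step I expect to require the most care.

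\emph{Step 3 (summing over roots).} As in the proof of Cor.~\ref{cor:truncation_error}, every cluster is incompatible with some $v\in\tilde V(\tilde G)$. Union-bounding over $v$ and applying the KP-type bound with weights $|w|^{1/2}$ (valid by Con.~\ref{con:clustersamplingcondition}, i.e.\ Con.~\ref{con:kotecky_preiss_criteria} for $|w|^{1/2}$) gives a contribution $\le a(v)=\mathcal{O}(1)$ per root. The total is therefore $\mathcal{O}(|\tilde V(\tilde G)|)$; dividing by $A_m$ yields the claim.
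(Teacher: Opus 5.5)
Your reduction to a cluster sum with square-root weights and your Step~3 (union bound over roots $v\in\tilde V(\tilde G)$, each contributing $\mathcal{O}(1)$ via Lem.~\ref{lem:kotecky_preiss} for the weights $\vert w_\gamma\vert^{1/2}$) match the paper. Steps~1--2, however, leave a genuine gap: you never control the second factor of $\vert\varphi(\Gamma)\vert$.

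Both pointwise targets you set in Step~2 are false in general. Suppose $\Gamma$ consists of $n$ distinct bounded-size polymers sharing a common vertex, as happens in dense or long-range models. Then $H_\Gamma=K_n$ and $\vert\varphi(\Gamma)\vert=(n-1)!$. This is neither dominated by $\mathrm{e}^{\sum_\gamma d(\gamma)}$ nor tamed by $\prod_\gamma \mathrm{e}^{-c\norm{\gamma}m_\gamma}$, since no multiplicity factorials are available there. Even inside the truncation $n\le 2m$ it leaves a quasi-polynomial factor rather than $\mathcal{O}(1)$. You notice that counting fails, but the fallback of inserting the tree expansion twice is only gestured at, and in your framing it is exactly the step carrying all the difficulty. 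Your asymmetric split $\vert\varphi\vert^2/(\prod m_\gamma!)^2\le(\vert\varphi\vert/\prod m_\gamma!)\cdot\vert\varphi\vert$ also throws away a factorial you would need. The premise that Lem.~\ref{lem:kotecky_preiss} ``only handles $\vert\varphi\vert$'' is what led you astray: the squared summand is itself the square of a KP summand.

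The missing idea is elementary and makes Step~2 unnecessary: keep the split symmetric. Define
\[
x_\Gamma:=\vert\varphi(\Gamma)\vert\prod_{\gamma\in\Gamma}\frac{\vert w_\gamma\vert^{m_\gamma(\Gamma)/2}}{m_\gamma(\Gamma)!}\ge 0 .
\]
Each summand of your numerator is exactly $x_\Gamma^2$. For nonnegative terms, $\sum_{\Gamma\nsim v}x_\Gamma^2\le\bigl(\sum_{\Gamma\nsim v}x_\Gamma\bigr)^2\le a(v)^2$. The last step is Lem.~\ref{lem:kotecky_preiss} applied to the weights $\vert w_\gamma\vert^{1/2}$, which Con.~\ref{con:clustersamplingcondition} grants, after dropping $\mathrm{e}^{\sum d}\ge 1$.

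Put differently, the only pointwise control on $\varphi$ you need is that a single term is at most the whole KP sum, $x_\Gamma\le a(v)$. Summing over $v\in\tilde V(\tilde G)$ gives $\mathcal{O}(\vert\tilde V(\tilde G)\vert)$, with no Penrose bound, no degree counting, and no use of the extra room $\mathrm{e}^{d}$. This is precisely the paper's proof.
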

\noindent The proof is shown in App. \ref{app:variance_convergence}.
    
While Lemma \ref{lemma_finitevariance} ensures that the statistical fluctuations of our estimator are well-behaved, we must also guarantee that the normalization constant $A_m$ itself does not introduce a severe overhead. To bound the truncation error and establish $A_m$ to be extensive ($\propto \vert \tilde{V}(\tilde{G}) \vert $) we introduce a mild assumption on the polymer models that we study.
 
\begin{condition}[Uniform Local Weights]\label{ass::uniform_local_weights}
There exists a known integer $r\ge1$, independent of the system size and of the
target accuracy, such that every polymer of non-vanishing weight has size
$\|\gamma\|\ge r$. Moreover, for every vertex $v\in\widetilde V(\widetilde G)$
there exists a local polymer $\gamma_v\in\mathcal C$ such that
$v\in\supp(\gamma_v)$, of size exactly $\|\gamma_v\|=r$, and whose weight
satisfies $|w_{\gamma_v}|\geq w_c>0$ for some global constant
$w_c\equiv\mathcal O(1)$. These representative polymers are unique to each
vertex, meaning $\gamma_v\neq\gamma_u$ for all $v\neq u$.
\end{condition}
 
In practice, this holds without loss of generality for physical systems (e.g., those containing single-site terms or non-zero local fields, See Sec. \ref{sec::aplication}). While this requirement guarantees optimal scaling, it can be mathematically relaxed at the cost of incurring a polynomial overhead in the algorithmic runtime (see Appendix \ref{app_rapidmixingproof}).
What it implies is that $A_m$ scales extensively with the effective volume.

\begin{cor}
Let the polymer model satisfy Con. \ref{ass::uniform_local_weights} and write
\begin{equation}
  B_r\;:=\!\!\sum_{\gamma\,:\,\|\gamma\|=r}\!\!|w_\gamma| .
  \label{eq:Br}
\end{equation}
Then
\begin{equation}
  \wc\Vt\;\le\;B_r\;\le\;A_m .
  \label{eq:Brchain}
\end{equation}
\end{cor}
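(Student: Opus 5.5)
The plan is to prove the two inequalities in \eqref{eq:Brchain} separately; both follow directly from the definitions and Con.~\ref{ass::uniform_local_weights}.

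\emph{Lower bound $\wc\Vt\le B_r$.} Con.~\ref{ass::uniform_local_weights} assigns to each $v\in\widetilde V(\widetilde G)$ a polymer $\gamma_v$ with $\|\gamma_v\|=r$ and $|w_{\gamma_v}|\ge\wc$. These polymers are pairwise distinct, so the map $v\mapsto\gamma_v$ is injective into $\{\gamma:\|\gamma\|=r\}$. Since every summand in $B_r$ is nonnegative, we can drop all terms except those indexed by the image of this map, which gives $B_r\ge\sum_{v}|w_{\gamma_v}|\ge \wc\Vt$.

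\emph{Upper bound $B_r\le A_m$.} Every single polymer $\gamma$ forms a cluster by itself, the multiset $\{\gamma\}$ with multiplicity $1$, whose incompatibility graph is a single vertex and hence connected. Its contribution to $A_m$ is exactly $|w_\gamma|$. Since all terms of $A_m$ are nonnegative, restricting the sum defining $A_m$ to single-polymer clusters of size $r$ gives $A_m\ge\sum_{\|\gamma\|=r,\ \{\gamma\}\in\mathcal G_{\text{alg}}}|w_\gamma|$.

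The one point needing care is showing that every size-$r$ polymer with nonzero weight actually lies in $\mathcal G_{\text{alg}}$. Polymers with zero weight contribute nothing to $B_r$, so they can be ignored. For the rest, membership requires two things:
\begin{itemize}[label={}]
\item $r\le m$, which holds for the chosen truncation $m=\mathcal O(\log(\Vt\varepsilon^{-1}))$ because $r$ is a fixed constant;
\item $|w_\gamma|>\kappa(\gamma)$.
\end{itemize}

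For the second requirement, I rely on the paper's standing convention that the model has already been lower-truncated by $\kappa$, with the discarded polymers charged to the $\varepsilon/2$ budget. Under this convention, $\mathcal C=\mathcal C_{\text{alg}}$ as far as nonvanishing weights are concerned, so $B_r$ is understood as a sum over $\mathcal C_{\text{alg}}$.

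This convention is also where I expect the main subtlety. If one instead insists on the untruncated $B_r$, the local polymers $\gamma_v$ still survive the truncation: their weight satisfies $|w_{\gamma_v}|\ge\wc$, while $\kappa(\gamma_v)=\tfrac{\varepsilon}{2\Vt}\bar w_{\gamma_v}$ is smaller than $\wc$ for $\varepsilon$ small, since $\bar w$ is $\mathcal O(1)$ by the Proposition's consequence $|w|\le1$. The chain $\wc\Vt\le A_m$ therefore holds in either reading.
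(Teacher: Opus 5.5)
Your proof is correct and follows the paper's argument: the pairwise-distinct representatives $\gamma_v$ give the lower bound, and because every size-$r$ polymer is a single-polymer cluster of size $r\le m$, $B_r$ is a subsum of $A_m$; like the paper, you rely on the standing $\kappa$-truncation convention, which you state explicitly. One small slip in your side remark: $|w_\gamma|\le 1$ does not bound the majorant $\bar w_\gamma\ge|w_\gamma|$, so the bound $\bar w_\gamma\le a(\gamma)\mathrm{e}^{-a(\gamma)-d(\gamma)}\le 1$ should instead come from isolating the $\gamma^*=\gamma$ term in Con.~\ref{con:kotecky_preiss_criteria} applied to $\bar w$.
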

 
\begin{proof}
By Con. \ref{ass::uniform_local_weights} the representatives $\{\gamma_v\}_{v\in\widetilde V(\widetilde G)}$
are pairwise distinct polymers of size exactly $r$ with $|w_{\gamma_v}|\ge\wc$,
so they contribute separately to \eqref{eq:Br} and
$B_r\ge\wc\Vt$. Since every polymer of size $r$ is itself a cluster of
$\GC$ of total size $r\le m$, the sum \eqref{eq:Br} is a subsum of the one
defining $A_m$, whence $B_r\le A_m$.
\end{proof}

Under this lower bound, Con. \ref{con:kotecky_preiss_criteria} further guarantees that the finite-order truncation of $A_m$ has an extensive upper bound and converges exponentially fast to $A_m$. 
\begin{lemma}[Convergence and Extensivity of $A_m$]\label{lemma_Aextensive}
    Let the polymer model $(\mathcal{C}, w, \sim)$ satisfy Con. \ref{con:kotecky_preiss_criteria}. Then
    \begin{equation}
        \sum_{\substack{(\Gamma, m_\gamma) \in \mathcal{G}_{C} \\ \Gamma \nsim \gamma^*}}\prod_{\gamma \in \Gamma}\left\vert  w_\gamma^{m_\gamma(\Gamma)} \right\vert \mathrm{e}^{a(\gamma)+d(\gamma)} \leq a(\gamma^*).
    \end{equation}
    Additionally, given $m'\leq m$, the $m'$-th order truncated expansion of $A_m$, defined as 
    \begin{align}
        A_{m'} =\sum_{\substack{(\Gamma, m_\gamma) \in \mathcal{G}_{\text{alg}} \\ \norm{\Gamma} \leq 2m'}} \prod_{\gamma \in \Gamma}  \left\vert w_\gamma^{m_\gamma(\Gamma)} \right\vert,
    \end{align}

    \begin{align}
        A_m = \sum_{\Gamma \in \mathcal{G}_{\text{alg}}}
      \prod_{\gamma \in \Gamma} \left\vert w_\gamma^{m_\gamma(\Gamma)} \right\vert,
    \end{align}

    converges exponentially to $A_m$
    \begin{equation}
        \vert A_m - A_{m'} \vert \leq \mathcal{O}\left( \vert \tilde{V}(\tilde{G}) \vert  \cdot\exp{(-m')} \right).
    \end{equation}
    Furthermore, if Con. \ref{ass::uniform_local_weights} is satisfied, the normalization constant $A_m$ is strictly extensive and scales proportionally to the effective volume:
    \begin{equation}
        A_m = \Theta\left(\vert \tilde{V}(\tilde{G}) \vert\right).
    \end{equation}
\end{lemma}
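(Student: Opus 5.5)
The first inequality is, in essence, the Kotecký--Preiss bound of Lem.~\ref{lem:kotecky_preiss} with the Ursell factor removed and the weights $|w_\gamma|$ replaced by $|w_\gamma|\mathrm{e}^{a(\gamma)}$. I will not invoke Lem.~\ref{lem:kotecky_preiss} as a black box, because $|\varphi(\Gamma)|/\prod m_\gamma!$ can be smaller than $1$, so that bound does not dominate our sum. Instead I will prove the inequality directly by induction on the total size $\norm{\Gamma}$, which is finite in $\mathcal{G}_{\text{alg}}$ because $\norm{\Gamma}\le 2m$.

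Concretely, let $F_n(\gamma^*)$ denote the left-hand side restricted to clusters with $\norm{\Gamma}\le n$. I claim $F_n(\gamma^*)\le a(\gamma^*)$ for all $n$. For a cluster $\Gamma\nsim\gamma^*$, fix an ordering of the polymers and use a tree-like decomposition. Pick a polymer $\gamma_1\in\Gamma$ with $\gamma_1\nsim\gamma^*$. Removing $\gamma_1$ splits the incompatibility graph $H_\Gamma$ into connected sub-clusters $\Gamma_1,\dots,\Gamma_j$, each incompatible with $\gamma_1$. Summing over this decomposition gives
\begin{align*}
F_n(\gamma^*)\le \sum_{\gamma_1\nsim\gamma^*}|w_{\gamma_1}|\mathrm{e}^{a(\gamma_1)+d(\gamma_1)}\sum_{j\ge0}\frac{F_{n-1}(\gamma_1)^j}{j!}.
\end{align*}
By the induction hypothesis this is at most $\sum_{\gamma_1\nsim\gamma^*}|w_{\gamma_1}|\mathrm{e}^{2a(\gamma_1)+d(\gamma_1)}$.

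This is where I expect the main obstacle. The exponential gives $\mathrm{e}^{2a}$ rather than $\mathrm{e}^{a}$, and the decomposition overcounts multiplicities and orderings. I would resolve the first issue using Con.~\ref{con:clustersamplingcondition} rather than Con.~\ref{con:kotecky_preiss_criteria}. That condition gives $|w_\gamma|^{1/2}\le \mathrm{e}^{-a(\gamma)-d(\gamma)}a(\gamma)$, hence
\begin{align*}
|w_\gamma|\mathrm{e}^{a(\gamma)}\le |w_\gamma|^{1/2}a(\gamma)\mathrm{e}^{-d(\gamma)}\le |w_\gamma|^{1/2},
\end{align*}
where the last step uses $a(\gamma)=\norm{\gamma}/2\le 4^{\norm{\gamma}}=\mathrm{e}^{d(\gamma)}$. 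The extra factor $\mathrm{e}^{a}$ can therefore be absorbed, and the resulting sum is exactly the left-hand side of \eqref{eq:cluster_sampling_condition}, bounded by $a(\gamma^*)$. For the second issue, multiplicities only reduce the count when clusters are read as multisets, so the overcounting goes in the favourable direction.

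The convergence statement then follows as in Cor.~\ref{cor:truncation_error}. Every cluster in $A_m-A_{m'}$ has $\norm{\Gamma}>2m'$ and is incompatible with some $v\in\tilde V(\tilde G)$. Inserting $\mathrm{e}^{\sum d(\gamma)}\ge 4^{2m'}$ and summing the first inequality over $v$ gives the bound $|\tilde V(\tilde G)|\,4^{-2m'}\max_v a(v)=\mathcal{O}(|\tilde V(\tilde G)|\mathrm{e}^{-m'})$.

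For extensivity, the upper bound comes from the same covering argument with $m'=0$: every cluster is incompatible with some $v\in\tilde V$, so $A_m\le\sum_v a(v)=\mathcal{O}(|\tilde V(\tilde G)|)$, with $a(v)=\mathcal{O}(1)$ for single vertices. The lower bound is the preceding corollary, $A_m\ge B_r\ge w_c|\tilde V(\tilde G)|$. Together these give $A_m=\Theta(|\tilde V(\tilde G)|)$.
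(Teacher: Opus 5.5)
Your proof is correct, and its skeleton is the paper's. You remove a polymer incompatible with the root, bound the unordered family of hanging sub-clusters by $\sum_j x^j/j!$, cover by $\tilde V(\tilde G)$, extract $4^{-\norm{\Gamma}}$ from $\mathrm{e}^{d}$, and take the lower bound from $B_r$. The paper organizes the recursion through rooted sums $G_n(\gamma^*)$ over clusters containing $\gamma^*$ and inducts on the number of polymers; your unrooted $F_n$ with induction on total size is equivalent bookkeeping.

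The substantive difference is the hypothesis. You prove the first inequality under Con.~\ref{con:clustersamplingcondition}, whereas the statement assumes only Con.~\ref{con:kotecky_preiss_criteria}. That substitution is not a weakness of your route. The $\mathrm{e}^{2a}$ obstruction you found is real: with $\mathrm{e}^{a(\gamma)}$ in the summand, the inequality is false under Koteck\'y--Preiss alone. Take a single polymer $\gamma$ with $\norm{\gamma}=1$ and $4\sqrt{\mathrm{e}}\,|w_\gamma|=\tfrac12=a(\gamma)$, so the KP condition holds. The clusters $\{\gamma\}$ and $\{\gamma,\gamma\}$ already contribute $\tfrac12+\tfrac14>a(\gamma)$; the sum still exceeds $a(\gamma)$ if $\mathrm{e}^{a+d}$ is counted once per distinct polymer.

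The paper's proof only appears to close under KP because an $\mathrm{e}^{a}$ gets lost. First, $G_n$ is defined with $\mathrm{e}^{a(\gamma)}$ on every polymer, yet the base case uses $G_1(\gamma^*)=|w_{\gamma^*}|$, and the inductive step never restores the root's factor $\mathrm{e}^{a(\gamma^*)}$. Second, $A'(\gamma^*)$, which carries $\mathrm{e}^{a+d}$ on every polymer, is bounded by $\sum_{\gamma\nsim\gamma^*}G(\gamma)$, which carries only $\mathrm{e}^{a}$.

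What you could add is that the second and third claims never use the factor $\mathrm{e}^{a}$. Run your induction with summand $\prod_{\gamma\in\Gamma}(|w_\gamma|\mathrm{e}^{d(\gamma)})^{m_\gamma(\Gamma)}$. You get $F_n(\gamma^*)\le\sum_{\gamma_1\nsim\gamma^*}|w_{\gamma_1}|\mathrm{e}^{d(\gamma_1)}\exp(F_{n-1}(\gamma_1))\le\sum_{\gamma_1\nsim\gamma^*}|w_{\gamma_1}|\mathrm{e}^{a(\gamma_1)+d(\gamma_1)}\le a(\gamma^*)$, which closes under Con.~\ref{con:kotecky_preiss_criteria} exactly. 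Your covering argument and the $4^{-2m'}$ extraction then go through verbatim. So the convergence and extensivity statements do hold under the stated KP hypothesis. This is worth recording, because the paper invokes $A_m\le\Vt$ in results that assume only KP: Thm.~\ref{thm_mixingtime_clusterexp} and Lem.~\ref{lem_app_AnnealingFilter}.
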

\noindent    The proof is shown in App. \ref{app:A_convergence}.

\subsection{Main result}

We now construct and prove the convergence of our central algorithm. 
Before constructing the MCMC scheme through the cluster space, let us remark that our Markov chain transitions require generating individual polymers locally. For now, let us assume access to a perfect sampling oracle that outputs a polymer connected to a specific vertex $v \in V(G)$ according to the distribution

	\begin{align}
		\nu_v(\gamma)=
		\begin{cases}
			\vert w_\gamma \vert \quad \forall \, \gamma \neq \emptyset \\
			1- \sum_{\gamma \nsim v}\vert w_\gamma \vert \quad \gamma = \emptyset.\\
		\end{cases}
		\label{eq::target_polymersampler_distribution}
	\end{align}
    Provided Con. \ref{con:kotecky_preiss_criteria} is satisfied, $\nu_v(\gamma)$ constitutes a valid, properly normalized probability distribution. That such a perfect sampler exists is a weak assumption: polymer samplers have been constructed for models of interest, such as those in finite-degree graphs $G(V,E)$  \cite{chen2021fastalgorithmslowtemperatures,blanca2023fast}, and for the long-range case of interest here we independently construct it explicitly in Lem. \ref{lem:sampling_algorithm}.

We now state one of our main technical results, where access to such a sampler
is assumed, and we denote by $C_w$ its expected cost per call.
\begin{thm}\label{thm:sampling_cexp}
        Let the polymer model $(\mathcal{C}, w, \sim)$ defined on a graph $G(V,E)$ satisfy both  Con. \ref{con:clustersamplingcondition} and Con. \ref{ass::uniform_local_weights} and assume access to a perfect polymer sampler of $\nu_v (\gamma)$. There exists a randomized algorithm to approximate $\log Z(\mathcal{C}, w)$ to error $\varepsilon$ with success probability $\geq 1-\delta $ in poly$(\vert V(G) \vert,   \varepsilon^{-1}, \log \delta^{-1})$ runtime  

        \begin{align}
            \mathcal{\tilde{O}}\left( C_w \cdot \text{diam}(\tilde{G}) \cdot \vert \tilde{V}(\tilde{G}) \vert^3\cdot \varepsilon^{-3} + \vert V(G) \vert \right) 
        \end{align}
 which is at most
\begin{align}
            \mathcal{\tilde{O}}\left( C_w \cdot \vert V(G) \vert^4\cdot \varepsilon^{-3} \right) .
        \end{align}
    \end{thm}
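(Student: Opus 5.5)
We reduce the estimation of $\log Z$ to the product estimator of Eq.~\eqref{eq_Tmestimator}. By Lem.~\ref{lem:algorithmic_truncation}, with $m=\mathcal{O}(\log(\Vt\varepsilon^{-1}))$ it suffices to approximate $T_m=A_m\,\mathbb{E}_{p}[\xi]$ to additive error $\varepsilon$. We do this by producing $\widehat{A_m}$ with relative error $\eta_A$ and $\widehat{\xi}$ with additive error $\eta_\xi$. Then
\begin{align*}
|\widehat{A_m}\widehat{\xi}-T_m|\le \eta_\xi A_m+\eta_A A_m|\mathbb{E}_p\xi|+\eta_A\eta_\xi A_m .
\end{align*}
Since $A_m|\mathbb{E}_p\xi|\le \sum|\varphi\prod w/m!|=\mathcal{O}(\Vt)$ by Lem.~\ref{lem:kotecky_preiss}, and $A_m=\Theta(\Vt)$ by Lem.~\ref{lemma_Aextensive}, we choose $\eta_\xi,\eta_A=\Theta(\varepsilon/\Vt)$. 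The plan has three ingredients.

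\textbf{(i) A rapidly mixing chain with stationary law $p$.} The states are clusters in $\mathcal{G}_{\text{alg}}$, together with the empty cluster as a hub. A move either removes a polymer whose removal keeps $H_\Gamma$ connected (for instance a leaf of a spanning tree), or adds a polymer drawn from the oracle $\nu_v$ at a vertex $v$ chosen uniformly from the support of $\Gamma$ (or from $\widetilde V$ when $\Gamma=\emptyset$). A Metropolis filter makes the chain reversible with respect to $p\propto\prod|w_\gamma|^{m_\gamma}$. Because the oracle already proposes with weight $|w_\gamma|$, the acceptance ratios only involve degree/multiplicity factors of size $\mathcal{O}(m)$.

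For the canonical paths, $\mathcal{P}_{\Gamma\Gamma'}$ removes the polymers of $\Gamma$ one at a time down to a single polymer touching $\widetilde G$, walks through the connected covering graph $\widetilde G$ to a polymer of $\Gamma'$, and then rebuilds $\Gamma'$. The local representatives $\gamma_v$ of Con.~\ref{ass::uniform_local_weights} serve as stepping stones for this walk. This gives $L=\mathcal{O}(m+\mathrm{diam}(\widetilde G))$.

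To bound the congestion $\bar\rho$, we encode each pair $(\Gamma,\Gamma')$ crossing an edge by the current state plus a complementary cluster of comparable size, in the style of Jerrum--Sinclair. The sum of encoded weights is then controlled by Lem.~\ref{lemma_Aextensive}: clusters attached at a vertex weigh $\le a(\gamma)$, and the extra factor $e^{d(\gamma)}=4^{\|\gamma\|}$ absorbs the combinatorial multiplicity of removal orders. This should give $\bar\rho=\mathcal{O}(\Vt\,\mathrm{poly}(m))$. Con.~\ref{ass::weight_regularity} gives $\log(1/p_{\min})=\mathcal{O}(m\log(\Vt/\varepsilon))$. Plugging into Eq.~\eqref{eq:mixing_time_bound} yields a mixing time $\tilde{\mathcal{O}}(\mathrm{diam}(\widetilde G)\,\Vt)$, with each step costing $C_w$ plus an $\mathcal{O}(1)$ expected Ursell evaluation (App.~\ref{app:ursell_calculation_cost}). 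I expect this congestion bound to be the main obstacle. The difficulty is arguing that the paths through the hub and through $\widetilde G$ do not concentrate flow, and I would first try to charge every flow to the weight $B_r\ge \wc\Vt$ of the stepping-stone polymers.

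\textbf{(ii) Estimating $\mathbb{E}_p[\xi]$.} We draw $n_s$ independent samples, each the endpoint of a fresh run of length $t_{\text{mix}}(\eta_\xi)$, so that the bias is at most $\eta_\xi\max|\xi|$. By Lem.~\ref{lemma_finitevariance}, $\mathrm{Var}\le\mathcal{O}(\Vt/A_m)=\mathcal{O}(1)$. Eq.~\eqref{eq:sample_complexity} then gives $n_s=\mathcal{O}(\Vt^2\varepsilon^{-2})$ at constant failure probability, and taking a median of $\mathcal{O}(\log\delta^{-1})$ repetitions gives success probability $1-\delta$.

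\textbf{(iii) Estimating $A_m$.} We use the telescoping product $A_m=A_{0}\prod_{j}A_{m_j}/A_{m_{j-1}}$ over $\mathcal{O}(m)$ size thresholds. Each ratio is an expectation of an indicator under the chain restricted to $\|\Gamma\|\le 2m_j$. Lem.~\ref{lemma_Aextensive} keeps each ratio bounded below by a constant, so standard annealing gives relative error $\eta_A$ with $\tilde{\mathcal{O}}(\eta_A^{-2})$ samples per level. $B_r$, or $A_{m'}$ at small $m'$, provides the exactly computable base case at cost $\mathcal{O}(|V(G)|)$; this is the additive term in the runtime.

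Multiplying the sample count $\tilde{\mathcal{O}}(\Vt^2\varepsilon^{-2})$ by the cost per sample $\tilde{\mathcal{O}}(C_w\,\mathrm{diam}(\widetilde G)\Vt)$ gives the first bound. The second follows from $\mathrm{diam}(\widetilde G),\Vt\le|V(G)|$. The exponent $\varepsilon^{-3}$ rather than $\varepsilon^{-2}$ indicates that the intended accounting folds one extra $\varepsilon^{-1}$ factor into the mixing or bias control, so I would budget for that there.
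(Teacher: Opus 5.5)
Your outer architecture matches the paper: truncation via Lem.~\ref{lem:algorithmic_truncation}, the product estimator $T_m=A_m\,\mathbb{E}_p[\xi]$ with both errors at scale $\varepsilon/\Vt$, Chebyshev with Lem.~\ref{lemma_finitevariance}, a peel/transport/rebuild canonical path using the $\gamma_v$ as stepping stones, and median amplification. Annealing over size thresholds instead of the deformation $w_\gamma\mathrm{e}^{-\rho\|\gamma\|}$ is a legitimate variant. In fact $\Pr_p[\|\Gamma\|=r]=B_r/A_m\ge\wc$, so a single ratio already suffices. The empty-cluster hub is harmless provided you discard $\emptyset$ outputs, since it has stationary mass $1/(A_m+1)$. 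The genuine gap is the congestion bound, i.e.\ the content of Thm.~\ref{thm_mixingtime_clusterexp}. You leave it as a guess, and the guess $\bar\rho=\mathcal{O}(\Vt\,\mathrm{poly}(m))$ is too optimistic for these paths. Your final $\varepsilon^{-2}$ and the unlocated ``extra $\varepsilon^{-1}$'' are symptoms of this.

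Take the peeling phase. An edge $(u,v)$ with $v=u\setminus\{\gamma\}$ carries, after $A_m$ cancels and $\sum_Y p(Y)=1$, the load $P(u,v)^{-1}\sum_{X\supseteq u}\prod_{\gamma'\in X\setminus u}|w_{\gamma'}|^{m_{\gamma'}(X)}$. The complement $X\setminus u$ is in general a disconnected family of sub-clusters hanging off $\mathrm{supp}(u)$. A Jerrum--Sinclair encoding therefore does not map it injectively into the state space, and its weight cannot be bounded by $1$; you must bound this attachment sum directly. The recursion of App.~\ref{app:A_convergence}, applied to the polymer $\gamma_u$ with $\mathrm{supp}(\gamma_u)=\mathrm{supp}(u)$, gives $\exp\bigl(\sum_{e\nsim\gamma_u}|w_e|\mathrm{e}^{a(e)}\bigr)\le\mathrm{e}^{a(\gamma_u)}$. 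This is genuinely exponential in $\|u\|$, because every vertex of $\mathrm{supp}(u)$ offers a constant amount of attachable weight. For example, a cluster $u$ of light polymers with $\|u\|\approx m$, decorated by heavier removable leaves that are all peeled first, already gives a load $\mathrm{e}^{\Theta(m)}$. That is polynomial in $\Vt\varepsilon^{-1}$, not $\mathrm{poly}(m)$. The factor $\mathrm{e}^{d}=4^{\|\cdot\|}$ only improves the constant in that exponent; it does not count removal orders away.

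With $\|\gamma_u\|\le 2m$ and $a=\|\cdot\|/2$ this factor is $\mathrm{e}^{m}=\Theta(\Vt\varepsilon^{-1})$. The transport phase contributes only $\tilde{\mathcal{O}}(A_m)$ via $|w_{\gamma_v}|\ge\wc$, as you anticipated. Hence $\bar\rho=\tilde{\mathcal{O}}(\Vt\varepsilon^{-1})$ and $t_{\mathrm{mix}}=\tilde{\mathcal{O}}(\Vt\,\mathrm{diam}(\widetilde G)\,\varepsilon^{-1})$. Multiplying by the $\tilde{\mathcal{O}}(\Vt^2\varepsilon^{-2})$ samples gives exactly the stated $\varepsilon^{-3}$; this is why the paper fixes $a(\gamma)=\|\gamma\|/2$ (Remark~\ref{rem_achoice}).

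A smaller fix in (ii): the bias is at most $2\varepsilon_{\mathrm{mix}}\max|\xi|$, and $|\varphi|$ can be as large as $(2m)!$. You therefore need $\varepsilon_{\mathrm{mix}}\le\eta_\xi/\max|\xi|$ rather than $\eta_\xi$. This costs only polylogarithmic factors, and the paper avoids it altogether by coupling with $\varepsilon_{\mathrm{mix}}=\delta_{\mathrm{mix}}/(n_{\mathrm{ann}}+n_\xi)$.
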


\begin{proof}(Thm. \ref{thm:sampling_cexp}) We divide the proof in two: We first assume that we have access to an exact sampler $S_{\text{exact}}$ that samples from $p(\Gamma)$ and analyze the runtime of estimating $\xi$ and $A_m$ with the desired accuracy. Later, we will substitute $S_{\text{exact}}$ by $S_{\text{MC}}$, the Cluster Markov chain presented in Sec. \ref{subsec::cluster_markov_chain} and analyze the full runtime of the algorithm. 

   To compute $\log Z$ up to additive error $\varepsilon$ it suffices to obtain a randomized estimator of the truncated cluster expansion $T_m$ to order $m= \log (2 \vert \tilde{V}(\tilde{G}) \vert / \varepsilon)$ (See Cor. \ref{cor:truncation_error})

    \begin{equation}
         \vert \log Z - \widehat{T_m} \vert \leq \vert \log Z - T_m \vert + \vert \widehat{T_m} - T_m \vert \leq \varepsilon/2 + \varepsilon/2.
    \end{equation}
    To do this task we use our estimator: we compute $\xi(\Gamma)$ and $A_m$ with the sampling oracle $S_{\text{exact}}$ such that

    \begin{align}
      \nonumber  \vert  \widehat{T_m} - T_m \vert =&  \left \vert \widehat{A_m} - A_m \right \vert\cdot \mathbb{E}_{p(\Gamma)}[\xi(\Gamma)] + A_m \cdot \left \vert \widehat{\xi}    - \mathbb{E}[\xi(\Gamma)] \right \vert \\
      \nonumber  & +\left \vert \widehat{A_m} - A_m \right \vert \cdot \left \vert \widehat{\xi}    - \mathbb{E}[\xi(\Gamma)] \right \vert \\
        \leq & \varepsilon_A \cdot  \mathbb{E}[\xi(\Gamma)] + \varepsilon_\xi \cdot A_m + \varepsilon_A \cdot \varepsilon_\xi.
    \end{align}
    so  by computing $A_m$ up to additive error $\varepsilon_A =\varepsilon A_m/8\vert \tilde{V}(\tilde{G})\vert $ and $\xi$ up to $\varepsilon/8A_m$ we provide an accurate enough estimation

    \begin{align}
        \vert  \widehat{T_m} - T_m \vert \leq&  \frac{\varepsilon\mathbb{E}[\xi]}{8\vert \tilde{V}(\tilde{G})\vert /A_m}+\frac{\varepsilon}{8} + \frac{\varepsilon^2}{64\cdot (\vert \tilde{V}(\tilde{G})\vert /A_m)\cdot A_m} \nonumber \\
        \leq& \varepsilon/2,
    \end{align}
    where we used Lem. \ref{lemma_Aextensive}.

\textit{Estimation of $\xi$ and $A_m$}
\begin{itemize}
    \item \textit{Estimation of $\xi$:} We make $n_\xi$ calls to $S_{\text{exact}}$, obtaining a sample of clusters $\{\Gamma_i\}_{i=1}^{n_\xi}$ and compute the following estimator of $\xi$

    \begin{align}
        \widehat{\xi}= \frac{1}{n_\xi} \sum_{i} \xi(\Gamma_i).
    \end{align}
    Because the observable $\xi(\Gamma)$ is complex-valued, we apply Lem. \ref{lemma:chebyshev} independently to its real and imaginary components. For any complex random variable we have $\operatorname{Var}(\operatorname{Re}\xi) \le \mathbb{E}[(\operatorname{Re}\xi)^2] \le \mathbb{E}[\vert\xi\vert^2]$, and similarly for $\operatorname{Im}\xi$. Thus, paying a factor of 2 in the total sample count $n_\xi$ we can safely formulate our sample complexity directly in terms of $\mathbb{E}[\vert\xi\vert^2]$. Therefore, to have probability of failure smaller than $\delta_\xi$
    \begin{align}
        \Pr( \vert \widehat{\xi(\Gamma)}   - \mathbb{E}[\xi (\Gamma)] \vert \geq \varepsilon_\xi = \varepsilon/8A_m ) \leq \frac{64 \mathbb{E}[\vert \xi \vert^2(\Gamma)] A_m^2}{\varepsilon^2 n_\xi},
     \end{align}

    we need $n_\xi \geq \mathcal{O}(\vert \tilde{V}(\tilde{G}) \vert^2  \varepsilon^{-2} \delta_\xi^{-1} )$ samples, where we use Lem. \ref{lemma_finitevariance}. 

    Note that within each of these samples we need to call the polymer sampler, with a $\mathcal{O}(C_w)$ cost, and to compute the Ursell function $\varphi$. This can be done in $\mathcal{O}(\vert \tilde{V}(\tilde{G}) \vert /A_m)\equiv \mathcal{O}(1)$ runtime by Lem. \ref{lemm::calculation_ursell} and under Con. \ref{ass::uniform_local_weights}, yielding a total runtime for estimating $\xi$ of $\mathcal{O}(C_w \vert \tilde{V}(\tilde{G}) \vert^2  \varepsilon^{-2} \delta_\xi^{-1} )$. 

    \item \textit{Annealing for $A_m$:} We employ the annealing schedule of Sec. \ref{subsec:eff_estimat_annealing} to estimate $A_m$ up to $\varepsilon_A = \varepsilon A_m/8 \vert \tilde{V}(\tilde{G}) \vert $. We need to obtain $n_{\text{ann}}\geq \mathcal{\tilde{O}}(\vert \tilde{V}(\tilde{G}) \vert^2 \varepsilon^{-2} \delta_A^{-1}  )$ samples, added to a calculation of the anchor $A_r$ of cost $\mathcal{O}(\vert V(G) \vert)$. For the probability of failure to be less than $\delta_A$, this procedure yields a total runtime of $\mathcal{O}\left(C_w \vert \tilde{V}(\tilde{G}) \vert^2 \varepsilon^{-2} \delta_A^{-1} + \vert V(G) \vert \right)$.
    
    \item \textit{Total Estimate:} We provide our estimator $\widehat{T}_m =\widehat{\xi} \cdot \widehat{A_m}$, yielding an $\varepsilon$-additive error for $\log Z$ in runtime
    \begin{align}
        \mathcal{O}\left(C_w \cdot  \vert \tilde{V}(\tilde{G}) \vert^2 \varepsilon^{-2} \left( \delta_A^{-1}+ \delta_{\xi}^{-1}\right)+ \vert V(G) \vert \right).
    \end{align}
    
\end{itemize}

\textit{Mixing and Bias Control:} 
Until now we had assumed that we have no bias and that we have access to perfect samples of $p(\Gamma)$ with $S_{\mathrm{exact}}$. We now replace the exact sampling oracle $S_{\mathrm{exact}}$ by $S_{\text{MC}}$, the Cluster Markov chain of Sec. \ref{subsec::cluster_markov_chain}. To relate its runtime with the one of the sampling oracle $S_{\mathrm{exact}}$ we perform a coupling argument  \cite{chen2021fastalgorithmslowtemperatures}. For each time one needs to call the sampling oracle there exists a randomized algorithm $S_{\text{MC}}$ that outputs an $\varepsilon_{\mathrm{mix}}$-approximate sample from the target distribution, so we choose

\begin{align}
    \varepsilon_{\mathrm{mix}} = \frac{\delta_{\text{mix}}}{ n_{\text{ann}}  +n_{\xi}},
\end{align}
as we need to run $n_{\text{ann}}$ and $ n_{\xi}$ samples for the annealing procedure and estimation of $\xi$ respectively.

By the definition of total variation distance, we may couple the outputs of $S_{\text{MC}}$ and $S_{\mathrm{exact}}$ so that, at each sampling step, the two samples coincide with probability at least $1 - \varepsilon_{\mathrm{mix}}$. We run the algorithm simultaneously using both $S_{\text{MC}}$ and $S_{\mathrm{exact}}$ under this coupling. Let $B_{\text{mix}}$ be the event that at least one of the $n_{\text{ann}} +n_{\xi}$ samples generated during the execution does not match its coupled exact counterpart. By a union bound the total probability of error due to imperfect mixing is 
\begin{align}
    \Pr(B_{\text{mix}}) \leq \left(n_{\text{ann}}   +n_{\xi} \right) \cdot \varepsilon_{\mathrm{mix}} = \delta_{\text{mix}}.
\end{align}

\textit{Error Analysis and Complexity:}
If our overall algorithm fails to output an $\varepsilon$-approximation, it must logically be true that either the Markov chain failed to couple, or the exact samples themselves produced a bad estimate. Applying the union bound directly to these isolated failure events yields
\begin{align}
\label{eq_proof_ProbSuccess}
    \Pr(\text{failure}) &\le \Pr(\text{failure using } S_{\text{exact}}) + \Pr(B_{\text{mix}}) \nonumber \\
    &\le (\delta_\xi + \delta_A) + \delta_{\text{mix}}.
\end{align}
Consequently, the total success probability of the algorithm is strictly bounded from below by taking the complement
\begin{align}
    \Pr(\text{Success}) \ge 1 - (\delta_\xi + \delta_A + \delta_{\text{mix}}).
\end{align}
By symmetrically distributing the error budget such that $\delta_{\text{mix}} = \delta_A = \delta_\xi \leq \delta/3$, which requires setting the total variation threshold to $\mathcal{O}(\varepsilon_{\text{mix}}) = \mathcal{O}( \vert \tilde{V}(\tilde{G}) \vert^{-2} \varepsilon^2 \delta)$, we guarantee a total success probability of at least $1 - \delta$. Considering, by Thm. \ref{thm_mixingtime_clusterexp}, the costs of the mixing time $t_{\text{ {mix}}}\leq \mathcal{\tilde{O}}(\vert \tilde{V}(\tilde{G}) \vert \text{diam}(\tilde{G}) \varepsilon^{-1}\log\varepsilon_{\text{ mix}}^{-1}) \equiv  \mathcal{\tilde{O}}( \vert \tilde{V}(\tilde{G}) \vert \varepsilon^{-1} \text{diam}(\tilde{G})  )$, and the costs of sampling $\xi$ and annealing, the total runtime is

\begin{align}
            \mathcal{\tilde{O}}\left( C_w \cdot \vert \tilde{V}(\tilde{G}) \vert^3 \cdot \text{diam}(\tilde{G}) \cdot \varepsilon^{-3} \cdot \delta^{-1} + \vert V(G) \vert \right) ,
\end{align}
yielding an estimate of $\log Z(\mathcal{C}, w)$ with additive error $\varepsilon$ with probability of success $\geq 1-\delta$.

To exponentially improve this polynomial dependence on $\delta^{-1}$, we can instead run the base algorithm with a fixed constant confidence (e.g., bounding its failure probability by $1/3$, which removes the $\delta^{-1}$ factor from the runtime above). Running this constant-confidence algorithm $K = \mathcal{O}(\log \delta^{-1})$ times with independent randomness and returning the median of the outputs boosts the success probability back to any target $1 - \delta$. The argument is standard and we give it in App.~\ref{app:amplification}. This yields a final total runtime of
\begin{align}
    \mathcal{\tilde{O}}\left(  C_w \cdot \vert \tilde{V}(\tilde{G}) \vert^3 \cdot \text{diam}(\tilde{G}) \cdot \varepsilon^{-3}  \cdot \log \delta^{-1} + \vert V(G) \vert \right),
\end{align}
concluding the proof.

\end{proof}

\subsection{An efficient Cluster Markov chain} \label{subsec::cluster_markov_chain}

The main requirement for the algorithm is to obtain samples from the target distribution $p(\Gamma)$. To achieve this, we construct the \textit{Cluster Markov chain} $S_{\text{MC}}$ with transition matrix $P_{\text{MC}}$, which explores the configuration space of connected clusters $\mathcal{G}_{\text{alg}}$.

\textit{Cluster dynamics:} Given a current cluster configuration $\Gamma_t$, we generate the next state $\Gamma_{t+1}$ by mutually exclusively applying one of the following transition rules:
\begin{itemize}
    \item \textit{1. Addition:} With probability $p_{\text{add}}=1/2$, we sample a vertex $v \in \text{supp}(\Gamma_t)$ uniformly, i.e.\ with probability $1/\vert\text{supp}(\Gamma_t)\vert$ and obtain a connected polymer using our local sampler $\nu_v(\gamma)$. We propose the new cluster $\Gamma_{t+1} = \Gamma_t \cup \{\gamma\}$ and accept it with the filter $f_{\text{add}} = m_\gamma(\Gamma_{t+1})/n_\gamma(\Gamma_{t+1})$, where $n_\gamma(\Gamma)$ is the number of polymers in a cluster $\Gamma$.
    
    \item \textit{2. Deletion:} With probability $p_{\text{del}}=1/2$, we sample a polymer $\gamma \in \Gamma_t$ with probability $\propto m_\gamma(\Gamma_t)/n_\gamma(\Gamma_t)$. We propose its removal, $\Gamma_{t+1} = \Gamma_t \setminus \{\gamma\}$, provided the resulting cluster remains connected. This move is accepted with the filter $f_{\text{del}} =
b_{\Gamma_{t+1}}(\gamma)/\vert \text{supp}(\Gamma_{t+1})\vert$, where
$b_\Gamma(\gamma)$ is the root count defined in Eq. \eqref{eq:rootcount} below. 
\end{itemize}
Crucially, the transition rules inherently confine the Markov chain to the algorithmic polymer model $\mathcal{P}_{\text{alg}}$ introduced in Sec. \ref{subsection_algorithmicpolymermodel}: a proposed polymer is rejected whenever its size exceeds $\norm{\gamma} > m$, whenever its weight falls below the threshold $\vert w_\gamma \vert \leq \kappa(\gamma)$, or whenever the resulting cluster would exceed $\norm{\Gamma_{t+1}} > 2m$. The state space is
thus rigorously truncated to $\mathcal{G}_{\text{alg}}$, which by Lem. \ref{lem:algorithmic_truncation} still gives an $\varepsilon$-additive approximation to $\log Z(\mathcal{C}, w)$.

A given polymer $\gamma$ may be proposed from any vertex of
$\text{supp}(\Gamma)$ that is incompatible with it. We therefore denote by
\begin{align}\label{eq:rootcount}
    b_\Gamma(\gamma) = \#\left\{ v \in \text{supp}(\Gamma) \;:\; \{v\} \nsim \gamma \right\}
\end{align}
the number of admissible roots, so that
$f_{\text{del}} \in [0,1]$. Note that $b_\Gamma(\gamma)$ counts incompatibility rather than overlap: two polymers whose supports are disjoint but joined by an edge of $G$ are incompatible, and the local sampler $\nu_v$ returns precisely
the polymers incompatible with $v$. 

Let $\Gamma' = \Gamma \cup \{\gamma\} \in \mathcal{G}_{\text{alg}}$. Then, we have
\begin{align}
    \frac{P_{MC}(\Gamma \rightarrow \Gamma')}{P_{MC}(\Gamma' \rightarrow \Gamma)}
    &= \frac{\frac{1}{2}\cdot \frac{b_{\Gamma}(\gamma)}{\vert \text{supp}(\Gamma)\vert}
       \cdot \vert w_\gamma \vert \cdot \frac{m_\gamma(\Gamma')}{n_\gamma(\Gamma')}}
      {\frac{1}{2} \cdot \frac{m_\gamma(\Gamma')}{n_\gamma(\Gamma')}
       \cdot \frac{b_{\Gamma}(\gamma)}{\vert \text{supp}(\Gamma)\vert}} \\
    &= \frac{p(\Gamma \cup \{\gamma\})}{p(\Gamma)} = \vert  w_\gamma \vert,
\end{align}
so this Markov chain is detailed-balanced and has $p(\Gamma)$ as the target
distribution. Furthermore, the chain is ergodic. It is intrinsically aperiodic because rejected proposals (such as attempting to delete a polymer that would disconnect the cluster) create self-loops, meaning $P_{\text{MC}}(\Gamma \rightarrow \Gamma) > 0$. It is also irreducible, as any valid cluster $\Gamma \in \mathcal{G}_C$ can transition to any other valid cluster $\Gamma' \in \mathcal{G}_C$ in a finite number of steps (e.g., see the canonical path designed in App. \ref{app_rapidmixingproof} which connects any pair of clusters with steps of the Cluster Markov chain) \cite{levin2017markovmixing}. Note that the double truncation of Def. \ref{def::algorithmic_polymer_model} is what makes
this possible: with clusters capped at $m$, a single polymer of size $m$ could
neither grow nor shrink, since the empty cluster is excluded from
$\mathcal{G}_C$.

By constructing $\Gamma_{t+1}$ directly from the local boundaries of $\Gamma_t$, we avoid an unnecessary and inefficient exhaustive exploration of the host graph $G=(V,E)$. This approach contrasts with most combinatorial sampling literature—such as generating independent sets or colorings via Glauber dynamics \cite{levin2017markovmixing,chen2021fastalgorithmslowtemperatures,blanca2023fast}—where transitions are defined by local updates at vertices chosen entirely independently of the current state. In our setting, transitions are inherently \textit{state-dependent} to strictly enforce the connectivity of the cluster. As previously motivated, this severe structural dependence is precisely what renders standard dynamic coupling techniques mathematically intractable \cite{blanca2023fast, chen2021fastalgorithmslowtemperatures, pathcoupling_BubleyDyer1997PathCoupling}, making the geometric routing perspective of canonical paths the natural framework to rigorously analyze the convergence of $S_{\text{MC}}$.

\begin{figure*}[t]
            \centering
            \includegraphics[width=\textwidth]{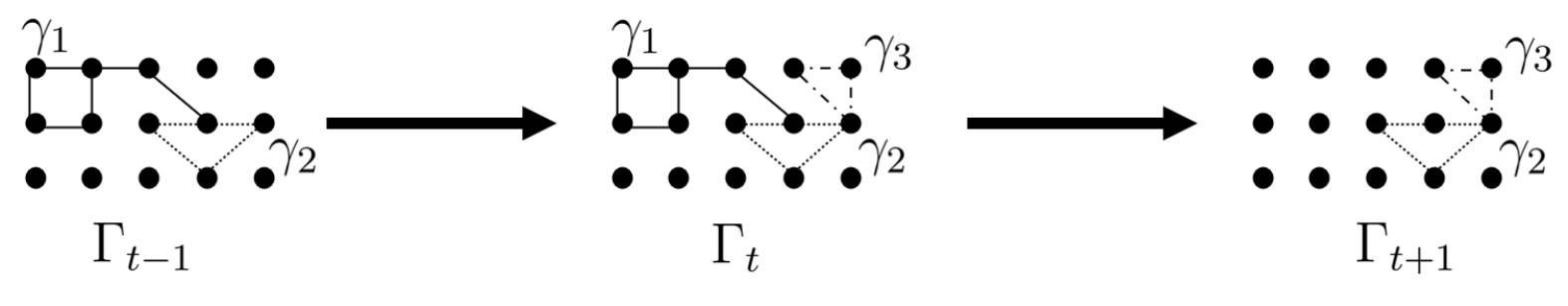}
            \caption{Illustration of the Cluster Markov chain $P_{MC}$. Here the clusters are depicted in the lattice $\Lambda$ rather that in the graph $G(V,E)$ as in Fig. \ref{fig:parte_a}. }
            \label{fig:ilustraion_MC}
\end{figure*}

With this, we can state the main result of this section, which guarantees that we can obtain cluster samples efficiently.
\begin{thm}
\label{thm_mixingtime_clusterexp}
Suppose that a polymer model $(\mathcal{C}, w, \sim)$ satisfies Con. \ref{con:kotecky_preiss_criteria} and Con. \ref{ass::uniform_local_weights}. Then the Cluster Markov chain $P_{\text{MC}}$ mixes in $\text{\emph{poly}}(\vert \tilde{V}(\tilde{G}) \vert, \varepsilon^{-1}, \log \varepsilon_{\text{mix}}^{-1} )$ runtime. More concretely 

\begin{align}
    t_{\text{mix}}(\varepsilon_{\text{mix}}) \leq \mathcal{\tilde{O}}( \vert \tilde{V}(\tilde{G}) \vert \varepsilon^{-1} \text{\emph{diam}}(\tilde{G})  ).
\end{align}

\end{thm}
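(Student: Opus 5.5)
We plan to bound the mixing time through the canonical-path inequality \eqref{eq:mixing_time_bound}, $t_{\text{mix}}\le \bar\rho L\log(1/(p_{\min}\varepsilon_{\text{mix}}))$. The three ingredients are handled separately: a bound on the path length $L$, a bound on the unweighted congestion $\bar\rho$, and a bound on $\log(1/p_{\min})$. The last is the easiest. Every state $\Gamma\in\mathcal{G}_{\text{alg}}$ has $\norm{\Gamma}\le 2m$. Each constituent weight exceeds $\kappa(\gamma)$, which by Con.~\ref{ass::weight_regularity} is at least $(\varepsilon^{p}/\Vt^{k})^{\norm{\gamma}}\varepsilon/(2\Vt)$. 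Together with $A_m=\Theta(\Vt)$ from Lemma~\ref{lemma_Aextensive}, this gives $\log(1/p_{\min})=\mathcal{O}(m\log(\Vt/\varepsilon))=\mathcal{O}(\log^2(\Vt/\varepsilon))$. This is polylogarithmic and absorbed into the $\tilde{\mathcal{O}}$.

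\emph{Canonical paths.} To route between $X$ and $W$, we pass through a hub made of local polymers. Fix the representatives $\gamma_v$ of Con.~\ref{ass::uniform_local_weights} along a shortest path in the connected covering graph $\tilde G$ joining a vertex incompatible with $X$ to a vertex incompatible with $W$. Such vertices exist because every polymer is incompatible with some vertex of $\tilde V$.

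The path $\mathcal{P}_{XW}$ runs in four stages.
\begin{enumerate}
\item Starting from $X$, add the representatives $\gamma_v$ along the $\tilde G$-path, one by one. Each addition stays connected since consecutive vertices of $\tilde G$ are adjacent, so consecutive $\gamma_v$ are incompatible.
\item Delete the polymers of $X$ in reverse order of a fixed spanning-tree (leaf-first) ordering, keeping the cluster connected through the chain of $\gamma_v$'s.
\item Add the polymers of $W$ in a root-first ordering.
\item Remove the hub polymers.
\end{enumerate}
Since $|X|,|W|\le 2m/r$, we get $L=\mathcal{O}(\operatorname{diam}(\tilde G)+m)$.

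There is a subtlety: intermediate states must respect $\norm{\Gamma}\le 2m$. We will handle it by performing the hub stages only over a chain whose length is bounded, i.e.\ by interleaving deletions of $X$ with hub additions. Alternatively, we may need to note that the size cap is what forces the $\operatorname{diam}(\tilde G)$ factor rather than a constant.

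\emph{Congestion (main obstacle).} For a transition $e=(\Gamma,\Gamma\cup\{\gamma\})$ we use the standard encoding argument. Given $e$, the endpoints $(X,W)$ of any path through $e$ are determined by an injective-up-to-polynomial-factor map to a complementary state $\eta_e(X,W)$: roughly the part of $W$ already added together with the part of $X$ not yet deleted. The goal is to show
\[
\pi(X)\pi(W)\le \mathrm{poly}\cdot \pi(\Gamma)\,\pi(\eta)\,w_c^{-\mathcal{O}(\operatorname{diam})}.
\]
The hub weights threaten an exponential factor $w_c^{-\operatorname{diam}}$. We expect this to be the hardest step. To control it, we will use the bound $|w_{\gamma_v}|\ge w_c$ and instead bound the sum over hub positions via Lemma~\ref{lemma_Aextensive}, which sums incompatible clusters to at most $a(\gamma^*)$. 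The goal is for the number of compatible $(X,W)$ pairs, weighted, to sum to $\mathcal{O}(\Vt)$ rather than an exponential. Dividing by $P(\Gamma,\Gamma')\ge \frac12\cdot\frac{1}{|\operatorname{supp}\Gamma|}\cdot|w_\gamma|\cdot f_{\text{add}}$, and using $|\operatorname{supp}\Gamma|\le 2m$, costs a further $\mathcal{O}(m)$ factor. The $\varepsilon^{-1}$ in the final bound arises from the threshold $\kappa\propto\varepsilon/\Vt$, entering when a transition involves a small-weight polymer.

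Assembling the pieces, we expect $\bar\rho=\tilde{\mathcal{O}}(\Vt\varepsilon^{-1})$ and $L=\tilde{\mathcal{O}}(\operatorname{diam}(\tilde G))$. With the logarithmic $p_{\min}$ term, this gives $t_{\text{mix}}\le\tilde{\mathcal{O}}(\Vt\,\varepsilon^{-1}\operatorname{diam}(\tilde G))$.
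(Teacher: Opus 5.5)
Your outer frame agrees with the paper: canonical paths, $L=\mathcal{O}(\operatorname{diam}(\tilde G)+m)$, and a polylogarithmic $\log(1/p_{\min})$ from Con.~\ref{ass::weight_regularity}. However, the two load-bearing steps fail as proposed.

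First, your hub path leaves the state space. Every state must lie in $\mathcal{G}_{\text{alg}}$, i.e.\ satisfy $\norm{\Gamma}\le 2m=\mathcal{O}(\log(\Vt\varepsilon^{-1}))$. A chain of representatives $\gamma_v$ along a $\tilde G$-geodesic has total size about $r\operatorname{diam}(\tilde G)$, which is polynomial (e.g.\ $N^{1/D}$ on a lattice). Interleaving deletions of $X$ does not help. At the moment $X$ hands over to $W$, the entire hub must be present to keep the cluster connected, and the hub alone already violates the cap. The missing idea is the paper's inchworm transport. First peel $X$ down to a single polymer $r_X$ by repeatedly deleting a non-cut vertex of $H_X$. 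Then move a window of at most two polymers along the geodesic, adding the representative at the next vertex and deleting the previous one. Finally grow $Y$ from $r_Y$ by reversing its peeling. This removes, rather than controls, your $\wc^{-\mathcal{O}(\operatorname{diam})}$ threat. Transport states contain at most two polymers, which away from $r_X,r_Y$ are representatives of weight at least $\wc$. So $p(u)\ge\wc^2/A_m$, and the trivial bound $\sum_{X,Y}p(X)p(Y)\le 1$ already gives congestion $\tilde{\mathcal{O}}(\Vt)$ there.

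Second, your congestion estimate is a statement of intent, and the one concrete step you give would lose too much. Dividing by $P(\Gamma,\Gamma\cup\{\gamma\})$ together with its factor $|w_\gamma|$ costs $1/\kappa(\gamma)$. Con.~\ref{ass::weight_regularity} only bounds this by $2\Vt\varepsilon^{-1}(\Vt^{k}\varepsilon^{-p})^{\norm{\gamma}}$, which is quasi-polynomial when $\norm{\gamma}\sim m$. So it cannot be the source of a clean $\varepsilon^{-1}$. The paper never pays this factor. Addition edges are evaluated in the deletion direction via detailed balance, $p(u)P(u,v)=p(v)P(v,u)$, and deletion probabilities carry no weight factor, only an inverse power of $m$.

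The actual source of $\Vt\varepsilon^{-1}$ is the peeling-phase estimate of Lemma~\ref{lemma_congestionbound}. For an edge leaving $u\subseteq X$, one bounds $\sum_{X\supseteq u}p(X)/p(u)$ by decomposing $X\setminus u$ into sub-clusters attached to a polymer $\gamma_u$ with $\supp(\gamma_u)=\supp(u)$. This works in the same way as the bound on $G_n$ in the proof of Lemma~\ref{lemma_Aextensive}, and gives at most $\mathrm{e}^{a(\gamma_u)}$. Since $\norm{\gamma_u}\le 2m$ and $a(\gamma)=\norm{\gamma}/2$ (Remark~\ref{rem_achoice}), this is $\mathrm{e}^{a(\gamma_u)}\le\mathrm{e}^{m}=\mathcal{O}(\Vt\varepsilon^{-1})$. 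Without this nested-cluster bound and the reversibility switch, your argument does not yet bound $\bar\rho$.
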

\begin{proof}
We provide a sketch of the proof here; the complete mathematical analysis is deferred to App. \ref{app_rapidmixingproof}. The core of the argument relies on designing a unique, deterministic sequence of valid transitions—a canonical path $\mathcal{P}_{XY}$—connecting any initial cluster $X$ to any target cluster $Y$. Crucially, since our Cluster Markov chain strictly operates via single-polymer additions or deletions, this geometric routing must be constructed incrementally. We divide $\mathcal{P}_{XY}$ into three distinct phases:

\begin{itemize}
    \item \textit{Phase 1: Deconstruction}. We systematically reduce the initial cluster $X$, polymer by polymer, according to a deterministic ordering until it is reduced to a single, uniquely defined root polymer $\gamma_X \in X$. Due to the structural truncation bound $\vert X\vert \le \mathcal{O}(\log(\vert \tilde{V} \vert \varepsilon^{-1}))$, this phase requires at most $\mathcal{O}(\log(\vert \tilde{V} \vert \varepsilon^{-1}))$ steps.
    
    \item \textit{Phase 2: Translation (The Ladder).} From the root $\gamma_X$, we construct a unique sequence of local polymer additions and deletions—a geometric ``ladder''—that transports the state across the graph to the designated root polymer $\gamma_Y \in Y$ of the target cluster. Because this propagation is strictly bounded by the underlying topology of the host graph $G=(V,E)$, this translation requires at most $\mathcal{O}(\text{diam}(\tilde{G}))$ steps for any pair of roots.
    
    \item \textit{Phase 3: Construction.} Finally, we incrementally build the target cluster $Y$ from its root $\gamma_Y$ by sequentially adding polymers in a unique deterministic order, mirroring the reverse of the deconstruction process.
\end{itemize}

The maximum length of any canonical path is thereby robustly bounded as $L \le \mathcal{O}(\text{diam}(\tilde{G}) + \log(\vert  \tilde{V} \vert \varepsilon^{-1}))$. To successfully bound the total congestion $\bar{\rho}$ we employ   Con. \ref{con:kotecky_preiss_criteria} and Con. \ref{ass::uniform_local_weights}, yielding by Eq. \eqref{eq:mixing_time_bound}

\begin{align}
    t_{\text{mix}}(\varepsilon_{\text{mix}}) \leq \mathcal{\tilde{O}}( \vert \tilde{V}(\tilde{G}) \vert \varepsilon^{-1} \text{diam}(\tilde{G})  )
\end{align}

\end{proof}

    \subsection{Efficient estimation of $A_m$ via annealing}\label{subsec:eff_estimat_annealing}

    With a protocol to sample from $p(\Gamma)$ established, evaluating the truncated cluster expansion $T_m$ requires computing the normalizing constant $A_m$ (See Eq.~\eqref{eq_Tmestimator}). We estimate this quantity via an annealing procedure. By introducing a continuous parameter $\rho \ge 0$, we define a family of deformed polymer weights
\begin{align}
    w_{\gamma}(\rho) &= w_{\gamma} \mathrm{e}^{-\rho \norm{\gamma}}
\end{align}
for all $\gamma \in \mathcal{C}$. This defines a parameterized polymer model $(\mathcal{C}, w(\rho))$ that strictly recovers the original model at $\rho = 0$. Because the deformed weights are monotonically decreasing with respect to $\rho$, any model satisfying Con. \ref{con:kotecky_preiss_criteria} at $\rho=0$ automatically satisfies it for all $\rho > 0$. 
Consequently, this weight deformation induces a family of $\rho$-dependent target distributions, denoted as $p_{\rho}(\Gamma)$, governed by the corresponding parameterized normalizing constant
\begin{align}
    A_m(\rho) &= \sum_{(\Gamma, m_\gamma) \in \mathcal{G}_{\text{alg}} }\prod_{\gamma \in \Gamma}\left\vert w_\gamma^{m_\gamma(\Gamma)}(\rho) \right\vert.
\end{align}
In standard annealing procedures, we establish a reference point—often called an \emph{anchor}—for our target function. This anchor serves as a known baseline, allowing us to use randomized algorithms (such as MCMC) to estimate the desired function relative to this reference. In our case we rewrite $A_m$ as

    \begin{align}
        A_m&= \frac{A_m( 0)}{A_m(1)} \cdot  \frac{A_m(1)}{A_m(2)} \dots  \frac{A_m(\ell-1)}{A_m(\ell)} \cdot A_m (\ell) \\&= \mathbb{E}[W] \cdot A_m (\ell)
    \end{align}
Now to obtain a proper estimator $\widehat{A_m}$ of $A_m$ we divide the problem into first calculating the anchor, $A_m(\ell)$, and then finding a random variable $W(\Gamma)$ such that $\mathbb{E}[W]$ reproduces the value of the telescopic product, and an efficient way of getting an estimation $\widehat{W}$.

If Con. \ref{con:kotecky_preiss_criteria} holds, Lem. \ref{lemma_Aextensive} reduces in this case to
\begin{equation}
  \left|A_m(\rho)-A_{m'}(\rho)\right|\;\le\;\Vt\,\mathrm{e}^{-(m'+1)\rho} .
  \label{eq:Lem16rho}
\end{equation}
By As. \ref{ass::uniform_local_weights} no cluster of non-vanishing weight has size below $r$, so the
lowest-order truncation is the $r$-th one and it is exactly
\begin{equation}
  A_r(\rho)\;=\;\mathrm{e}^{-r\rho}B_r .
  \label{eq:Aranchor}
\end{equation}
Taking $m'=r$ in Eq. \eqref{eq:Lem16rho} we may therefore approximate $A_m(\ell)$ by
its lowest-order term, with
\begin{equation}
  \left|A_m(\ell)-A_r(\ell)\right|\;\le\;\Vt\,\mathrm{e}^{-(r+1)\ell}
  \;\le\;\varepsilon_{\mathrm{anchor}}
  \label{eq:anchorerr}
\end{equation}
for any schedule length
\begin{equation}
  \ell\;\ge\;\frac{1}{r+1}\,
  \log\!\left(\Vt\,\varepsilon_{\mathrm{anchor}}^{-1}\right).
  \label{eq:ellchoice}
\end{equation}
By Eq. \eqref{eq:Aranchor} the anchor is the single scalar $B_r$ rescaled by a known
factor, so any estimate of $B_r$ transfers to $A_r(\ell)$ with the same relative
error. For finite degree graphs it can be calculated deterministically in $\mathcal{O}( \vert V(G)\vert)$. In the later case of long-range quantum models $\vert V(G)\vert$ grows $\propto N^k$, where $N$ is the system size and $k$ the maximum support of any $h_Z$ in the physical lattice. This step is a potential overhead that we will avoid later with probabilistic techniques (See Sec. \ref{subsec_eff_alg_lrpartfunctions}).

To estimate the ratios $\frac{A_m(i)}{A_m(i+1)}$, we define a sequence of independent random variables $W_i$ and let $W= \prod_i W_i$. Let $\Gamma$ be a cluster sampled from the cluster distribution $p_i(\Gamma)$ as done in Sec. \ref{subsec::cluster_markov_chain}. We define
\begin{equation}
    W_i = \prod_{\gamma \in \Gamma}  \left \vert \frac{w_{\gamma}(i)}{w_{\gamma}(i+1)} \right \vert .
\end{equation}
Note at this point that we need samples of the modified distribution $p_\rho (\Gamma)$ rather than of $p(\Gamma)$. Given our chain we can obtain them with no concerning overheads (See Lem. \ref{lem_app_AnnealingFilter}).

\begin{lemma}\label{lemm:Wi_expectedvalues_v2}
    For $0 \le i \le \ell - 1$, if Con. \ref{con:kotecky_preiss_criteria} is
    satisfied,
    \begin{align}
        \mathbb{E}_{p_{i+1}(\Gamma)}[W_i] &= \frac{A_m(i)}{A_m(i+1)}, \nonumber \\
        \mathbb{E}_{p_{i+1}(\Gamma)}[W_i^2] &= \frac{A_m(i-1)}{A_m(i+1)} .
    \end{align}
    Consequently,
    \begin{align}
        \mathbb{E}[W] = \frac{A_m(0)}{A_m(\ell)}
        \quad \text{and} \quad
        \mathbb{E}[W^2] = \frac{A_m(-1)A_m(0)}{A_m(\ell-1)A_m(\ell)} .
    \end{align}
\end{lemma}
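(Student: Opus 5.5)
The plan is a direct computation, using that the deformation $w_\gamma(\rho)=w_\gamma\mathrm{e}^{-\rho\norm{\gamma}}$ acts on a whole cluster through its total size alone. First, counting multiplicities in the product defining $W_i$, I would note that
\begin{align*}
W_i(\Gamma)=\prod_{\gamma\in\Gamma}\left(\mathrm{e}^{\norm{\gamma}}\right)^{m_\gamma(\Gamma)}=\mathrm{e}^{\norm{\Gamma}},
\end{align*}
independently of $i$. Likewise, I would write the deformed cluster weight as
\begin{align*}
\prod_{\gamma\in\Gamma}\left\vert w_\gamma(\rho)^{m_\gamma(\Gamma)}\right\vert=\mathrm{e}^{-\rho\norm{\Gamma}}\prod_{\gamma\in\Gamma}\left\vert w_\gamma^{m_\gamma(\Gamma)}\right\vert .
\end{align*}
Consequently $p_{i+1}(\Gamma)=\mathrm{e}^{-(i+1)\norm{\Gamma}}\prod_{\gamma}\vert w_\gamma^{m_\gamma}\vert/A_m(i+1)$ on $\mathcal{G}_{\text{alg}}$.

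Second, I would compute the moments by reweighting. For any integer $q\ge 1$,
\begin{align*}
\mathbb{E}_{p_{i+1}}[W_i^q]=\frac{1}{A_m(i+1)}\sum_{\Gamma\in\mathcal{G}_{\text{alg}}}\mathrm{e}^{-(i+1-q)\norm{\Gamma}}\prod_{\gamma\in\Gamma}\left\vert w_\gamma^{m_\gamma(\Gamma)}\right\vert=\frac{A_m(i+1-q)}{A_m(i+1)} .
\end{align*}
Taking $q=1$ gives the first identity and $q=2$ gives the second. The case $i=0$ requires $A_m(-1)$, that is, the model at $\rho=-1$ with weights inflated to $\vert w_\gamma\vert\mathrm{e}^{\norm{\gamma}}$.

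Third, the $W_i$ are built from independent samples, one cluster drawn from each $p_{i+1}$. The expectation of their product therefore factorizes, and $\mathbb{E}[W]=\prod_{i=0}^{\ell-1}A_m(i)/A_m(i+1)=A_m(0)/A_m(\ell)$ telescopes. For the second moment I would telescope with step two:
\begin{align*}
\prod_{i=0}^{\ell-1}\frac{A_m(i-1)}{A_m(i+1)}=\frac{A_m(-1)A_m(0)}{A_m(\ell-1)A_m(\ell)} .
\end{align*}

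The only step needing care is that $A_m(-1)$ is finite and well controlled, since this is where Con. \ref{con:kotecky_preiss_criteria} enters. Finiteness is immediate because $\mathcal{G}_{\text{alg}}$ is a finite set, being truncated to $\norm{\Gamma}\le 2m$. For a quantitative bound, I would note that $\mathrm{e}^{\norm{\gamma}}\le\mathrm{e}^{d(\gamma)}$ since $d(\gamma)=2\log 2\,\norm{\gamma}\ge\norm{\gamma}$. The factor $\mathrm{e}^{\norm{\Gamma}}$ is thus absorbed into $\mathrm{e}^{\sum_\gamma (a(\gamma)+d(\gamma))}$, and the first bound of Lem. \ref{lemma_Aextensive}, summed over covering vertices as in Cor. \ref{cor:truncation_error}, gives $A_m(-1)=\mathcal{O}(\Vt)$. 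Together with Lem. \ref{lemma_Aextensive}, which controls the denominators, this makes the variance ratio $\mathbb{E}[W^2]/\mathbb{E}[W]^2$ bounded, which is what the annealing estimator ultimately needs.
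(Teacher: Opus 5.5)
Your proposal is correct and follows the paper's own argument essentially step by step: you identify $W_i=\mathrm{e}^{\norm{\Gamma}}$, reweight $p_{i+1}$ to obtain $A_m(i)/A_m(i+1)$ and $A_m(i-1)/A_m(i+1)$, and then use independence of the $W_i$ and telescope the products. Your general $q$-th moment formula and your bound $A_m(-1)=\mathcal{O}(\Vt)$ are harmless additions; the paper makes the same $A_m(-1)$ bound in its remark after Eq.~\eqref{eq:EW2}.
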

\noindent The proof can be found in App. \ref{app_annealingproofs}.

Therefore, using Eqs. \eqref{eq:Brchain}, \eqref{eq:Aranchor} and
\eqref{eq:ellchoice},
\begin{equation}
  \mathbb E[W]=\frac{A_m(0)}{A_m(\ell)}
  \le\frac{1}{\wc}\,\mathrm{e}^{r\ell}
  \le\frac{1}{\wc}
  \left(\frac{\Vt}{\varepsilon_{\mathrm{anchor}}}\right)^{\!\frac{r}{r+1}} ,
  \label{eq:EW}
\end{equation}
while the second moment obeys the relative bound
\begin{equation}
  \frac{\mathbb E[W^2]}{\mathbb E[W]^2}
  =\frac{A_m(-1)}{A_m(0)}\cdot\frac{A_m(\ell)}{A_m(\ell-1)}
  \;\le\;\frac{1}{\wc}\;=\;\mathcal O(1),
  \label{eq:EW2}
\end{equation}
where $A_m(-1)\leq\sum_{\Gamma\in\GC}\prod_\gamma|w_\gamma^{m_\gamma(\Gamma)}|
\mathrm{e}^{\|\Gamma\|}\le \Vt$ follows from Lem. \ref{lemma_Aextensive}.

\begin{lemma}\label{lemma_estimationA}

Let $(\mathcal{C}(G), w)$ be a polymer model defined on a graph $G = (V, E)$ satisfying Con. \ref{con:kotecky_preiss_criteria}. For any target accuracy $\varepsilon_A > 0$ and failure probability $\delta_A > 0$, there exists an annealing algorithm that outputs an estimator $\widehat{A_m}$ for the normalizing constant $A_m$ such that

\begin{align}
    \Pr(\vert A_m - \widehat{A_m} \vert \ge \varepsilon_A) \le \delta_A.
\end{align}
The algorithm evaluates this estimator using an annealing schedule of length $\ell = \mathcal{O}(\log(\vert \tilde{V}(\tilde{G}) \vert \varepsilon_A^{-1}))$. The total computational complexity is bounded by a deterministic anchor evaluation running in $\mathcal{O}(\vert V(G) \vert)$ time, plus the generation of $ \ell \cdot n_A \leq \mathcal{\tilde{O}}\left( \vert \tilde{V}(\tilde{G}) \vert^2\varepsilon_A^{-2} \delta_A^{-1} \right)$ samples from the Cluster Markov chain.
    
\end{lemma}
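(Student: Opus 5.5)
The estimator is the product $\widehat{A_m}=\widehat{W}\cdot A_r(\ell)$. Here $A_r(\ell)=\mathrm{e}^{-r\ell}B_r$ is the deterministic anchor of Eq. \eqref{eq:Aranchor}, and $\widehat{W}=\prod_{i=0}^{\ell-1}\widehat{W_i}$. Each $\widehat{W_i}$ is the empirical mean of $n_A$ independent copies of $W_i$, with clusters drawn from $p_{i+1}(\Gamma)$. We assume exact samples and transfer to the Markov chain via the coupling argument already used in Thm. \ref{thm:sampling_cexp}.

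\emph{Error budget.} We split the total error as
\[
|A_m-\widehat{A_m}|\le \mathbb{E}[W]\,|A_m(\ell)-A_r(\ell)|+A_r(\ell)\,|\widehat{W}-\mathbb{E}[W]|.
\]
For the first term, choose $\varepsilon_{\mathrm{anchor}}$ in Eq. \eqref{eq:anchorerr}. By Eq. \eqref{eq:EW}, $\mathbb{E}[W]\varepsilon_{\mathrm{anchor}}\le \wc^{-1}\Vt^{r/(r+1)}\varepsilon_{\mathrm{anchor}}^{1/(r+1)}$. Setting $\varepsilon_{\mathrm{anchor}}=\mathrm{poly}(\varepsilon_A/\Vt)$ makes this at most $\varepsilon_A/2$. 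Eq. \eqref{eq:ellchoice} then gives $\ell=\mathcal{O}(\log(\Vt\varepsilon_A^{-1}))$, as claimed.

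For the second term, note that $A_r(\ell)\mathbb{E}[W]\approx A_m$. It therefore suffices to estimate $W$ to relative error $\eta=\varepsilon_A/(2A_m)$, up to constants. By Lem. \ref{lemma_Aextensive}, $A_m=\Theta(\Vt)$, so $\eta=\Theta(\varepsilon_A/\Vt)$. The anchor itself is computed exactly by summing $|w_\gamma|$ over size-$r$ polymers, which takes $\mathcal{O}(|V(G)|)$ time.

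\emph{Variance of the product.} This is the main step. By independence of the $W_i$ and Lem. \ref{lemm:Wi_expectedvalues_v2},
\[
\frac{\mathrm{Var}(\widehat{W})}{\mathbb{E}[W]^2}=\prod_{i}\left(1+\frac{\mathrm{Var}W_i}{n_A\,\mathbb{E}[W_i]^2}\right)-1.
\]
Per level, $\mathbb{E}[W_i^2]/\mathbb{E}[W_i]^2=A_m(i-1)A_m(i+1)/A_m(i)^2$. These factors telescope in the product over $i$ to the ratio in Eq. \eqref{eq:EW2}, which is bounded by $\wc^{-1}=\mathcal{O}(1)$.

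The difficulty is that the per-level relative variances $c_i$ only satisfy $\prod_i(1+c_i)=\mathcal{O}(1)$. I would use $1+c_i/n_A\le(1+c_i)^{1/n_A}$ for $n_A\ge1$, valid by Bernoulli's inequality since the exponent is at most $1$. This gives
\[
\frac{\mathrm{Var}(\widehat W)}{\mathbb{E}[W]^2}\le \wc^{-1/n_A}-1=\mathcal{O}\!\left(\frac{\log \wc^{-1}}{n_A}\right).
\]
Chebyshev (Lem. \ref{lemma:chebyshev}) then yields relative error $\eta$ with failure probability $\delta_A$ once $n_A=\mathcal{O}(\eta^{-2}\delta_A^{-1})=\mathcal{O}(\Vt^2\varepsilon_A^{-2}\delta_A^{-1})$.

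\emph{Sample count.} The total number of samples is $\ell n_A=\tilde{\mathcal{O}}(\Vt^2\varepsilon_A^{-2}\delta_A^{-1})$, because $\ell$ is only logarithmic. Samples from each $p_{i+1}$ come from the Cluster Markov chain with deformed weights, via Lem. \ref{lem_app_AnnealingFilter}. Mixing errors are absorbed into $\delta_A$ by the same union-bound coupling used in Thm. \ref{thm:sampling_cexp}.

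The step I expect to be most delicate is controlling the product variance uniformly in $\ell$. If the Bernoulli-type bound fails, for instance because of the index shift involving $A_m(-1)$, I would fall back to bounding each ratio $A_m(i-1)A_m(i+1)/A_m(i)^2\le 1+\mathcal{O}(1/\ell)$ via log-convexity of $\rho\mapsto A_m(\rho)$. That alternative costs only logarithmic factors.
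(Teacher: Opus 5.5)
Your architecture is the same as the paper's: lowest-order anchor $A_r(\ell)=\mathrm{e}^{-r\ell}B_r$, telescoping ratios with $W_i=\mathrm{e}^{\|\Gamma\|}$, the relative second-moment bound \eqref{eq:EW2}, and Chebyshev at relative precision $\eta=\Theta(\varepsilon_A/\Vt)$. Your two-term error split is slightly cleaner than the paper's three-term one. Since $A_r(\ell)\le A_m(\ell)$, your ``$\approx$'' is in fact the inequality $A_r(\ell)\mathbb{E}[W]\le A_m$.

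\textbf{The gap.} It sits in the step you flag as the main one. The inequality $1+c_i/n_A\le(1+c_i)^{1/n_A}$ is reversed. For exponents $t\in[0,1]$, Bernoulli's inequality (equivalently, convexity of $t\mapsto(1+c)^t$) gives $(1+c)^t\le 1+tc$. So your bound $\mathrm{Var}(\widehat W)/\mathbb{E}[W]^2\le\bigl(\prod_i(1+c_i)\bigr)^{1/n_A}-1$ is false in general, not just unproven. With a single level, $c=3$ and $n_A=2$, the relative variance of the empirical mean is $c/n_A=3/2$, whereas your bound gives $4^{1/2}-1=1$.

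Your fallback fails as well. Log-convexity of $\rho\mapsto A_m(\rho)$ (Cauchy--Schwarz) yields $A_m(i-1)A_m(i+1)/A_m(i)^2\ge 1$, which is just Jensen's $\mathbb{E}[W_i^2]\ge\mathbb{E}[W_i]^2$. Nothing stops the telescoped excess from concentrating on a single level, so no per-level bound $1+\mathcal{O}(1/\ell)$ follows.

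\textbf{The repair.} No per-level control is needed. Expanding over nonempty sets $S$ of levels,
\[
\prod_i\Bigl(1+\frac{c_i}{n_A}\Bigr)-1=\sum_{S\neq\emptyset}n_A^{-|S|}\prod_{i\in S}c_i\le\frac{1}{n_A}\Bigl(\prod_i(1+c_i)-1\Bigr)\le\frac{\wc^{-1}-1}{n_A}.
\]
This bound is sharp, and it gives the same $n_A=\mathcal{O}(\Vt^2\varepsilon_A^{-2}\delta_A^{-1})$ you claim. The constant is $\wc^{-1}-1$ instead of $\log\wc^{-1}$, which is immaterial for constant $\wc$.

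Alternatively, do what the paper's sample count does: average $n_A$ independent copies of the full product $W=\prod_iW_i$, each costing $\ell$ annealed samples. Then $\mathrm{Var}(\widehat W)=\mathrm{Var}(W)/n_A$, and Chebyshev applies directly with $\mathbb{E}[W^2]/\mathbb{E}[W]^2\le\wc^{-1}$. Both estimators use $\ell n_A$ samples, and with either fix the rest of your argument goes through.
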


\begin{proof}
By simulating the Cluster Markov chain (Sec.~\ref{subsec::cluster_markov_chain})
  at $\rho=0$ and filtering its output as in Lem.~\ref{lem_app_AnnealingFilter}, we
sample from each intermediate distribution $p_{i+1}(\Gamma)$ and construct
empirical estimators $\widehat W_i$ for each step of the cooling schedule. The
product of these independent estimates over the sequence $i=0,\dots,\ell-1$
yields an estimator $\widehat W$ that approximates the total ratio
$A_m(0)/A_m(\ell)\equiv\mathbb E[W]$.

The lowest-order mass, by contrast, is evaluated \emph{exactly}. Writing
\begin{equation}
  B_r\;=\!\!\!\sum_{\gamma\in\mathcal C\,:\,\norm{\gamma}=r}\!\!\!|w_\gamma|,
  \qquad
  A_r(\ell)\;=\;\mathrm e^{-r\ell}B_r ,
  \label{eq:Branchor}
\end{equation}
where $r=\mathcal O(1)$ denotes the minimal polymer size, the sum in
\eqref{eq:Branchor} is finite and we compute it by direct enumeration of the polymers
of size $r$. As their size is fixed, each of these weights costs $\mathcal O(1)$,
so the anchor is obtained in $\mathcal O(\vert V(G)\vert)$ time. Multiplying the annealed
ratio by this anchor provides our final estimator
\begin{equation}
  \widehat{A_m}\;=\;\widehat W\cdot\mathrm{e}^{-r\ell}\,B_r ,
  \label{eq:Ahat}
\end{equation}
so that the only two sources of error are the truncation of $A_m(\ell)$ at its
lowest order and the statistical error of $\widehat W$:
\begin{align}
  \left|A_m-\widehat{A_m}\right|
  &\le\left|A_r(\ell)-A_m(\ell)\right|\cdot\mathbb E[W]
  \label{eq:e1}\\
  &\quad+A_m(\ell)\left|\mathbb E[W]-\widehat W\right|
  \label{eq:e2}\\
  &\quad+\left|A_r(\ell)-A_m(\ell)\right|\cdot
  \left|\mathbb E[W]-\widehat W\right|
  \label{eq:e3}\\
  &\le\varepsilon_{\mathrm{anchor}}\mathbb E[W]
  +A_m(\ell)\varepsilon_W+\varepsilon_{\mathrm{anchor}}\varepsilon_W ,
  \label{eq:e4}
\end{align}
where $\varepsilon_{\mathrm{anchor}}:=\left|A_r(\ell)-A_m(\ell)\right|$ is given
by Eq.~\eqref{eq:anchorerr}.

We first pick $\varepsilon_{\mathrm{anchor}}$ such that
$\varepsilon_{\mathrm{anchor}}\mathbb E[W]\le\varepsilon_A/4$. Using Eq.
\eqref{eq:EW},
\begin{equation}
  \varepsilon_{\mathrm{anchor}}\,\mathbb E[W]\;\le\;
  \frac{1}{\wc}\,
  \varepsilon_{\mathrm{anchor}}^{\frac{1}{r+1}}\,\Vt^{\frac{r}{r+1}} ,
  \label{eq:epsanchor}
\end{equation}
so we choose
\begin{equation}
  \varepsilon_{\mathrm{anchor}}
  =\left(\frac{\wc\,\varepsilon_A}{4}\right)^{r+1}\Vt^{-r} ,
  \label{eq:epsanchorchoice}
\end{equation}
which by Eq. \eqref{eq:ellchoice} fixes the length of the annealing schedule at
$\ell=\mathcal O\!\left(\log\!\left(\Vt\varepsilon_A^{-1}\right)\right)$.

It remains to allocate the statistical budget. Setting
$\varepsilon_W=\varepsilon_A/(2A_m(\ell))$ bounds \eqref{eq:e2} by
$\varepsilon_A/2$, and since $\varepsilon_{\mathrm{anchor}}\le A_m(\ell)/2$ by
construction, \eqref{eq:e3} is bounded by $\varepsilon_A/4$; the three
contributions add to $\varepsilon_A$. Equivalently, because
$\mathbb E[W]=A_m/A_m(\ell)$, what the annealing must deliver is the
\emph{relative} precision
\begin{equation}
  \eta_W\;=\;\frac{\varepsilon_W}{\mathbb E[W]}\;=\;
  \frac{\varepsilon_A}{2A_m}\;=\;\Theta\!\left(\frac{\varepsilon_A}{\Vt}\right).
  \label{eq:etaW}
\end{equation}

It remains to fix the number of samples. Because the precision required of the
annealing is relative, Eq.~\eqref{eq:etaW}, the relative second-moment bound of
Eq.~\eqref{eq:EW2} controls the sample count directly. By
Lem.~\ref{lemma:chebyshev},
\begin{align}
  \Prob\!\left(\left|\widehat W-\mathbb E[W]\right|
  \ge\eta_W\,\mathbb E[W]\right)
  &\;\le\;\frac{1}{n_A\,\eta_W^{2}}\,
  \frac{\mathbb E[W^{2}]}{\mathbb E[W]^{2}}
  \notag\\[2pt]
  &\;\le\;\frac{1}{\wc}\cdot\frac{1}{n_A\,\eta_W^{2}} ,
  \label{eq:cheb}
\end{align}
so a failure probability below $\delta_A$ is guaranteed as soon as
\begin{equation}
  n_A\;\ge\;\frac{1}{\wc}\,\frac{1}{\eta_W^{2}\,\delta_A}
  \;=\;\mathcal O\!\left(\Vt^{2}\varepsilon_A^{-2}\delta_A^{-1}\right),
  \label{eq:nA}
\end{equation}
where the last equality uses Eq.~\eqref{eq:etaW}. 

Collecting the three ingredients, the algorithm runs the Cluster Markov chain, draws $n_A$ independent copies of the product $W$ ---each of them
requiring $\ell$ annealed samples, and
evaluates the anchor $B_r$ deterministically. By Eqs.~\eqref{eq:e4},
\eqref{eq:epsanchorchoice} and \eqref{eq:nA}, this returns
$|A_m-\widehat{A_m}|\le\varepsilon_A$ with probability at least $1-\delta_A$.

\end{proof}

\section{Application: Sampling quantum  (long-range) thermal equilibrium}\label{sec::aplication}
Having established a general MCMC framework without an exponential sign problem for estimating abstract cluster expansions, we now demonstrate its power by evaluating the thermal properties of interacting quantum many-body systems. We focus on long-range quantum Hamiltonians satisfying
\begin{align}
    \sum_{Z \ni e=(i,j)} \norm{h_Z}^{1/2} \leq \left( \frac{g}{(1+d_e)^{\alpha}}\right)^{1/2},
    \label{eq:longrange_definition}
\end{align}
where $g>0$ and $\alpha>0$ control the strength and decay of the interactions. This condition is slightly stronger than the usual definition in the literature, which requires only $\sum_{Z \ni e} \norm{h_Z} \leq g/(1+d_e)^{\alpha}$ \cite{kuwahara2020longrange,Segovia_PRXQuantum2025}, but it should still cover all physically relevant models of long-range interactions. Specifically, it is satisfied by all 2-local long-range models with power-law decay
\begin{align}
    \norm{h_{ij}} \leq \frac{g}{(1+d_{ij})^{\alpha}},
    \label{eq:2local}
\end{align}
and any finite-range $k-$local model.
In particular, it covers the van der Waals and dipolar interactions that arise in Rydberg atom arrays and dipolar quantum gases \cite{chen_continuous_2023, schauss2012rydberg}, ion traps \cite{Richerme2014}, as well as most long-range spin models studied in the literature \cite{defenu_longrange}. Thus, while formally more restrictive for general $k$-body interactions, our assumption encompasses most of the relevant long-range Hamiltonians.

A geometric quantity which plays an important role in our calculations is
\begin{equation}
u = 2^\alpha \max_{i} \sum_{j\neq i} \frac{1}{1+d_{ij}^\alpha},
\label{eq:u}
\end{equation}
which is $\mathcal{O}(1)$ if $\alpha>D$.

\subsection{Convergence of the cluster expansion for long-range interactions}

As introduced in Sec. \ref{sec:quantumHamilton}, the partition function of a quantum system admits an abstract polymer model representation where the polymers are sets of overlapping Hamiltonian terms $h_Z$, or equivalently subgraphs in the interaction graph $G_{\text{int}}(V,E)$.

Recall that each vertex in $V(G_{\text{int}})$ corresponds to a Hamiltonian term $h_Z$. In the general case of a $k$-local long-range quantum Hamiltonian acting on $N$ qudits, the total number of terms can grow as $|V(G_{\text{int}})| \le \mathcal{O}(N^k)$. Notice, however, that the effective covering volume defined in Sec. \ref{subsection_algorithmicpolymermodel} can be tightly bounded as $|\tilde{V}(\tilde{G}_{\text{int}})| \le N$ by restricting to the finite range terms of the model. This distinction is crucial, as it leads to significantly better algorithmic scaling in our sampling schedule.

The primary hurdle is guaranteeing that this long-range polymer model remains strictly within the convergence regime of the cluster expansion. The following theorem ensures that if the physical interactions decay sufficiently fast, the standard Koteck\'y-Preiss criterion (Con. \ref{con:kotecky_preiss_criteria}) is satisfied at high temperatures.
	
\begin{lemma}\label{lem::koteckypreiss_longrange}
    Let $H$ be a quantum Hamiltonian as defined in Eq.~\eqref{eq:longrange_definition} such that $\alpha > D$. Then, Con. \ref{con:kotecky_preiss_criteria} is satisfied for $\beta \leq \beta^* = (2g)^{-1}\log(1+ ( 24\sqrt{\mathrm e}\,uk)^{-1})$, and the truncated cluster expansion converges as
    \begin{align}
        \vert\log Z_\beta - T_m \vert \leq N \left\vert \frac{\beta}{\beta^*} \right\vert^m.
    \end{align}
\end{lemma}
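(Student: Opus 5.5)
We will prove the statement in three steps: bound each polymer weight by a product of its edges' interaction norms, check the Kotecký--Preiss sum with a tree-counting argument, and then read off the truncation error from Cor. \ref{cor:truncation_error}.

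\textbf{Step 1: weight bound.} Take a polymer $\gamma$, i.e.\ a connected set of terms $h_Z$ in $G_{\text{int}}$. We expand $\mathrm{e}^{-\beta\sum_{Z\in\gamma}h_Z}$ in a Duhamel/Taylor series. The inclusion--exclusion sum defining $w_\gamma$ keeps only the terms in which every $h_Z$ with $Z\in\gamma$ appears at least once. This is the standard cluster-derivative structure used in \cite{mann2024algorithmic,Segovia_PRXQuantum2025}. It gives
\begin{align}
\vert w_\gamma\vert \le \prod_{Z\in\gamma}\left(\mathrm{e}^{\beta\norm{h_Z}}-1\right)\cdot \mathrm{e}^{\beta\sum_{Z\in\gamma}\norm{h_Z}} \le \prod_{Z\in\gamma}\left(\mathrm{e}^{2\beta\norm{h_Z}}-1\right).
\end{align}
With the normalized trace this is precise enough. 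Using $\mathrm{e}^{x}-1\le x\,\mathrm{e}^{x}$ and $\norm{h_Z}\le g$, each factor is at most $c_\beta\norm{h_Z}/g$ with $c_\beta:=\mathrm{e}^{2\beta g}-1$. We take this product as the majorant $\bar w_\gamma$ of Eq.~\eqref{eq:majorant}.

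\textbf{Step 2: Kotecký--Preiss check (main obstacle).} Fix $\gamma$ and set $a(\gamma)=\norm{\gamma}/2$, $d(\gamma)=2\log 2\,\norm{\gamma}$, as in the paper's choice. A polymer incompatible with $\gamma$ must contain a term touching, or linked by an edge to, one of the at most $k\norm{\gamma}$ sites in the support of $\gamma$. It therefore suffices to show, for each site $i$,
\begin{align}
\sum_{\gamma^*\ni i}\bar w_{\gamma^*}\,\mathrm{e}^{(1/2+2\log 2)\norm{\gamma^*}}\le \frac{1}{2k}.
\end{align}
A connected $\gamma^*$ of size $n$ has a spanning tree, so we bound the sum over polymers by a sum over rooted trees of Hamiltonian terms. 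Each tree edge joins two terms that share a site. Summing the leaves outward, each attachment step costs at most
\[
k\sum_{Z\ni j}\frac{c_\beta\norm{h_Z}}{g}\le k\,c_\beta\, 2^{-\alpha}u .
\]
Here we sum over sites $j'$ of the new term with decay $(1+d_{jj'})^{-\alpha}$ via Eq.~\eqref{eq:longrange_definition}. The square-root form of that hypothesis controls the $k$-body sums, and $u=\mathcal O(1)$ because $\alpha>D$.

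The delicate part is the combinatorial overcounting of trees, i.e.\ the degree factors. We will control it with the standard tree-counting bound $\sum_{\text{trees}}\le \mathrm{e}^{n}\prod(\cdot)$, and this is where the constants $24$ and $\sqrt{\mathrm e}$ enter. The result is a geometric series in
\[
x:=24\sqrt{\mathrm e}\,uk\,c_\beta\cdot\tfrac{1}{\ldots},
\]
which converges to at most $1/(2k)$ exactly when $c_\beta\le (24\sqrt{\mathrm e}uk)^{-1}$. That is the condition $\beta\le\beta^*=(2g)^{-1}\log(1+(24\sqrt{\mathrm e}uk)^{-1})$. If the constants do not close on the first attempt, we will reallocate the factor $2^{\alpha}$ in $u$ together with the $\mathrm{e}^{a+d}$ factors.

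\textbf{Step 3: truncation error.} For $\beta<\beta^*$, apply the Kotecký--Preiss bound with rescaled weights $w_\gamma(\beta^*/\beta)^{\norm{\gamma}}$, which remain admissible because $\bar w_\gamma$ scales at least linearly in $c_\beta$ per vertex. Clusters with $\norm{\Gamma}>m$ then carry an extra factor $(\beta/\beta^*)^m$. Exactly as in Cor. \ref{cor:truncation_error}, summing Lemma \ref{lem:kotecky_preiss} over a covering set of size at most $N$ (the single-site skeleton) gives $\vert\log Z_\beta-T_m\vert\le N\vert\beta/\beta^*\vert^m$.
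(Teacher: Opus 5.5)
Your proposal is correct and follows essentially the paper's proof in App.~\ref{app:kp_longrange}. The shared steps are a Taylor-series weight bound of the form $(\text{const})^{\norm{\gamma}}\prod_{Z\in\gamma}\norm{h_Z}/g$, a reduction of the Koteck\'y--Preiss sum to local sums around the at most $k$ sites of each term, a tree-counting estimate that yields a geometric series, and a rescaling in $\beta$ for the truncation error; the paper differs only in using the sharper Lemma~\ref{lemma_boundweight} (with $\mathrm{e}^{\beta g}-1$), obtaining the factor $1/2$ in $\beta^*$ from its final rescaling step, and citing \cite{Segovia_PRXQuantum2025} for the tree counting you sketch directly. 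The constants you left open do close once you add the single-site term that your attachment bound omits (and use convexity of $\beta\mapsto\mathrm{e}^{2\beta g}-1$, not $\mathrm{e}^x-1\le x\mathrm{e}^x$, for the per-term comparison): the weighted degree at $\beta^*$ is then at most $4\sqrt{\mathrm e}\,k\,c_\beta(1+2^{-\alpha}u)\le 2^{-\alpha}/3$, using $2^{-\alpha}u\ge 1$ on a unit-spacing lattice, so the per-site sum is at most $\frac{2^{-\alpha}}{3k}\bigl(1-\mathrm{e}\,2^{-\alpha}/3\bigr)^{-1}<\frac{1}{2k}$ since $\alpha>1$.
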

  \noindent The full proof, largely based on \cite{Segovia_PRXQuantum2025}, is deferred to App. \ref{app:kp_longrange}.

While the standard Koteck\'y-Preiss criterion guarantees the absolute convergence of the series, bounding the statistical variance and the mixing time of our MCMC algorithm requires the slightly more stringent Cluster Sampling condition (Con. \ref{con:clustersamplingcondition}). 

\begin{coro}\label{coro::clustersampling_longrange}
    Let $H$ be a quantum Hamiltonian as defined in Eq.~\eqref{eq:longrange_definition} such that $\alpha > 2D$. Then, Con. \ref{con:clustersamplingcondition} is satisfied for $\beta \leq \beta^{**} = (2g)^{-1}\log(1+ ( 24\sqrt{\mathrm e}\,uk)^{-2})$.
\end{coro}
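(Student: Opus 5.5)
The plan is to rerun the proof of Lem.~\ref{lem::koteckypreiss_longrange} (App.~\ref{app:kp_longrange}) with every polymer weight $\vert w_\gamma\vert$ replaced by $\vert w_\gamma\vert^{1/2}$. The key input is that the long-range argument never uses the weights except through a majorant that factorizes over the edges of a spanning tree of the polymer.

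First, I will recall the weight bound from that proof. Using the inclusion--exclusion representation of $w_\gamma$ (a connected cumulant of the terms in $\gamma$), one obtains a tree bound of the form
\begin{align*}
\vert w_\gamma\vert \le \left(\mathrm{e}^{2\beta g}-1\right)^{\norm{\gamma}} \sum_{T \text{ spanning tree of } \gamma} \prod_{e\in T} \frac{1}{(1+d_e)^{\alpha}},
\end{align*}
up to the combinatorial constants that produce the factor $24\sqrt{\mathrm e}\,k$. The precise constants are whatever the appendix gives. What matters is that the bound is a sum over trees of products of per-vertex factors $x:=\mathrm{e}^{2\beta g}-1$ and per-edge factors $(1+d_e)^{-\alpha}$.

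Next, I take square roots and use subadditivity $\sqrt{\sum_i a_i}\le\sum_i\sqrt{a_i}$. This gives
\begin{align*}
\vert w_\gamma\vert^{1/2} \le x^{\norm{\gamma}/2} \sum_{T} \prod_{e\in T} \frac{1}{(1+d_e)^{\alpha/2}}.
\end{align*}
For general $k$-body terms, the assumption Eq.~\eqref{eq:longrange_definition} is stated exactly as a bound on $\sum_{Z\ni e}\norm{h_Z}^{1/2}$. This is what lets the square root pass inside the sum over terms covering an edge without losing the decay. This new majorant has the same tree structure as before, with two changes: decay exponent $\alpha/2$ in place of $\alpha$, and vertex activity $x^{1/2}$ in place of $x$.

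Then I repeat the Koteck\'y--Preiss tree-counting (the sum over incompatible $\gamma^*$ rooted at $\gamma$, with the weights $\mathrm{e}^{a+d}$, $a=\norm{\gamma}/2$, $d=2\log 2\,\norm{\gamma}$) with these substitutions. The geometric constant becomes $u' = 2^{\alpha/2}\max_i\sum_{j\ne i}(1+d_{ij}^{\alpha/2})^{-1}$. This is $\mathcal O(1)$ precisely when $\alpha/2>D$, i.e.\ $\alpha>2D$, which is where the hypothesis enters. The old sufficient condition was $x\le (24\sqrt{\mathrm e}\,uk)^{-1}$; the same computation now yields the sufficient condition $x^{1/2}\le(24\sqrt{\mathrm e}\,u'k)^{-1}$, i.e.
\begin{align*}
x \le (24\sqrt{\mathrm e}\,u'k)^{-2}.
\end{align*}
Inverting $x=\mathrm{e}^{2\beta g}-1$ gives $\beta\le(2g)^{-1}\log(1+(24\sqrt{\mathrm e}\,uk)^{-2})=\beta^{**}$, with $u$ understood as the geometric constant evaluated at the halved exponent.

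The main obstacle is checking that the appendix's bound really is of this factorized tree form, so that the square root commutes with it at the level of trees. If the appendix bound instead involves $\norm{h_Z}$ summed before the tree expansion, the subadditivity step has to be applied at the level of individual terms $h_Z$, and that step is exactly the point where the strengthened assumption Eq.~\eqref{eq:longrange_definition} is required. The $\mathrm{e}^{a+d}$ factors are unchanged by the substitution and are absorbed into the same constant as before.
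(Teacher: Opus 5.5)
This is essentially the paper's proof: it square-roots the majorant of Lem.~\ref{lemma_boundweight} and reruns the counting of App.~\ref{app:kp_longrange}, where Eq.~\eqref{eq:longrange_definition} turns each connectivity-polymer edge factor into $\sum_{Z\ni e}(\norm{h_Z}/g)^{1/2}\le(1+d_e)^{-\alpha/2}$, so $\alpha>2D$ is exactly what keeps the lattice sums finite. Of your two scenarios, the fallback is the one that actually holds: Lem.~\ref{lemma_boundweight} gives $\vert w_\gamma\vert\le(\mathrm{e}^{\beta g}-1)^{\norm{\gamma}}\prod_{Z\in\gamma}\norm{h_Z}/g$, a product over terms, so the square root is exact, no subadditivity is needed, and trees appear only in bounding $\mathbf{S'}_{Z^*}$ (a weight bound with decay only along a spanning tree, as you first wrote it, would not survive the sum over polymers), while your reading of $u$ as the constant at exponent $\alpha/2$ is what the argument really yields, a point the appendix glosses over when it writes $16u^2k^2$.
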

 \noindent The proof is provided in App. \ref{app:clustersamplingcondition_longrange}.

\subsection{Efficient algorithm for quantum partition functions}\label{subsec_eff_alg_lrpartfunctions}
  
Building on the convergence guarantees established above, we now present the main algorithmic result of this section: a polynomial-time algorithm for evaluating long-range quantum partition functions at high temperatures.

    \begin{thm}\label{thm:algorithm_clusterexp_lr}
        Let us consider a quantum long-range Hamiltonian as introduced in Sec.~\ref{sec:quantumHamilton}, with decay exponent $\alpha > kD$. Assuming the inverse temperature satisfies $\beta \leq \beta_{\text{sampling}} = c_\beta g^{-1}$ for constant $c_\beta = \mathcal{O}(1)$ (Eq.~\eqref{eq:def_betasampling}), there exists a randomized algorithm that outputs an estimate of $\log Z_\beta$ with additive error $\varepsilon$ and success probability $\geq 1-\delta$ in runtime bounded by

        \begin{align}
            \mathcal{\tilde{O}}(N^{3+1/D} \varepsilon^{-3}).
        \end{align}
    \end{thm}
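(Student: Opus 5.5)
The plan is to instantiate Thm.~\ref{thm:sampling_cexp} for the polymer model of Sec.~\ref{sec:quantumHamilton}, built on the interaction graph $G_{\text{int}}$, and to bound separately each factor in the runtime
\[
\tilde{\mathcal O}\bigl(C_w\,\mathrm{diam}(\tilde G)\,|\tilde V(\tilde G)|^3\varepsilon^{-3}+|V(G)|\bigr).
\]

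\textbf{Step 1: the hypotheses.} First I will verify that the hypotheses of Thm.~\ref{thm:sampling_cexp} hold.
\begin{itemize}
\item Con.~\ref{con:clustersamplingcondition} follows from Cor.~\ref{coro::clustersampling_longrange}. This requires $\alpha>2D$; for $2$-local models this is exactly $\alpha>kD$. For general $k$, I would rerun the proof of App.~\ref{app:clustersamplingcondition_longrange}, with the square root of Eq.~\eqref{eq:longrange_definition} summed over $k$-body terms, so that $\sum_{j}(1+d_{ij})^{-\alpha/2}$-type sums over $k-1$ further sites converge precisely when $\alpha>kD$.
\item I then set $\beta_{\text{sampling}}=\min(\beta^{**},\dots)=c_\beta g^{-1}$, so that $c_\beta=\mathcal O(1)$ depends only on $u,k$.
\item Con.~\ref{ass::uniform_local_weights} is obtained by taking as covering graph $\tilde G$ the finite-range (single-site or nearest-neighbour) terms on $\Lambda$. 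For these, $|\tilde V|=\Theta(N)$, and the single-term polymers have weight $w_{\gamma_v}=\Tr[e^{-\beta h_v}]-1=\Theta(\beta^{2}\|h_v\|^2)$ or $\Theta(\beta\|h_v\|)$. This is bounded below by a constant at fixed $\beta$, after adding, if needed, a trivial local field absorbed into the normalization.
\item Con.~\ref{ass::weight_regularity} holds because the majorant $\bar w_\gamma$ of App.~\ref{app:kp_longrange} is a product of $(\beta g)(1+d_e)^{-\alpha}\ge (\beta g) N^{-\alpha/D}$ factors, which is polynomial per vertex.
\end{itemize}

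\textbf{Step 2: geometric factors.} The diameter of $\tilde G$, a nearest-neighbour skeleton of a $D$-dimensional lattice with $N$ sites, is $\mathcal O(N^{1/D})$. Combined with $|\tilde V|^3=\mathcal O(N^3)$, this gives $N^{3+1/D}\varepsilon^{-3}$, provided $C_w=\tilde{\mathcal O}(1)$ and the additive term $|V(G)|\sim N^k$ can be removed.

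\textbf{Step 3: the polymer sampler $\nu_v$.} I would build it by a branching exploration. Starting from the term $v$, each newly added site $i$ proposes a new term $h_Z\ni i$ with probability proportional to $\|h_Z\|^{1/2}$-type weights. The proposals are drawn via a precomputed alias table over distances, using translation invariance of the decay bound, so that each step is $\mathcal O(1)$ expected cost. The resulting subgraph is accepted with probability $|w_\gamma|/q(\gamma)$, where $q$ is the proposal probability. The Cluster Sampling condition gives $|w_\gamma|\le q(\gamma)$ up to a constant, and keeps the expected polymer size $\mathcal O(1)$; polymers are capped at $m=\mathcal O(\log(N/\varepsilon))$. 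Computing $w_\gamma$ costs $e^{\mathcal O(\|\gamma\|)}$, which has finite expectation under the geometric tail, so $C_w=\tilde{\mathcal O}(1)$.

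\textbf{Step 4: the anchor.} The anchor $B_r=\sum_{\|\gamma\|=r}|w_\gamma|$ will be estimated instead of enumerated. For $r=1$ it is a sum over $\mathcal O(N^k)$ terms, and I would importance-sample it with the same alias tables. Its relative variance is bounded because the weights are controlled by the majorant. Only relative precision $\varepsilon/N$ is needed, so $\tilde{\mathcal O}(N^2\varepsilon^{-2})$ samples suffice, which is dominated by the main term.

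\textbf{Main obstacle.} I expect Step 3 to be the hardest part: the efficient sampler $\nu_v$ with $C_w=\tilde{\mathcal O}(1)$ for long-range, $k$-body terms, together with exact normalization of the empty-polymer mass. What I would try first is to show that the proposal measure dominates $|w_\gamma|$ using the square-root sums from Cor.~\ref{coro::clustersampling_longrange}. This is where the condition $\alpha>kD$ should enter.
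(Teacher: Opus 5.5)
Your architecture matches the paper's: instantiate Thm.~\ref{thm:sampling_cexp} with $\tilde G$ the local skeleton, so that $|\tilde V|=\Theta(N)$ and $\mathrm{diam}(\tilde G)=\mathcal O(N^{1/D})$, then add a randomized anchor and an $\mathcal O(1)$ sampler for $\nu_v$. However, Step 3, which carries the theorem, does not work as described.

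For the filter $|w_\gamma|/q(\gamma)$ to output exactly $\nu_v$, you must be able to evaluate $q(\gamma)$, the probability that the exploration returns exactly the set $\gamma$. In a branching process where sites spawn terms drawn from alias tables, as you describe it, the same set arises from many spawning histories (different parents, orders, repeated proposals). So $q(\gamma)$ is a sum over histories, not a product. The way to get a product formula, used in the paper following \cite{blanca2023fast}, is a percolation-type BFS. Every candidate neighbour is examined once and kept independently with probability $q_Z$, so that $q(\gamma)=n_v(\gamma)\prod_{Z\in\gamma}q_Z\prod_{Z'\in\partial\gamma}(1-q_{Z'})$.

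In the long-range setting, however, $\partial\gamma$ contains every term overlapping $\mathrm{supp}(\gamma)$, up to $\mathcal O(N^{k-1})$ terms per site. Both the exploration and the evaluation of the filter therefore cost $\mathrm{poly}(N)$ per call. Alias tables speed up drawing the kept terms, but they do not give you the exclusion factor of the whole boundary. Tables built on the actual, non-translation-invariant $\|h_Z\|$ would also need $\mathcal O(N^k)$ preprocessing. Finally, ``$|w_\gamma|\le q(\gamma)$ up to a constant'' is not enough: the filter must be $\le 1$ for every $\gamma$, and the boundary factor costs a factor exponential in $\|\gamma\|$.

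The missing idea is the random radial cutoff of Lem.~\ref{lem:sampling_algorithm}:
draw $R$ with $\Pr(R)\propto R^{-\tilde\alpha}$, and run the percolation BFS only on terms with $d(Z)\le R$. This graph has degree $\Delta_R=\mathcal O(R^{D(k-1)})$, so the boundary is finite and explicitly explored. Inflate the inclusion probabilities to $q_Z=\tau(\mathrm e^{\beta g}-1)\|h_Z\|/(g\,F(d(Z)))$, with $F(d)=\Pr(R\ge d)$ and $\tau\ge\mathrm e$, and divide by $\Pr(R\ge d^*(\gamma))\,n_v(\gamma)$ in the filter. Marginalising over $R$ then gives exactly $|w_\gamma|$. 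The empty-polymer mass you were worried about is automatic, since residual seed mass and rejections return $\emptyset$. Lems.~\ref{lem:appPolymerSampler_qwell}--\ref{lem:auxiliarProof_longRangeSampler_filter} show the filter is $\le 1$.

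Finite expected exploration cost needs $\tilde\alpha>D(k-1)+1$, and finiteness of $u'$ needs $\tilde\alpha<\alpha+1-D$. This window is non-empty iff $\alpha>kD$. That is where $\alpha>kD$ enters, not the cluster-sampling condition: Cor.~\ref{coro::clustersampling_longrange} already holds for every $k$ at $\alpha>2D$ under Eq.~\eqref{eq:longrange_definition}, so no rerun is needed there.

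Step 4 has the same defect once $N^k$ enumeration is no longer subleading (large $k$). Eq.~\eqref{eq:longrange_definition} only controls $\sum_{Z\ni e}\|h_Z\|$, so there is no per-term profile from which to importance-sample $k$-body terms with bounded relative variance without enumerating them. The paper instead samples the radius $t$ and sums all terms with $d(Z)=t$ exactly, at expected cost $\mathcal O(N)$ per sample (again using $\tilde\alpha>D(k-1)+1$). This yields an $\tilde{\mathcal O}(N^3\varepsilon^{-2})$ anchor cost, which is subleading to the main term.
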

\begin{proof}
    To apply our general cluster sampling framework (Thm. \ref{thm:sampling_cexp}) to the evaluation of quantum log-partition functions, we must first ensure that the underlying polymer model satisfies the weight regularity condition (Con. \ref{ass::uniform_local_weights}). This can be systematically achieved by applying a local energy shift to the Hamiltonian terms
    \begin{equation}
        h_Z \rightarrow h_Z - (2+ w_{Z})\norm{h_Z}\mathbb{I},
    \end{equation}
    where $w_{Z} > 0$ is a tunable $\mathcal{O}(1)$ constant, and $\mathbb{I}$ is the identity operator. Crucially, we explicitly define our covering set $\tilde{V}(\tilde{G})$ to consist exclusively of the $\Theta(1)$-norm local skeleton of the model (i.e., single-site fields and nearest-neighbor couplings). Because these local terms do not suffer from spatial decay, their norms are uniformly bounded away from zero ($\norm{h_{Z}} = \Omega(1)$). Consequently, this spectral shift guarantees that all single-term polymers $\gamma_v = \{Z_v\}$ associated with this local covering set possess absolute weights strictly bounded away from zero ($\vert w_{\gamma_v} \vert \geq w_{Z} \beta \norm{h_{Z_v}} = \Omega(1)$). Furthermore, since each vertex $v\equiv Z \in \tilde{V}$ uniquely corresponds to a distinct local interaction $Z_v$, these base polymers are mutually distinct ($\gamma_v \neq \gamma_u$ for $v \neq u$). This rigorous construction explicitly verifies and safely satisfies the lower-bound requirement of Con. \ref{ass::uniform_local_weights}. This transformation simply amounts to redefining the local coupling constant bound $g \rightarrow g(3 + w_Z)$ within the subsequent convergence criteria, with its subsequent threshold temperature shift. 

    With these theoretical guarantees in place, the general MCMC framework of Thm.~\ref{thm:sampling_cexp} establishes the mathematical foundation for a randomized algorithm. However, realizing the strictly optimized polynomial runtime stated above requires overcoming two significant computational bottlenecks inherent to long-range interacting graphs:

    \begin{enumerate}
        \item We need a local polymer sampler that operates in $\mathcal{O}(1)$ expected time despite the dense long-range tails. We develop this tool using a radial importance sampling strategy in Sec. \ref{subsec::Polymer_sampler}.
        \item The deterministic evaluation of the annealing anchor $A_r(\ell)$ naively requires $\mathcal{O}(\vert V(G)\vert) = \mathcal{O}(N^k)$ operations. This dense scaling would act as an artificial bottleneck, compromising the strict polynomial efficiency of the overall algorithm. We bypass this overhead by introducing a randomized anchor estimator (detailed in Sec. \ref{subsec::anchorestimation}) that runs in $\mathcal{O}(N^2 \varepsilon_{\text{anchor}}^{-1} \delta_{\text{anchor}}^{-1})$ expected time, with a failure probability bounded by $\delta_{\text{anchor}}$. As established in the proofs of Lem. \ref{lemma_estimationA} and Thm. \ref{thm:sampling_cexp}, the target precision for this estimator must scale as $\varepsilon_{\text{anchor}} = \mathcal{O}(\varepsilon_A^2 N^{-1}) \equiv \mathcal{O}(\varepsilon^2 N^{-1})$. The introduction of this randomized subroutine adds a new failure mode to the algorithm. However, this is seamlessly absorbed into the global success probability analysis (Eq. \eqref{eq_proof_ProbSuccess}) by simply setting $\delta_{\text{anchor}} = \mathcal{O}(\delta)$, which secures the global target success rate without inducing any additional asymptotic overhead. This yields a total cost of $\mathcal{\tilde{O}}(N^3 \varepsilon^{-2} \delta^{-1})$ for the anchor calculation. Note that for the 2-local case a deterministic $\mathcal{O}(N^2)$ calculation, both simpler and more efficient.
    \end{enumerate}

    Beyond overcoming these local subroutines, the global mixing time of the algorithm relies on the geometric routing of the Markov chain. By explicitly identifying the connected covering subgraph $\tilde{G}(\tilde{V}, \tilde{E})$ with the underlying local model of the physical system, its topology is strictly that of a $D$-dimensional lattice. This physical equivalence allows us to bound the abstract routing diameter as $\text{diam}(\tilde{G}) = \mathcal{O}(N^{1/D})$. 
    
    Substituting these three bounds into Thm.~\ref{thm:sampling_cexp} yields
$\mathcal{\tilde{O}}(N^{3+1/D} \varepsilon^{-3} \delta^{-1})$, so the anchor
contribution $\mathcal{\tilde{O}}(N^{3} \varepsilon^{-2} \delta^{-1})$ being
subleading. The anchor estimator is part of the base algorithm, so the median
amplification of Thm.~\ref{thm:sampling_cexp} applies unchanged and replaces
$\delta^{-1}$ by $\log \delta^{-1}$, concluding the proof.
\end{proof}

\subsection{Overcoming volume overheads via radial importance sampling}
\label{subsec:importance_sampling}

While our Markov chain framework successfully avoids the exhaustive global enumeration of clusters, applying it directly to long-range quantum systems presents us with a difficulty: their dense connectivity means that even generating a single \textit{local} polymer sample, or evaluating the first-order anchor $A_r$, becomes inefficient if one must exhaustively explore the entire interaction space. 

By sampling interaction distances from a heavy-tailed distribution
$p(R) \propto R^{-\tilde{\alpha}}$, we can effectively restrict the local degree
to a bounded quantity $\Delta_R = \mathcal{O}(R^{D(k-1)})$ (Eq. \eqref{eq:DeltaR}), which for two-body
interactions reads $\mathcal{O}(R^{D})$. The radial exponent $\tilde{\alpha}$ is
a free algorithmic parameter, denoted with a tilde to distinguish it from the
physical decay exponent $\alpha$. It must lie in an admissible window, specified
in App. \ref{app:longrangesampler}, which is non-empty precisely when $\alpha > kD$. Every
subroutine below uses this same choice.

This dimensional reduction allows us to generate exact local samples without fully exploring the dense, all-to-all tails of the system, and is the key to converting deterministic inefficient searches into efficient randomized estimators that run in polynomial time.

\subsubsection{A Long-Range Polymer Sampler}\label{subsec::Polymer_sampler}
In order to apply Thm. \ref{thm:sampling_cexp} we need a local polymer sampler.
We construct one by instantiating the radial importance sampling of
Sec. \ref{subsec:importance_sampling}: the first step of the algorithm draws a
maximum interaction length as
\begin{equation}
    \Pr(R) = \frac{R^{-\tilde{\alpha}}}{\zeta(\tilde{\alpha})},
    \label{eq:radial_distribution}
\end{equation}
and the exploration is then confined to terms of spatial extent at most $R$.
With it, we show the following.

\begin{lemma}\label{lem:sampling_algorithm}
		Given a Hamiltonian as in Sec. \ref{sec:quantumHamilton}, with $\alpha > kD$, and $\beta \leq \beta_{\text{sampling}}$, such that Lem. \ref{lem::koteckypreiss_longrange} holds, there exists a randomized algorithm for (perfect) sampling of polymers according to $\nu_{v} (\gamma)$ in runtime $\mathcal{O}(1)$.
	\end{lemma}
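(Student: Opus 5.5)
The plan is to build the sampler by rejection/importance sampling driven by the radial variable $R$ of Eq.~\eqref{eq:radial_distribution}. The goal is a perfect draw from $\nu_v$ that never touches the dense tail.

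\textbf{Decomposition by radius.} Assign to each polymer $\gamma\nsim v$ its spatial extent $R(\gamma)$, the largest distance between any two sites covered by the supports of its terms $h_Z$. Then $\nu_v(\gamma)=\vert w_\gamma\vert$ splits as a mixture over $R$. I would write
\begin{align}
\vert w_\gamma\vert = \Pr(R(\gamma))\cdot \frac{\vert w_\gamma\vert}{\Pr(R(\gamma))},
\end{align}
and aim to dominate the second factor by a local proposal weight $q_R(\gamma)$. This proposal is defined only on polymers whose terms all have extent $\le R$ and which live in a ball of radius $\mathcal{O}(\norm{\gamma} R)$ around $v$.

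\textbf{Proposal and majorant.} Given $R$, grow $\gamma$ as a branching process rooted at $v$. At each step, choose the number of new terms and pick each among the $\Delta_R=\mathcal{O}(R^{D(k-1)})$ terms of extent $\le R$ incompatible with the current terms. The polymer majorant from App.~\ref{app:kp_longrange} is built from the decay in Eq.~\eqref{eq:longrange_definition}, roughly $\bar w_\gamma\lesssim\prod_{Z\in\gamma}(c\beta g)(1+d_Z)^{-\alpha}$. I would then show
\begin{align}
\bar w_\gamma \le C\,\Pr(R(\gamma))\, q_{R(\gamma)}(\gamma),
\end{align}
with $C=\mathcal{O}(1)$. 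The argument is that the single longest term contributes $(1+R)^{-\alpha}$, which beats $R^{-\tilde\alpha}\cdot\Delta_R^{-1}\sim R^{-\tilde\alpha-D(k-1)}$ whenever $\tilde\alpha<\alpha-D(k-1)$. Normalizability of $\Pr(R)$ in turn needs $\tilde\alpha>D$. This is exactly the window $D<\tilde\alpha<\alpha-D(k-1)$, which is nonempty iff $\alpha>kD$. The remaining terms, each with extent $\le R$, are paid for by the per-step factor $\beta g\,u$, which is small for $\beta\le\beta_{\rm sampling}$, as in the proof of Lem.~\ref{lem::koteckypreiss_longrange}.

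\textbf{Acceptance step.} Given the proposed $\gamma$, we compute $w_\gamma$ exactly. This costs $\mathcal{O}(1)$ expected time, since $\norm{\gamma}\le m$ and the size distribution of the branching process has exponential tails. We then accept with probability
\begin{align}
\frac{\vert w_\gamma\vert}{C\Pr(R)q_R(\gamma)}\le 1.
\end{align}
On rejection we output $\emptyset$ rather than retrying. The total probability of $\emptyset$ is then $1-\sum_\gamma\vert w_\gamma\vert/C'$, so to get $\nu_v$ exactly I would absorb the constant by defining the accept ratio with the true normalisation. Concretely, the KP condition gives $\sum_{\gamma\nsim v}\vert w_\gamma\vert\le a(v)$, and every polymer must be reachable with proposal mass at least $\vert w_\gamma\vert$. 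A cleaner alternative is to choose $q_R$ so that $C\le 1$ after rescaling $\beta_{\rm sampling}$, which makes the sampler perfect with no retry loop.

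\textbf{Cost and main obstacle.} The expected cost is $\mathbb{E}_R[\text{cost}\mid R]$. Here the cost given $R$ is dominated by enumerating or indexing the $\Delta_R$ candidates; uniform indexing can be done in $\mathcal{O}(1)$ by arithmetic on the lattice, so the cost is only the expected number of generated terms, which is $\mathcal{O}(1)$. The main obstacle I expect is the domination inequality for polymers containing several long terms of comparable length. Charging only the longest term to $\Pr(R)$ requires controlling the combinatorics of the other terms at scale $\le R$ without reintroducing a factor $\Delta_R$ per term. I would handle this by letting the proposal pick each additional term with probability proportional to its own decay $(1+d_Z)^{-\alpha}$ normalised by $u$, rather than uniformly. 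That sub-step is an $\mathcal{O}(1)$ draw, again via a radial variable, since $\alpha>D$.
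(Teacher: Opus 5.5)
Your overall architecture matches the paper's: draw $R$, propose a polymer locally, accept against the weight majorant, and output $\emptyset$ on rejection. The acceptance step, however, cannot be carried out as you set it up. A filter $\vert w_\gamma\vert/(C\Pr(R)\,q_R(\gamma))$ gives a perfect sampler only if you can evaluate $q_R(\gamma)$, the exact probability that the proposal returns the \emph{set} $\gamma$. In a branching process where each term picks children (uniformly among the $\Delta_R$ candidates or weighted by decay), the same $\gamma$ arises in many ways. It comes from every spanning tree of its incompatibility graph, rooted at any admissible seed, in any order, and also from histories that re-pick terms already present. So $q_R(\gamma)$ is a sum over all of these histories, has no product form, and you give no way to compute it, let alone in $\mathcal{O}(1)$.

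The missing idea is the percolation-type exploration of \cite{blanca2023fast} that the paper uses. Pick a seed $u\nsim v$ with probability $q_u$, then run a BFS in which every neighbour is examined \emph{exactly once} and kept independently with probability $q_Z$. Then $\Pr(\gamma\mid u,R)=\prod_{Z\in\gamma\setminus\{u\}}q_Z\prod_{Z'\in\partial_R\gamma}(1-q_{Z'})$, whatever the exploration order. Summing over seeds only multiplies this by $n_v(\gamma)=\vert\gamma\cap\mathcal{R}_R(v)\vert$, which the filter divides out. The price is that the whole frontier $\partial_R\gamma$ must be examined, at cost up to $\Delta_R\norm{\gamma}$. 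This is why the radius truncation is needed at all, and it is the true source of the lower end $\tilde\alpha>D(k-1)+1$, since one needs $\sum_R\Pr(R)\Delta_R<\infty$. Your lower end is wrong: normalizability of $\Pr(R)$ only needs $\tilde\alpha>1$. With your cost accounting your window would already be nonempty for $\alpha>D(k-1)+1$, so your match with $\alpha>kD$ is not actually derived.

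The radial bookkeeping also needs repair. Dividing by the point mass $\Pr(R(\gamma))$ is consistent only if you also reject whenever the drawn $R$ differs from the extent of the proposed polymer. Otherwise a polymer proposable at every $R\ge R(\gamma)$ is counted once for each such $R$. The paper instead divides by the tail $\Pr(R\ge d^*(\gamma))$, with $d^*(\gamma)=\max_{Z\in\gamma}d(Z)$. It builds the compensation into every term, $q_Z=\tau(\mathrm{e}^{\beta g}-1)\norm{h_Z}/\bigl(g\Pr(R\ge d(Z))\bigr)$ with $\tau\ge\mathrm{e}$. The weight bound $\vert w_\gamma\vert\le(\mathrm{e}^{\beta g}-1)^{\norm{\gamma}}\prod_{Z\in\gamma}\norm{h_Z}/g$ then reads $\vert w_\gamma\vert\le\tau^{-\norm{\gamma}}\prod_{Z}q_Z\Pr(R\ge d(Z))\le\tau^{-\norm{\gamma}}\Pr(R\ge d^*(\gamma))\prod_Z q_Z$. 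This disposes of your ``several long terms'' obstacle at once, with no per-term factor of $\Delta_R$. Finiteness of $u'=\max_i\sum_{Z\ni i}\norm{h_Z}/\bigl(g\Pr(R\ge d(Z))\bigr)$ is what forces $\tilde\alpha<\alpha+1-D$, and $\beta\le\beta_{\text{loc}}$ keeps $\sum_{Z'\nsim Z}q_{Z'}\le 1/2$.

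Combined with $\prod_{Z'\in\partial_R\gamma}(1-q_{Z'})\ge\mathrm{e}^{-\norm{\gamma}}$ and $n_v(\gamma)\ge1$, the filter is at most $(\mathrm{e}/\tau)^{\norm{\gamma}}\le1$. The constant is therefore exactly $1$, so outputting $\emptyset$ on rejection gives $\nu_v$ exactly; this is the ``rescale until $C\le1$'' step you left open. Finally, computing $w_\gamma$ costs $d_{\mathcal{H}}^{3\norm{\gamma}}$, so exponential tails of the proposed size are not enough on their own. The tail must beat $d_{\mathcal{H}}^{3}$ per term, which is the separate condition $\beta\le\beta_{\text{diag}}$ built into $\beta_{\text{sampling}}$. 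The cutoff $\norm{\gamma}\le m$ alone only gives a $\mathrm{poly}(N\varepsilon^{-1})$ worst case.
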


\noindent Both the algorithm and the proof is shown in App. \ref{app:longrangesampler}, but we sketch it here.

The scheme is inspired by the finite-range sampler of \cite{blanca2023fast},
which grows a polymer by a breadth-first search visiting the neighbours of each
explored term uniformly. That construction relies on having a bounded degree $\Delta$, which
keeps both the cost of one step and the explored boundary under control, and it
samples polymers containing a prescribed vertex. Neither survives in the
long-range regime, where every term interacts with the whole system and the
boundary of any polymer is the entire graph.

We overcome this with three modifications. First, we draw a maximum interaction radius $R$ from a heavy-tailed
distribution and restrict the search to terms of spatial extent at most $R$, since the
resulting subgraph has bounded degree for each fixed $R$. Second, for a given $R$ the
exploration is done as follows: each candidate term is accepted with a
probability proportional to its interaction strength, so that weak and distant terms are visited correspondingly rarely. Third, we grow the polymer from a seed incompatible with $v$, so that the output is a polymer connected to $v$ rather
than containing it, which is what the cluster chain requires.

Each of these biases the law of the generated polymer, and a final acceptance
filter is designed to undo the three of them at once: the bias of the weighted
search, the multiplicity of seeds that could have produced the same polymer, and
the conditioning on the radius. Once these corrections are applied the output
law is exactly $\nu_v(\gamma)$, so the sampler is perfect rather than
approximate.

Two requirements then constrain the parameters: the filter must be a valid
probability, and the expected cost of the search must be finite. Both of these impose constraints on the temperature scale $\beta \leq \beta_{\text{sampling}}$ and the decay exponent, requiring $\alpha > kD$.

\subsubsection{Randomized Anchor Estimation}\label{subsec::anchorestimation}
Having established that the local polymer updates and weight computations require strictly $\mathcal{O}(1)$ expected time, the remaining computational bottleneck of the algorithm lies in the global annealing procedure (Sec. \ref{subsec:eff_estimat_annealing}). Specifically, to calibrate the annealing parameters, one must evaluate the global anchor quantity $A_r(\ell)$, which involves a sum over all possible Hamiltonian terms $h_Z$ of the system. Since an exact deterministic computation of this global sum is already expensive, we introduce a randomized estimation strategy.
\begin{lemma}\label{lemma::longrange_anchor}
    Let $\alpha \geq kD$. There exists a randomized algorithm that computes an estimator $\widehat{A_r}(\ell)$ of the anchor of the annealing procedure of Sec. \ref{subsec:eff_estimat_annealing}

    \begin{align*}
        A_r (\ell) = \mathrm{e}^{-\ell} \sum_{Z} \vert w_Z \vert,
    \end{align*}
such that $\vert A_r (\ell) -  \widehat{A_r} (\ell) \vert \leq \varepsilon_{\text{anchor}}$ with probability at least $1-\delta_{\text{anchor}}$, in expected runtime $\mathcal{O}(N^2 \varepsilon_{\text{anchor}}^{-1} \delta_{\text{anchor}}^{-1})$.
    
\end{lemma}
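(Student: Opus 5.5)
We treat the anchor $A_r(\ell)=\mathrm{e}^{-\ell}\sum_Z |w_Z|$ as a sum over the $\mathcal{O}(N^k)$ single-term polymers and estimate it by importance sampling. We never enumerate all terms. Instead we draw terms from a proposal built from the same radial distribution $\Pr(R)\propto R^{-\tilde\alpha}$ of Eq. \eqref{eq:radial_distribution}, and reweight.

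\textbf{Construction of the proposal.} The estimator is built in three steps.
\begin{enumerate}[label=(\roman*)]
\item Pick an anchoring site $i\in\Lambda$ uniformly, with probability $1/N$.
\item Draw a radius $R$ from $\Pr(R)$.
\item Pick uniformly one of the $\Delta_R=\mathcal{O}(R^{D(k-1)})$ supports $Z\ni i$ of diameter at most $R$ (Eq. \eqref{eq:DeltaR}), and keep only those with $\mathrm{diam}(Z)\in(R-1,R]$.
\end{enumerate}
This gives a proposal $q(Z)$. Its value can be computed exactly from $i$, $\mathrm{diam}(Z)$ and the lattice counting, up to the multiplicity $|Z|\le k$ of admissible anchoring sites. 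Since the weight of a single-term polymer costs $\mathcal{O}(1)$ to evaluate, the unbiased one-sample estimator is $X=\mathrm{e}^{-\ell}|w_Z|/q(Z)$, with $\mathbb{E}_q[X]=A_r(\ell)$.

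\textbf{Second-moment bound (main obstacle).} We bound
\[
\mathbb{E}_q[X^2]=\mathrm{e}^{-2\ell}\sum_Z |w_Z|^2/q(Z).
\]
After the energy shift, $|w_Z|\lesssim \beta\norm{h_Z}\lesssim (1+d)^{-\alpha}$ for a term of diameter $d$. We also have $q(Z)\gtrsim N^{-1}d^{-\tilde\alpha}d^{-D(k-1)}$. The number of terms of diameter $\sim d$ is $\lesssim N d^{D(k-1)+D-1}$. Putting these together,
\[
\mathbb{E}_q[X^2]\lesssim N^2\sum_d d^{\,2D(k-1)+D-1+\tilde\alpha-2\alpha}.
\]
This sum is $\mathcal{O}(1)$ provided $\tilde\alpha<2\alpha-2D(k-1)-D$. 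This is compatible with the admissible window of App. \ref{app:longrangesampler} whenever $\alpha\ge kD$; we will check this against the stated window. Hence $\mathbb{E}[X^2]=\mathcal{O}(N^2)$, using $\mathrm{e}^{-\ell}\le 1$.

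\textbf{Sample count and runtime.} Average $n$ independent copies of $X$. Chebyshev's inequality (Lem. \ref{lemma:chebyshev}) requires $n\gtrsim N^2\varepsilon_{\text{anchor}}^{-2}\delta_{\text{anchor}}^{-1}$. To reach the claimed $\varepsilon_{\text{anchor}}^{-1}$ dependence, we exploit the relative structure of the error. The variance actually scales like $N\cdot A_r(\ell)\cdot \mathrm{e}^{-\ell}\max|X|$, since $\mathbb{E}[X^2]\le \mathbb{E}[X]\,\sup X$. We also have $\sup X\lesssim N\mathrm{e}^{-\ell}$ when $\tilde\alpha\le\alpha-D(k-1)$. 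Then $\mathrm{Var}(X)\le A_r(\ell)\,N\mathrm{e}^{-\ell}$, and with $A_r(\ell)\lesssim N\mathrm{e}^{-r\ell}$ the required $n$ becomes $\mathcal{O}(N^2\mathrm{e}^{-2\ell}\varepsilon_{\text{anchor}}^{-2}\delta^{-1}_{\text{anchor}})$. By the choice of $\ell$ in Eq. \eqref{eq:ellchoice}, $\mathrm{e}^{-\ell}\lesssim(\varepsilon_{\text{anchor}}/N)^{1/(r+1)}$, which absorbs one factor of $\varepsilon_{\text{anchor}}^{-1}$ when $r=1$. Each sample has $\mathcal{O}(1)$ expected cost, namely one radius draw and one uniform choice, which gives the stated expected runtime $\mathcal{O}(N^2\varepsilon_{\text{anchor}}^{-1}\delta_{\text{anchor}}^{-1})$.

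The delicate points are two: fitting $\tilde\alpha$ simultaneously into the variance window and the sup bound, and tracking the $\mathrm{e}^{-\ell}$ bookkeeping that turns $\varepsilon^{-2}$ into $\varepsilon^{-1}$.
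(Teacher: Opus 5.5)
Your route differs from the paper's. Once the repairs below are made it proves the statement, and in fact a stronger one.

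\textbf{Comparison with the paper.} The paper randomizes only the distance shell. It draws $t\sim\Pr(t)$ and evaluates $S(t)=\Pr(t)^{-1}\mathrm{e}^{-\ell}\sum_{Z:d(Z)=t}|w_Z|$ exactly, by enumerating the $\mathcal{O}(Nt^{D(k-1)})$ terms of that shell. Each sample therefore costs $\mathcal{O}(N)$ in expectation, which is where its lower end $\tilde\alpha>D(k-1)+1$ comes from. It obtains $\mathbb{E}[S^2]=\mathcal{O}(N\varepsilon_{\text{anchor}})$ through the same bookkeeping $\mathrm{e}^{-\ell}\le(\varepsilon_{\text{anchor}}/N)^{1/2}$ that you use, so it too needs $\ell$ to obey Eq.~\eqref{eq:ellchoice} with $r=1$; you are right to make this explicit. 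You additionally randomize the anchoring site and the support, so a sample can cost $\mathcal{O}(1)$. Your bound $\mathbb{E}[X^2]\le\mathbb{E}[X]\sup X\lesssim(N\mathrm{e}^{-\ell})^2$ is of the same order as the paper's second moment. Enumerating a whole shell buys nothing, since $\mathbb{E}[S^2]\ge A_r(\ell)^2=\Theta(N^2\mathrm{e}^{-2\ell})$ anyway. If you keep the factor $1/N$ in $\mathrm{e}^{-2\ell}\le\varepsilon_{\text{anchor}}/N$, which you dropped, you need only $\mathcal{O}(N\varepsilon_{\text{anchor}}^{-1}\delta_{\text{anchor}}^{-1})$ samples. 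Your estimator then runs in $\mathcal{O}(N\varepsilon_{\text{anchor}}^{-1}\delta_{\text{anchor}}^{-1})$ time, a factor $N$ below the claim.

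\textbf{Repairs.} There are three.

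First, delete your first second-moment computation and let $\mathbb{E}[X^2]\le\mathbb{E}[X]\sup X$ carry the proof. That computation overcounts the supports of diameter $\sim d$: there are $\mathcal{O}(Nd^{D(k-1)-1})$ of them, not $Nd^{D(k-1)+D-1}$. Its claimed compatibility with the sampler window $(D(k-1)+1,\alpha+1-D)$ of App.~\ref{app:longrangesampler} is also false: with your exponent it needs $\alpha>(3Dk-2D+1)/2$, i.e.\ $\alpha>2D+\tfrac12$ already for $k=2$.

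Second, $\sup X\lesssim N\mathrm{e}^{-\ell}$ forces $\tilde\alpha\le\alpha-D(k-1)$, which meets that window only if $\alpha>2D(k-1)+1$. So do not reuse the sampler's exponent. Take instead $\tilde\alpha\in(1,\alpha-D(k-1)]$, which is non-empty under $\alpha\ge kD$ except at $D=1$, $\alpha=k$. This choice is legitimate only if the cost of a sample does not grow with $R$. That in turn requires $q(Z)$ to be computable without enumerating the boundary-dependent set of $\Delta_R(i)$ candidates, which ``lattice counting'' does not give you for free. One way to achieve it:
\begin{enumerate}
\item Draw the remaining sites uniformly in the full box of half-width $R$ around $i$ in $\mathbb{Z}^D$; off-lattice or absent supports simply have $w_Z=0$.
\item Output $X=0$, rather than resampling, whenever the diameter falls outside $(R-1,R]$.
\end{enumerate}
Then $q(Z)$ is an explicit function of $R$ and $|Z|$.

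Third, because $R\ge1$ and the shell is $(R-1,R]$, single-site terms are never proposed. Your estimator therefore misses $\mathrm{e}^{-\ell}\sum_i|w_{\{i\}}|$, which is $\Theta(N\mathrm{e}^{-\ell})$ when the model has $\Theta(1)$ local fields, far above $\varepsilon_{\text{anchor}}$. Add these $N$ terms exactly, in $\mathcal{O}(N)$ time. The paper's $\Pr(t)\propto t^{-\tilde\alpha}$ overlooks $t=0$ in the same way.
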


\begin{proof}
    Let us recall the radial distribution $\Pr(t) \propto t^{-\tilde{\alpha}}$ of Eq. \eqref{eq:radial_distribution} and the random variable $S(t)= \Pr(t)^{-1}\, \mathrm{e}^{-\ell}\sum_{Z: d(Z) = t} \vert w_Z \vert$, where we recall the definition $d(Z)= \max_{(i,j)}d_{ij}$. Note the following

    \begin{align}
        \mathbb{E}_{\Pr(t)}[S(t)] = \sum_t  \mathrm{e}^{-\ell}\sum_{Z: d(Z) = t} \vert w_Z \vert = A_r (\ell).
    \end{align}
    From this observation we develop a randomized protocol to compute $A_r (\ell)$. Recall Sec. \ref{subsec:eff_estimat_annealing}, to see whether we can estimate $S(t)$ efficiently by sampling from $\Pr(t)$ with success probability $\geq 1-\delta_{\text{anchor}}$

    \begin{align}\label{eq::chebysev_anchorlr}
        \Pr[ & \vert A_r(\ell) -  \widehat{A_r} (\ell) \vert \geq \varepsilon_{\text{anchor}}  ] \\
        &\leq \frac{ \mathbb{E}_{\Pr(t)}[S^2(t)]}{n_{\text{anchor}} \cdot \varepsilon_{\text{anchor}}^2} \leq \delta_{\text{anchor}}.
    \end{align}
The second moment of the estimator is bounded by
\begin{align}\label{eq:AlphaTildeConstrain3}
    \mathbb{E}_{\Pr(t)}[S^2(t)] &\leq \mathcal{O} \left( \sum_t \mathcal{O}(t^{\tilde{\alpha}}) \cdot  N \varepsilon_{\text{anchor}}  \frac{t^{2(D-1)}}{(1+t)^{2\alpha}} \right) \nonumber \\
    &\leq \mathcal{O} \left( N \varepsilon_{\text{anchor}} \sum_t t^{\tilde{\alpha}-2(\alpha+1-D)}  \right) \nonumber \\
    &\leq \mathcal{O}(N \varepsilon_{\text{anchor}}).
\end{align}
The sum over $t$ converges whenever $\tilde{\alpha} < 2(\alpha+1-D)-1$, which is implied by the admissible window of Sec. \ref{subsec:importance_sampling}. The first inequality in Eq. \eqref{eq:AlphaTildeConstrain3} relies on the following upper bound for $S(t)$
\begin{align}
    S(t) &\leq \mathcal{O}(t^{\tilde{\alpha}}) \left(\mathrm{e}^{\beta g} -1 \right) \mathrm{e}^{-\ell} \sum_{Z: d(Z) = t} \frac{\norm{h_Z}}{g} \nonumber \\
    &\leq \mathcal{O}(t^{\tilde{\alpha}}) \left( \mathrm{e}^{\beta^{**} g} -1 \right) \mathrm{e}^{-\ell} \sum_{i \in \Lambda} \sum_{j : d_{ij}=t} \sum_{Z \ni (i,j)} \frac{\norm{h_Z}}{g} \nonumber \\
    &\leq \mathcal{O}(t^{\tilde{\alpha}}) \left(\mathrm{e}^{\beta^{**} g} -1 \right) \mathrm{e}^{-\ell} N t^{D-1} \cdot \frac{1}{(1+t)^{\alpha}} \nonumber \\
    &\leq C'\mathcal{O}(t^{\tilde{\alpha}}) \varepsilon_{\text{anchor}}^{1/2} N^{1/2}  \frac{t^{D-1}}{(1+t)^{\alpha}}, 
\end{align}
where, for a fixed reference site $i \in \Lambda$, we upper-bounded the number of target sites $j$ at distance $d(Z) = t$ by its geometric surface scaling ($\propto t^{D-1}$), with $C'$ being a positive geometric constant.

Operationally, at each randomized step, the algorithm must deterministically evaluate $S(t)$. The expected runtime of this subroutine is
\begin{align}
    \mathbb{E}_{\Pr(t)}[\text{time to compute } S(t)] \leq& \mathcal{O}\left( N \sum_t \Pr(t) t^{D(k-1)} \right) \nonumber \\ \leq& \mathcal{O}(N).
\end{align}
Here, enumerating the $k$-local terms with $d(Z)=t$ containing a fixed site requires choosing the remaining $k-1$ sites inside the ball of radius $t$, giving $\mathcal{O}(t^{D(k-1)})$ terms, the same count that bounds $\Delta_R$. The sum therefore converges for $\tilde{\alpha} > D(k-1)+1$, which is precisely the lower end of the admissible window. 

By substituting the variance bound into Chebyshev's inequality (Eq. \eqref{eq::chebysev_anchorlr}), we find that the required number of samples is
\begin{align}
    n_{\text{anchor} } \geq \mathcal{O}(N \varepsilon_{\text{anchor}}^{-1} \delta_{\text{anchor}}^{-1}).
\end{align}
Multiplying this sample count by the expected cost per sample yields the total algorithmic complexity for the randomized anchor estimation
\begin{align}
    \mathbb{E}&[\text{runtime of randomized anchor}] \nonumber \\
    &\leq n_{\text{anchor} } \cdot \mathbb{E}[\text{time to compute } S(r)] \nonumber \\
    &\leq \mathcal{O}(N^2\varepsilon_{\text{anchor}}^{-1} \delta_{\text{anchor}}^{-1}),
\end{align}
which concludes the proof.
\end{proof}

\subsection{Application to finite-range interactions}
\label{subsec:finite_range}
It is important to emphasize that our general framework for long-range Hamiltonians inherently encompasses standard short-range quantum models as a special case. While the machinery of radial importance sampling is strictly necessary to overcome the dense connectivity of long-range interactions, our cluster sampling algorithm simplifies significantly when restricted to the short-range regime. 

Consider a $k$-local Hamiltonian on a $D$-dimensional lattice $G$, where interactions are strictly bounded by a finite cut-off distance (e.g., nearest-neighbor interactions). In this regime, the interaction graph has a strictly bounded degree $\Delta = \mathcal{O}(1)$, and the effective volume scales trivially as $|\tilde{V}(\tilde{G})| \propto N$. 
Consequently, the radial importance sampling subroutines are no longer required:
\begin{enumerate}
    \item The deterministic computation of the anchor $A_r(\ell)$ requires only $\mathcal{O}(N \cdot \Delta) = \mathcal{O}(N)$ runtime.
    \item The local polymer sampler can perfectly generate subgraphs in $\mathcal{O}(1)$ time by simply exploring the local finite-degree neighborhood, matching the structure of previous works \cite{blanca2023fast}.
\end{enumerate}

Applying Theorem \ref{thm:sampling_cexp} to this simplified setting immediately yields the following formal result:

\begin{coro}[Complexity for short-range quantum models]
\label{coro:short_range_complexity}
For any finite-range quantum system defined on a $D$-dimensional lattice at high temperatures ($\beta \le \min(\beta^{**}, \beta_{\text{sampling}}) \equiv \mathcal{O}(1)$), there exists a randomized algorithm that outputs an estimate of $\log Z_\beta$ with additive error $\varepsilon$ with expected runtime

\begin{align}
     \mathcal{\tilde{O}}\left( N^{3+1/D} \varepsilon^{-3} \right).
\end{align}

\end{coro}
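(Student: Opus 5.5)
The corollary is a direct specialization of Thm.~\ref{thm:sampling_cexp}, so the plan is to check each of its hypotheses for a finite-range $k$-local Hamiltonian on a $D$-dimensional lattice and then substitute the resulting parameters into its runtime bound
\[
\tilde{\mathcal O}\bigl(C_w\,\mathrm{diam}(\tilde G)\,|\tilde V(\tilde G)|^3\varepsilon^{-3}+|V(G)|\bigr).
\]

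\textbf{Step 1: the polymer model.} Write $\log Z_\beta$ as the subset polymer model on $G_{\mathrm{int}}$ of Sec.~\ref{sec:quantumHamilton}. Because the range is finite, $\Delta=\mathcal O(1)$ and $|V(G_{\mathrm{int}})|=\mathcal O(N)$.

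\textbf{Step 2: the Cluster Sampling condition.} Con.~\ref{con:clustersamplingcondition} follows from Cor.~\ref{coro::clustersampling_longrange}. A finite-range model satisfies Eq.~\eqref{eq:longrange_definition} for every $\alpha$, in particular for $\alpha>2D$ and $\alpha>kD$, so the corollary applies for $\beta\le\beta^{**}$.

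\textbf{Step 3: uniform local weights.} To obtain Con.~\ref{ass::uniform_local_weights}, apply the same shift $h_Z\to h_Z-(2+w_Z)\|h_Z\|\mathbb I$ used in the proof of Thm.~\ref{thm:algorithm_clusterexp_lr}. This only rescales $g$ by an $\mathcal O(1)$ factor and so changes the threshold temperature by a constant. Take $\tilde G=G_{\mathrm{int}}$, or its local skeleton. Then $|\tilde V|=\Theta(N)$, the single-term polymers $\gamma_v$ are distinct with weights $\Omega(1)$, and $r=1$.

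\textbf{Step 4: the remaining inputs.} Regularity (Con.~\ref{ass::weight_regularity}) holds because $\bar w_\gamma=e^{-c\|\gamma\|}$ with $c=\mathcal O(1)$. For the perfect sampler of $\nu_v$, use the bounded-degree BFS sampler of \cite{blanca2023fast}, or equivalently Lem.~\ref{lem:sampling_algorithm} with $R$ capped at the interaction range. This gives $C_w=\mathcal O(1)$ in expectation for $\beta\le\beta_{\mathrm{sampling}}$. The anchor $B_r$ is a sum over the $\mathcal O(N)$ single terms, each of size $\mathcal O(1)$, so it costs $\mathcal O(N)$ deterministically and the $|V(G)|$ term of the bound is $\mathcal O(N)$. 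Finally, $\tilde G$ is a $D$-dimensional lattice on $\Theta(N)$ sites, so $\mathrm{diam}(\tilde G)=\mathcal O(N^{1/D})$.

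\textbf{Step 5: substitution.} Plugging $C_w=\mathcal O(1)$, $|\tilde V|=\Theta(N)$, $\mathrm{diam}(\tilde G)=\mathcal O(N^{1/D})$ and $|V(G)|=\mathcal O(N)$ into the runtime of Thm.~\ref{thm:sampling_cexp} gives $\tilde{\mathcal O}(N^{3+1/D}\varepsilon^{-3}\log\delta^{-1})$, which is the stated bound for constant $\delta$.

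\textbf{Main obstacle.} The delicate point is the expected-cost bound $C_w=\mathcal O(1)$ for the perfect sampler of $\nu_v$. The sampler must return polymers \emph{incompatible with} $v$, not polymers containing $v$ as in \cite{blanca2023fast}. I would handle this by seeding the exploration at a uniformly chosen neighbour of $v$ and correcting the resulting seed multiplicity with an acceptance filter, as in Lem.~\ref{lem:sampling_algorithm}. The expected BFS size is then controlled by the Kotecký--Preiss sum with $d(\gamma)=2\log2\|\gamma\|$, which bounds $\sum_\gamma|w_\gamma|\,2^{\|\gamma\|}$ over polymers incompatible with $v$. A further subtlety is that the theorem gives a high-probability runtime while the corollary speaks of expected runtime. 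Since $C_w$ and the Ursell-function cost are bounded in expectation, Wald-type linearity across the calls transfers the bound to an expected runtime.
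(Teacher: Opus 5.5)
Your proposal is correct and follows the paper's route: the paper obtains the corollary by specializing Thm.~\ref{thm:sampling_cexp} to the bounded-degree case, using $C_w=\mathcal O(1)$ from a local finite-degree polymer sampler, a deterministic $\mathcal O(N)$ anchor, $|\tilde V(\tilde G)|=\Theta(N)$ and $\mathrm{diam}(\tilde G)=\mathcal O(N^{1/D})$. Your explicit checks of the hypotheses make explicit what the paper leaves implicit: Con.~\ref{con:clustersamplingcondition} via Cor.~\ref{coro::clustersampling_longrange}, the energy shift giving Con.~\ref{ass::uniform_local_weights}, weight regularity, and seeding the sampler at a neighbour of $v$ rather than at $v$ itself.
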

In this case, the improvement we achieve is that of a fixed polynomial scaling independent of the degree, unlike exhaustive ennumeration approaches \cite{Mann2021,Haah_2024}.

\section{Further applications}
\label{sec_futherapplications}
The randomized approach considered here strictly improves upon the structural restrictions of prior deterministic frameworks to compute cluster expansions, such as those in~\cite{mann2024algorithmic,Mann2021,zlokapa_rigorous_2026}. By replacing exhaustive enumeration with spatial importance sampling, we completely bypass the $\left(\vert V(G) \vert \varepsilon^{-1}\right)^{\mathcal{O}(\log \Delta)}$ computational bottlenecks, allowing these methods to extend efficiently to dense and long-range regimes. Thus, our results have applications in any domain where polymer models with non-positive weights appear.  Immediate applications where this is the case include:

\begin{itemize}
    \item \textit{Tensor-network contractions and beyond-BP corrections.}
Tensor networks are among the most widely used tools for the classical
    simulation of quantum many-body systems, and their reach is largely set by the cost of contracting them. Belief propagation (BP) provides a cheap approximate contraction whose error is controlled by the contribution of closed loops, and recent works have shown that loop and cluster expansions systematically correct BP and, in suitable regimes, provide analytic guarantees for approximate tensor-network contractions~\cite{midha2025beliefpropagationclustercorrectedtensor,evenbly2026loopseries,gray2026loopclusterexpansions}.

    Because these loop-based corrections naturally form a cluster expansion, our sampling framework applies to them directly. By doing so, we completely avoid the exponential dependence on the underlying graph degree $\Delta$. This is a relevant structural advantage: BP is primarily deployed in complex geometries—such as three-dimensional lattices, random graphs, and dense circuit architectures—where standard contraction techniques are unfeasible. Consequently, deterministic enumeration might become a bottleneck exactly where BP techniques are most needed. Our cluster sampling approach directly avoids this algorithmic limitation, expanding the computational reach of the method. This is conditional on the expansion converging.
    
   A previously considered approach is whether Monte Carlo can be used for the contraction of the tensors themselves ~\cite{schuch2008mctensorcontractions}. The difficulty of doing that is then
    governed by the sign structure of its entries ~\cite{chen2025signproblemtn}. Our framework suggests a
    complementary route: whenever the loop cluster expansion around the BP fixed
    point converges absolutely, cluster sampling estimates the correction in
    provable polynomial time and with a variance controlled by the convergence
    criterion itself, rather than by the positivity of the tensor entries. Closest to our proposal is the recent Ref.~\cite{sim2026stochasticloopcorrectionsbelief}, which also estimates loop corrections to the partition function via stochastic sampling. Their method applies to pairwise Markov random fields with symmetric edge potentials (e.g. the 2D ferromagnetic Ising model), where the partition function admits a particularly simple loop expansion. They identify extending the approach to more general systems as an open problem and note that their scheme can suffer from a sign problem. In contrast, our method comes with analytical guarantees whenever the cluster expansion converges, independently of the sign structure of the Hamiltonian.

    \item \textit{Output of short-depth circuits.} The first example of this is the usage of Barvinok's method in \cite{Bravyi_2021} to compute expectation values of extenside observables close to the identity, phrased in the language of polymer models in \cite{mann2024algorithmic}. Additionally, very recent work also writes the sampling probabilities of both noisy and noiseless circuits within this framework \cite{nelson2026polynomialtimeclassicalsamplernoisy,coudron2026polynomialtimeadditiveerrorestimationoutput}. With our approach, the polynomial runtime of the approximation schemes might not depend on the circuit depth or other parameters as long as the convergence condition holds.

    \item \textit{Short-time quantum dynamics.}
    Cluster expansions have been used to simulate quantum dynamics at short times ($t \leq t^*\equiv \mathcal{O}(1)$) with rigorous error control~\cite{Wild2023}. Our framework naturally extends to such settings, and the same sampling machinery can be adapted to evaluate complex amplitudes.

    \item \textit{Weakly-interacting systems at arbitrary temperatures.} 
    Deterministic cluster expansions have established fully polynomial-time approximation schemes for weakly-interacting quantum spin systems~\cite{mann2026efficientalgorithmsweaklyinteractingquantum}.
    
    \item \textit{Low-temperature quantum perturbations.} 
    Similarly, deterministic expansions combined with contour representations can approximate partition functions of low-temperature quantum spin systems that act as stable perturbations of classical models~\cite{Helmuth_2023}.

      \item \textit{High-temperature long-range bosonic systems.}
    Deterministic cluster-expansion algorithms have recently been developed for long-range bosonic Hamiltonians at sufficiently high temperatures~\cite{tong2026complexityclassificationhightemperaturebosons}. We conjecture that our stochastic cluster sampling approach can similarly improve the complexity and break quasipolynomial gaps in these settings.
\end{itemize}

\section{Conclusions}
\label{sec::conclusions}

A primary open question is whether our algorithmic framework can be extended to the long-range regime $\alpha > D$. We anticipate that this is fundamentally achievable, given that the standard Koteck\'y--Preiss criterion remains valid at sufficiently high temperatures in this regime~\cite{Segovia_PRXQuantum2025}. Currently, reaching this threshold is hindered by two technical bottlenecks: our Markov chain dynamics strictly require the more stringent Con.~\ref{con:clustersamplingcondition}, and our spatial polymer sampler intrinsically requires $\alpha > kD$ (while the full $\alpha > D$ is achieved in \cite{putterman2026quantumthermalstateslook}). We conjecture that by modifying the transition rules of the chain to explicitly exploit the combinatorial structure of the Ursell functions, one could construct an algorithm whose convergence depends solely on the relaxed Con.~\ref{con:kotecky_preiss_criteria}.  

Our approach shares deep conceptual similarities with Diagrammatic Monte Carlo (DiagMC) \cite{DiagMC_Van_Houcke_2010}. Both paradigms avoid sampling physical configurations with potentially negative or complex weights, and instead directly sample terms in an analytical expansion: DiagMC typically targets Feynman diagrams, often to simulate highly correlated fermionic systems, whereas our method evaluates polymers on a discrete lattice. Although DiagMC is highly efficient in practice, rigorous performance guarantees remain limited: while there are partial analytical convergence results for specific diagrammatic schemes and regimes \cite{Rossi_2017, Rossi_PhysRevLett.121.130405, RossiDiagMC_PhysRevLett.119.045701}, to the best of our knowledge general bounds on Markov chain mixing times or convergence radii are still lacking~\cite{chen2025convergencecumulantexpansionpolynomialtime}. The mathematical tools developed here for rigorously bounding cluster MCMC mixing times, as well as those of related works for fermions~\cite{chen2025convergencecumulantexpansionpolynomialtime}, could offer a way to establish formal convergence regimes in DiagMC algorithms.

From the practical side, an immediate next step is to benchmark this randomized cluster-sampling engine against established numerical techniques, such as linked-cluster expansions (NLCE). While NLCE evaluates finite-temperature and dynamical observables in the thermodynamic limit by deterministically enumerating and exactly solving finite subgraphs~\cite{nlce_review,QuantumDynamicsNLCE}, our stochastic approach bypasses the cost of exact enumeration. It would be interesting to benchmark these methods against one another, particularly for long-range models~\cite{chan2026efficientclassicalsimulationtwodimensional}.

Our framework introduces an alternative variation to previous QMC approaches. Many of these QMC methods rely on sampling physical configurations—such as imaginary-time worldlines or stochastic series expansion (SSE) operator strings—and thus frequently succumb to an exponential sign problem when the underlying weights lack positivity \cite{troyer2005computational,signpbm_Pan_2024}. Our approach operates entirely within an abstract expansion space. By reformulating quantum partition functions and observables as convergent cluster expansions, we sample \emph{clusters} rather than physical states. In this regime, the analytic convergence criterion inherently bounds the variance, circumventing exponential blow-ups and systematically yielding an efficient MCMC algorithm even in the presence of non-positive weights. This hints at a natural pathway to tackle quantum problems conjectured to be classically tractable, such as Quantum Max-Cut on bipartite graphs \cite{rayudu2026spectralgapleeyanghamiltonians,wong2026leeyangtensorshamiltoniancomplexity, mann2026fastalgorithmsstoquasticspin}, by mapping them onto convergent expansions.

\textit{Note added} Recent independent work of Putterman, Zlokapa, and Cotler~\cite{putterman2026quantumthermalstateslook} achieves related results for the computation of the partition function of long-range Hamiltonians at high temperature. They use a direct importance-sampling estimator of combinatorial structures that successfully reaches the $\alpha > D$ regime, and is particularly efficient for Pauli Hamiltonians. They achieve a polynomial scaling of the runtime $\mathcal{O}(N^3/\varepsilon^3)$, compared to our $\mathcal{O}(N^{3+1/D}/\varepsilon^3)$. 

Our contribution establishes a basis-independent algorithmic framework by constructing a Markov chain directly on the abstract polymer space.
The main advantage of this is that our approach works for general polymer models, beyond high-temperature partition functions, and thus has a number of further applications (see Sec. \ref{sec_futherapplications}). 
The price to pay is that our algorithm is restricted to work at $\alpha > k D$ in the long-range setting. This is required here for the estimator to have a bounded variance, and for the polymer sampler to be well-defined. Future work will be targeted at addressing that gap, potentially combining ideas from both frameworks.

\begin{acknowledgments}
The authors thank Aram Harrow and Cambyse Rouz\'e for very useful suggestions. 
The work of J.S.S. was supported by the Spanish Research Council, CSIC, through the JAE-PREDOC 2024 program.
A.M.A.  acknowledges support from the Spanish Agencia Estatal de Investigacion through the grants  ``PID2023-150847NA-I00'', ``EUR2025-164823'' and ``PCI2024-153448'', funded by
MICIU/AEI/10.13039/501100011033 and co-funded by the EU. 
This project was funded within the QuantERA II Programme that has received funding from the EU’s H2020 research and innovation programme under the GANo. 101017733.
This publication has also been funded within the framework of the R\&D\&I Project CEX2025-001574-S, funded by MICIU/AEI/10.13039/501100011033. The research presented in this publication falls within the research line ``Quantum Matter and Fields''.

\textbf{AI Usage Acknowledgement} 
The core of the project, and the main ideas and proofs (such as using a modified Kotecky-Preiss condition to bound a Monte-Carlo estimator) were conceived and initially developed without AI usage. Both Claude Opus 5 and ChatGPT-6 Astra were used for polishing technical definitions and proofs of many of the intermediate lemmas, and in particular for the analysis of the mixing time of the Markov chain. Both models were also used in editing and revising the text.  
\end{acknowledgments}

	\bibliographystyle{unsrturl}
	\bibliography{bibliography}

\begin{thebibliography}{10}

\bibitem{troyer2005computational}
Matthias Troyer and U.-J. Wiese.
\newblock Computational complexity and fundamental limitations to fermionic
  quantum monte carlo simulations.
\newblock {\em Physical Review Letters}, 94(17):170201, 2005.
\newblock \href {https://doi.org/10.1103/PhysRevLett.94.170201}
  {\path{doi:10.1103/PhysRevLett.94.170201}}.

\bibitem{loh1990sign}
E.~Y. Loh, J.~E. Gubernatis, D.~J. Scalapino, S.~R. White, R.~T. Scalettar, and
  R.~L. Sugar.
\newblock Sign problem in the numerical simulation of many-electron systems.
\newblock {\em Physical Review B}, 41(13):9301--9307, 1990.
\newblock \href {https://doi.org/10.1103/PhysRevB.41.9301}
  {\path{doi:10.1103/PhysRevB.41.9301}}.

\bibitem{kaul2013sign}
Rahul~K. Kaul and Anders~W. Sandvik.
\newblock The sign problem in quantum monte carlo simulations.
\newblock {\em Physics Reports}, 603:1--44, 2015.
\newblock \href {https://doi.org/10.1016/j.physrep.2015.08.001}
  {\path{doi:10.1016/j.physrep.2015.08.001}}.

\bibitem{zheng2018sign}
B.-X. Zheng et~al.
\newblock The sign problem in quantum monte carlo simulations.
\newblock {\em Journal of Physics: Condensed Matter}, 30(47):473001, 2018.
\newblock \href {https://doi.org/10.1088/1361-648X/aae5e3}
  {\path{doi:10.1088/1361-648X/aae5e3}}.

\bibitem{BravyiTerhal}
Sergey Bravyi and Barbara Terhal.
\newblock Complexity of stoquastic frustration-free hamiltonians.
\newblock {\em SIAM Journal on Computing}, 39(4):1462--1485, 2010.
\newblock \href {https://arxiv.org/abs/https://doi.org/10.1137/08072689X}
  {\path{arXiv:https://doi.org/10.1137/08072689X}}, \href
  {https://doi.org/10.1137/08072689X} {\path{doi:10.1137/08072689X}}.

\bibitem{bravyigosset}
Sergey Bravyi and David Gosset.
\newblock Polynomial-time classical simulation of quantum ferromagnets.
\newblock {\em Phys. Rev. Lett.}, 119:100503, Sep 2017.
\newblock URL: \url{https://link.aps.org/doi/10.1103/PhysRevLett.119.100503},
  \href {https://doi.org/10.1103/PhysRevLett.119.100503}
  {\path{doi:10.1103/PhysRevLett.119.100503}}.

\bibitem{Crosson2021rapidmixingofpath}
Elizabeth Crosson and Aram~W. Harrow.
\newblock Rapid mixing of path integral {M}onte {C}arlo for 1{D} stoquastic
  {H}amiltonians.
\newblock {\em {Quantum}}, 5:395, February 2021.
\newblock \href {https://doi.org/10.22331/q-2021-02-11-395}
  {\path{doi:10.22331/q-2021-02-11-395}}.

\bibitem{Crosson2025classicalsimulation}
Elizabeth Crosson and Samuel Slezak.
\newblock Classical {S}imulation of {H}igh {T}emperature {Q}uantum {I}sing
  {M}odels.
\newblock {\em {Quantum}}, 9:1788, July 2025.
\newblock \href {https://doi.org/10.22331/q-2025-07-09-1788}
  {\path{doi:10.22331/q-2025-07-09-1788}}.

\bibitem{rayudu2025fastmixingoperatorlooppathintegral}
Chaithanya Rayudu and Jun Takahashi.
\newblock Fast mixing of operator-loop path-integral quantum monte carlo for
  stoquastic xy hamiltonians, 2025.
\newblock URL: \url{https://arxiv.org/abs/2509.21683}, \href
  {https://arxiv.org/abs/2509.21683} {\path{arXiv:2509.21683}}.

\bibitem{Marvian2019}
Milad Marvian, Daniel~A. Lidar, and Itay Hen.
\newblock On the computational complexity of curing non-stoquastic
  hamiltonians.
\newblock {\em Nature Communications}, 10(1):1571, Apr 2019.
\newblock \href {https://doi.org/10.1038/s41467-019-09501-6}
  {\path{doi:10.1038/s41467-019-09501-6}}.

\bibitem{Hangleiter_2020}
Dominik Hangleiter, Ingo Roth, Daniel Nagaj, and Jens Eisert.
\newblock Easing the monte carlo sign problem.
\newblock {\em Science Advances}, 6(33), August 2020.
\newblock URL: \url{http://dx.doi.org/10.1126/sciadv.abb8341}, \href
  {https://doi.org/10.1126/sciadv.abb8341} {\path{doi:10.1126/sciadv.abb8341}}.

\bibitem{chen2023efficient}
Chi-Fang Chen, Michael~J. Kastoryano, and András Gilyén.
\newblock An efficient and exact noncommutative quantum gibbs sampler, 2023.
\newblock \href {https://arxiv.org/abs/2311.09207} {\path{arXiv:2311.09207}}.

\bibitem{chen_efficient_2025}
Chi-Fang Chen, Michael Kastoryano, Fernando G. S.~L. Brandão, and András
  Gilyén.
\newblock Efficient quantum thermal simulation.
\newblock {\em Nature}, 646(8085):561--566, October 2025.
\newblock URL: \url{https://www.nature.com/articles/s41586-025-09583-x}, \href
  {https://doi.org/10.1038/s41586-025-09583-x}
  {\path{doi:10.1038/s41586-025-09583-x}}.

\bibitem{rouze_efficient_2026}
Cambyse Rouzé, Daniel Stilck~França, and Álvaro~M. Alhambra.
\newblock Efficient thermalization and universal quantum computing with quantum
  {Gibbs} samplers.
\newblock {\em Nature Physics}, pages 1--5, April 2026.
\newblock URL: \url{https://www.nature.com/articles/s41567-026-03246-y}, \href
  {https://doi.org/10.1038/s41567-026-03246-y}
  {\path{doi:10.1038/s41567-026-03246-y}}.

\bibitem{ding2024efficientquantumgibbssamplers}
Zhiyan Ding, Bowen Li, and Lin Lin.
\newblock Efficient quantum gibbs samplers with kubo–martin–schwinger
  detailed balance condition.
\newblock {\em Communications in Mathematical Physics}, 406(3), February 2025.
\newblock URL: \url{http://dx.doi.org/10.1007/s00220-025-05235-3}, \href
  {https://doi.org/10.1007/s00220-025-05235-3}
  {\path{doi:10.1007/s00220-025-05235-3}}.

\bibitem{kotecky1986cluster}
Roman Koteck{\'y} and David Preiss.
\newblock Cluster expansion for abstract polymer models.
\newblock {\em Communications in Mathematical Physics}, 103:491--498, 1986.
\newblock \href {https://doi.org/10.1007/BF01211762}
  {\path{doi:10.1007/BF01211762}}.

\bibitem{fernandez2007cluster}
Roberto Fern{\'a}ndez and Aldo Procacci.
\newblock Cluster expansion for abstract polymer models. new bounds from an old
  approach.
\newblock {\em Communications in Mathematical Physics}, 274(1):123--140, 2007.
\newblock \href {https://doi.org/10.1007/s00220-007-0279-2}
  {\path{doi:10.1007/s00220-007-0279-2}}.

\bibitem{mann2024algorithmic}
Ryan~L. Mann and Romy~M. Minko.
\newblock Algorithmic cluster expansions for quantum problems.
\newblock {\em PRX Quantum}, 5(1), January 2024.
\newblock URL: \url{http://dx.doi.org/10.1103/PRXQuantum.5.010305}, \href
  {https://doi.org/10.1103/prxquantum.5.010305}
  {\path{doi:10.1103/prxquantum.5.010305}}.

\bibitem{Harrow_2020}
Aram~W. Harrow, Saeed Mehraban, and Mehdi Soleimanifar.
\newblock Classical algorithms, correlation decay, and complex zeros of
  partition functions of quantum many-body systems.
\newblock {\em Proceedings of the 52nd Annual ACM SIGACT Symposium on Theory of
  Computing}, Jun 2020.
\newblock URL: \url{http://dx.doi.org/10.1145/3357713.3384322}, \href
  {https://doi.org/10.1145/3357713.3384322}
  {\path{doi:10.1145/3357713.3384322}}.

\bibitem{Bravyi_2021}
Sergey Bravyi, David Gosset, and Ramis Movassagh.
\newblock Classical algorithms for quantum mean values.
\newblock {\em Nature Physics}, 17(3):337–341, January 2021.
\newblock URL: \url{http://dx.doi.org/10.1038/s41567-020-01109-8}, \href
  {https://doi.org/10.1038/s41567-020-01109-8}
  {\path{doi:10.1038/s41567-020-01109-8}}.

\bibitem{Haah_2024}
Jeongwan Haah, Robin Kothari, and Ewin Tang.
\newblock Learning quantum hamiltonians from high-temperature gibbs states and
  real-time evolutions.
\newblock {\em Nature Physics}, 20(6):1027–1031, March 2024.
\newblock URL: \url{http://dx.doi.org/10.1038/s41567-023-02376-x}, \href
  {https://doi.org/10.1038/s41567-023-02376-x}
  {\path{doi:10.1038/s41567-023-02376-x}}.

\bibitem{Wild2023}
Dominik~S. Wild and \'Alvaro~M. Alhambra.
\newblock Classical simulation of short-time quantum dynamics.
\newblock {\em PRX Quantum}, 4:020340, Jun 2023.
\newblock URL: \url{https://link.aps.org/doi/10.1103/PRXQuantum.4.020340},
  \href {https://doi.org/10.1103/PRXQuantum.4.020340}
  {\path{doi:10.1103/PRXQuantum.4.020340}}.

\bibitem{mann2026efficientalgorithmsweaklyinteractingquantum}
Ryan~L. Mann and Gabriel Waite.
\newblock Efficient algorithms for weakly interacting quantum spin systems.
\newblock {\em Physical Review A}, 114(1), 2026.
\newblock URL: \url{http://dx.doi.org/10.1103/45rz-42gr}, \href
  {https://doi.org/10.1103/45rz-42gr} {\path{doi:10.1103/45rz-42gr}}.

\bibitem{midha2025beliefpropagationclustercorrectedtensor}
Siddhant Midha and Yifan~F. Zhang.
\newblock Beyond belief propagation: Cluster-corrected tensor network
  contraction with exponential convergence.
\newblock {\em PRX Quantum}, 7:033010, Jul 2026.
\newblock URL: \url{https://link.aps.org/doi/10.1103/sdlx-d3tn}, \href
  {https://doi.org/10.1103/sdlx-d3tn} {\path{doi:10.1103/sdlx-d3tn}}.

\bibitem{Segovia_PRXQuantum2025}
Jorge S\'anchez-Segovia, Jan T.~Schneider, and \'Alvaro M.~Alhambra.
\newblock High-temperature partition functions and classical simulatability of
  long-range quantum systems.
\newblock {\em PRX Quantum}, 6:040366, Dec 2025.
\newblock URL: \url{https://link.aps.org/doi/10.1103/cww3-j1vd}, \href
  {https://doi.org/10.1103/cww3-j1vd} {\path{doi:10.1103/cww3-j1vd}}.

\bibitem{zlokapa2026sykthermalexpectationsclassically}
Alexander Zlokapa and Bobak~T. Kiani.
\newblock Syk thermal expectations are classically easy at any temperature,
  2026.
\newblock URL: \url{https://arxiv.org/abs/2602.22619}, \href
  {https://arxiv.org/abs/2602.22619} {\path{arXiv:2602.22619}}.

\bibitem{zlokapa_rigorous_2026}
Alexander Zlokapa.
\newblock A rigorous quasipolynomial-time classical algorithm for {SYK} thermal
  expectations, April 2026.
\newblock arXiv:2604.21089 [quant-ph].
\newblock URL: \url{http://arxiv.org/abs/2604.21089}, \href
  {https://doi.org/10.48550/arXiv.2604.21089}
  {\path{doi:10.48550/arXiv.2604.21089}}.

\bibitem{bergamaschi2026fastmixingalltoallquantum}
Thiago Bergamaschi.
\newblock Fast mixing of all-to-all quantum systems at high temperatures, 2026.
\newblock URL: \url{https://arxiv.org/abs/2606.26090}, \href
  {https://arxiv.org/abs/2606.26090} {\path{arXiv:2606.26090}}.

\bibitem{sim2026stochasticloopcorrectionsbelief}
Gi~Beom Sim, Tae~Hyeon Park, Kwang~S. Kim, Yanmei Zang, Xiaorong Zou, Hye~Jung
  Kim, D.~ChangMo Yang, Soohaeng~Yoo Willow, and Chang~Woo Myung.
\newblock Stochastic loop corrections to belief propagation for tensor network
  contraction, 2026.
\newblock URL: \url{https://arxiv.org/abs/2603.08427}, \href
  {https://arxiv.org/abs/2603.08427} {\path{arXiv:2603.08427}}.

\bibitem{chen2025convergencecumulantexpansionpolynomialtime}
Hongrui Chen, Cambyse Rouzé, Jielun Chen, Jiaqing Jiang, Samuel~O. Scalet,
  Yongtao Zhan, Garnet Kin-Lic Chan, Lexing Ying, and Yu~Tong.
\newblock Convergence of the cumulant expansion and polynomial-time algorithm
  for weakly interacting fermions, 2025.
\newblock URL: \url{https://arxiv.org/abs/2512.12010}, \href
  {https://arxiv.org/abs/2512.12010} {\path{arXiv:2512.12010}}.

\bibitem{DiagMC_Van_Houcke_2010}
Kris Van~Houcke, Evgeny Kozik, N.~Prokof’ev, and B.~Svistunov.
\newblock Diagrammatic monte carlo.
\newblock {\em Physics Procedia}, 6:95–105, 2010.
\newblock URL: \url{http://dx.doi.org/10.1016/j.phpro.2010.09.034}, \href
  {https://doi.org/10.1016/j.phpro.2010.09.034}
  {\path{doi:10.1016/j.phpro.2010.09.034}}.

\bibitem{RossiDiagMC_PhysRevLett.119.045701}
Riccardo Rossi.
\newblock Determinant diagrammatic monte carlo algorithm in the thermodynamic
  limit.
\newblock {\em Phys. Rev. Lett.}, 119:045701, Jul 2017.
\newblock URL: \url{https://link.aps.org/doi/10.1103/PhysRevLett.119.045701},
  \href {https://doi.org/10.1103/PhysRevLett.119.045701}
  {\path{doi:10.1103/PhysRevLett.119.045701}}.

\bibitem{blanca2023fast}
Antonio Blanca, Sarah Cannon, and Will Perkins.
\newblock Fast and perfect sampling of subgraphs and polymer systems.
\newblock 2023.
\newblock URL: \url{https://arxiv.org/abs/2202.05907}, \href
  {https://arxiv.org/abs/2202.05907} {\path{arXiv:2202.05907}}.

\bibitem{chen2021fastalgorithmslowtemperatures}
Zongchen Chen, Andreas Galanis, Leslie~A. Goldberg, Will Perkins, James
  Stewart, and Eric Vigoda.
\newblock Fast algorithms at low temperatures via markov chains.
\newblock {\em Random Structures \& Algorithms}, 58(2):294--321, 2021.
\newblock URL: \url{https://onlinelibrary.wiley.com/doi/abs/10.1002/rsa.20968},
  \href
  {https://arxiv.org/abs/https://onlinelibrary.wiley.com/doi/pdf/10.1002/rsa.20968}
  {\path{arXiv:https://onlinelibrary.wiley.com/doi/pdf/10.1002/rsa.20968}},
  \href {https://doi.org/10.1002/rsa.20968} {\path{doi:10.1002/rsa.20968}}.

\bibitem{mann2026fastalgorithmsstoquasticspin}
Ryan~L. Mann.
\newblock Fast algorithms for stoquastic spin systems, 2026.
\newblock URL: \url{https://arxiv.org/abs/2608.19489}, \href
  {https://arxiv.org/abs/2608.19489} {\path{arXiv:2608.19489}}.

\bibitem{tong2026complexityclassificationhightemperaturebosons}
Xin-Hai Tong and Tomotaka Kuwahara.
\newblock Toward a complexity classification of high-temperature bosons:
  Computational tractability and power-law clustering, 2026.
\newblock URL: \url{https://arxiv.org/abs/2509.25572}, \href
  {https://arxiv.org/abs/2509.25572} {\path{arXiv:2509.25572}}.

\bibitem{putterman2026quantumthermalstateslook}
Harald Putterman, Alexander Zlokapa, and Jordan Cotler.
\newblock When quantum thermal states look classical, 2026.
\newblock URL: \url{https://arxiv.org/abs/2607.28536}, \href
  {https://arxiv.org/abs/2607.28536} {\path{arXiv:2607.28536}}.

\bibitem{bravyi2021complexity}
Sergey Bravyi, Anirban Chowdhury, David Gosset, and Pawel Wocjan.
\newblock Quantum hamiltonian complexity in thermal equilibrium.
\newblock {\em Nature Physics}, 18(11):1367--1370, Nov 2022.
\newblock \href {https://doi.org/10.1038/s41567-022-01742-5}
  {\path{doi:10.1038/s41567-022-01742-5}}.

\bibitem{Mann2021}
Ryan~L. Mann and Tyler Helmuth.
\newblock Efficient algorithms for approximating quantum partition functions.
\newblock {\em Journal of Mathematical Physics}, 62(2), February 2021.
\newblock URL: \url{http://dx.doi.org/10.1063/5.0013689}, \href
  {https://doi.org/10.1063/5.0013689} {\path{doi:10.1063/5.0013689}}.

\bibitem{StatisticalMechanicsofLatticeSystems_friedli_velenik_2017}
Sacha Friedli and Yvan Velenik.
\newblock {\em Statistical Mechanics of Lattice Systems: A Concrete
  Mathematical Introduction}.
\newblock Cambridge University Press, 2017.
\newblock \href {https://doi.org/10.1017/9781316882603}
  {\path{doi:10.1017/9781316882603}}.

\bibitem{levin2017markovmixing}
David~A. Levin, Yuval Peres, and Elizabeth~L. Wilmer.
\newblock {\em {Markov Chains and Mixing Times}}.
\newblock American Mathematical Society, Providence, RI, 2 edition, 2017.

\bibitem{signpbm_Pan_2024}
Gaopei Pan and Zi~Yang Meng.
\newblock {\em The sign problem in quantum Monte Carlo simulations}, page
  879–893.
\newblock Elsevier, 2024.
\newblock URL: \url{http://dx.doi.org/10.1016/B978-0-323-90800-9.00095-0},
  \href {https://doi.org/10.1016/b978-0-323-90800-9.00095-0}
  {\path{doi:10.1016/b978-0-323-90800-9.00095-0}}.

\bibitem{pathcoupling_BubleyDyer1997PathCoupling}
Russ Bubley and Martin Dyer.
\newblock Path coupling: A technique for proving rapid mixing in markov chains.
\newblock In {\em Proceedings of the 38th Annual Symposium on Foundations of
  Computer Science (FOCS)}, pages 223--231. IEEE, 1997.
\newblock \href {https://doi.org/10.1109/SFCS.1997.646111}
  {\path{doi:10.1109/SFCS.1997.646111}}.

\bibitem{kuwahara2020longrange}
Tomotaka Kuwahara and Keiji Saito.
\newblock Lieb--robinson bounds and the speed of information propagation in
  long-range interacting quantum systems.
\newblock {\em Physical Review Letters}, 124(10):100601, 2020.
\newblock \href {https://doi.org/10.1103/PhysRevLett.124.100601}
  {\path{doi:10.1103/PhysRevLett.124.100601}}.

\bibitem{chen_continuous_2023}
Cheng Chen, Guillaume Bornet, Marcus Bintz, Gabriel Emperauger, Lucas Leclerc,
  Vincent~S. Liu, Pascal Scholl, Daniel Barredo, Johannes Hauschild, Shubhayu
  Chatterjee, Michael Schuler, Andreas~M. Läuchli, Michael~P. Zaletel, Thierry
  Lahaye, Norman~Y. Yao, and Antoine Browaeys.
\newblock Continuous symmetry breaking in a two-dimensional {Rydberg} array.
\newblock {\em Nature}, 616(7958):691--695, April 2023.
\newblock URL: \url{https://www.nature.com/articles/s41586-023-05859-2}, \href
  {https://doi.org/10.1038/s41586-023-05859-2}
  {\path{doi:10.1038/s41586-023-05859-2}}.

\bibitem{schauss2012rydberg}
P.~Schau{\ss} et~al.
\newblock Observation of spatially ordered structures in a two-dimensional
  rydberg gas.
\newblock {\em Nature}, 491(7422):87--91, 2012.
\newblock \href {https://doi.org/10.1038/nature11596}
  {\path{doi:10.1038/nature11596}}.

\bibitem{Richerme2014}
Philip Richerme, Zhe-Xuan Gong, Aaron Lee, Crystal Senko, Jacob Smith, Michael
  Foss-Feig, Spyridon Michalakis, Alexey~V. Gorshkov, and Christopher Monroe.
\newblock Non-local propagation of correlations in quantum systems with
  long-range interactions.
\newblock {\em Nature}, 511(7508):198--201, Jul 2014.
\newblock \href {https://doi.org/10.1038/nature13450}
  {\path{doi:10.1038/nature13450}}.

\bibitem{defenu_longrange}
Nicol\`o Defenu, Tobias Donner, Tommaso Macr\`{\i}, Guido Pagano, Stefano
  Ruffo, and Andrea Trombettoni.
\newblock Long-range interacting quantum systems.
\newblock {\em Rev. Mod. Phys.}, 95:035002, Aug 2023.
\newblock URL: \url{https://link.aps.org/doi/10.1103/RevModPhys.95.035002},
  \href {https://doi.org/10.1103/RevModPhys.95.035002}
  {\path{doi:10.1103/RevModPhys.95.035002}}.

\bibitem{evenbly2026loopseries}
Glen Evenbly, Nicola Pancotti, Ashley Milsted, Johnnie Gray, and Garnet Kin-Lic
  Chan.
\newblock Loop series expansions for tensor networks.
\newblock {\em Physical Review Research}, 8:013245, 2026.
\newblock URL: \url{https://arxiv.org/abs/2409.03108}, \href
  {https://arxiv.org/abs/2409.03108} {\path{arXiv:2409.03108}}, \href
  {https://doi.org/10.1103/vqks-cr6x} {\path{doi:10.1103/vqks-cr6x}}.

\bibitem{gray2026loopclusterexpansions}
Johnnie Gray, Gunhee Park, Glen Evenbly, Nicola Pancotti, Eirik~F. Kj{\o}nstad,
  and Garnet Kin-Lic Chan.
\newblock Tensor network loop cluster expansions for quantum many-body
  problems.
\newblock {\em Physical Review B}, 113:235135, 2026.
\newblock \href {https://doi.org/10.1103/r6mz-6q3g}
  {\path{doi:10.1103/r6mz-6q3g}}.

\bibitem{schuch2008mctensorcontractions}
Norbert Schuch, Michael~M. Wolf, Frank Verstraete, and J.~Ignacio Cirac.
\newblock Simulation of quantum many-body systems with strings of operators and
  monte carlo tensor contractions.
\newblock {\em Physical Review Letters}, 100:040501, 2008.
\newblock \href {https://doi.org/10.1103/PhysRevLett.100.040501}
  {\path{doi:10.1103/PhysRevLett.100.040501}}.

\bibitem{chen2025signproblemtn}
Jielun Chen, Jiaqing Jiang, Dominik Hangleiter, and Norbert Schuch.
\newblock Sign problem in tensor network contraction.
\newblock {\em PRX Quantum}, 6:010312, 2025.
\newblock URL: \url{https://arxiv.org/abs/2404.19023}, \href
  {https://arxiv.org/abs/2404.19023} {\path{arXiv:2404.19023}}, \href
  {https://doi.org/10.1103/PRXQuantum.6.010312}
  {\path{doi:10.1103/PRXQuantum.6.010312}}.

\bibitem{nelson2026polynomialtimeclassicalsamplernoisy}
Jon Nelson, Joel Rajakumar, Chao Yin, Yifan~F. Zhang, and Michael~J. Gullans.
\newblock A polynomial-time classical sampler for noisy quantum circuits from
  statistical mechanics, 2026.
\newblock URL: \url{https://arxiv.org/abs/2610.00548}, \href
  {https://arxiv.org/abs/2610.00548} {\path{arXiv:2610.00548}}.

\bibitem{coudron2026polynomialtimeadditiveerrorestimationoutput}
Matthew Coudron, Michael~J. Gullans, Jon Nelson, Joel Rajakumar, and Shi
  Jie~Samuel Tan.
\newblock Polynomial-time additive-error estimation of output probabilities for
  shallow quantum circuits, 2026.
\newblock URL: \url{https://arxiv.org/abs/2610.02146}, \href
  {https://arxiv.org/abs/2610.02146} {\path{arXiv:2610.02146}}.

\bibitem{Helmuth_2023}
Tyler Helmuth and Ryan~L. Mann.
\newblock Efficient algorithms for approximating quantum partition functions at
  low temperature.
\newblock {\em Quantum}, 7:1155, October 2023.
\newblock URL: \url{http://dx.doi.org/10.22331/q-2023-10-25-1155}, \href
  {https://doi.org/10.22331/q-2023-10-25-1155}
  {\path{doi:10.22331/q-2023-10-25-1155}}.

\bibitem{Rossi_2017}
R.~Rossi, N.~Prokof'ev, B.~Svistunov, K.~Van~Houcke, and F.~Werner.
\newblock Polynomial complexity despite the fermionic sign.
\newblock {\em Europhysics Letters}, 118(1):10004, may 2017.
\newblock \href {https://doi.org/10.1209/0295-5075/118/10004}
  {\path{doi:10.1209/0295-5075/118/10004}}.

\bibitem{Rossi_PhysRevLett.121.130405}
R.~Rossi, T.~Ohgoe, K.~Van~Houcke, and F.~Werner.
\newblock Resummation of diagrammatic series with zero convergence radius for
  strongly correlated fermions.
\newblock {\em Phys. Rev. Lett.}, 121:130405, Sep 2018.
\newblock URL: \url{https://link.aps.org/doi/10.1103/PhysRevLett.121.130405},
  \href {https://doi.org/10.1103/PhysRevLett.121.130405}
  {\path{doi:10.1103/PhysRevLett.121.130405}}.

\bibitem{nlce_review}
Baoming Tang, Ehsan Khatami, and Marcos Rigol.
\newblock A short introduction to numerical linked-cluster expansions.
\newblock {\em Computer Physics Communications}, 184(3):557--564, 2013.
\newblock URL:
  \url{https://www.sciencedirect.com/science/article/pii/S0010465512003414},
  \href {https://doi.org/10.1016/j.cpc.2012.10.008}
  {\path{doi:10.1016/j.cpc.2012.10.008}}.

\bibitem{QuantumDynamicsNLCE}
Ian~G. White, Bhuvanesh Sundar, and Kaden R.~A. Hazzard.
\newblock Quantum dynamics from a numerical linked cluster expansion, 2021.
\newblock URL: \url{https://arxiv.org/abs/1710.07696}, \href
  {https://arxiv.org/abs/1710.07696} {\path{arXiv:1710.07696}}.

\bibitem{chan2026efficientclassicalsimulationtwodimensional}
Jia-Lin Chan, Tao Xiang, and Yantao Wu.
\newblock Efficient classical simulation of two-dimensional long-range systems:
  Rydberg arrays and beyond, 2026.
\newblock URL: \url{https://arxiv.org/abs/2607.05178}, \href
  {https://arxiv.org/abs/2607.05178} {\path{arXiv:2607.05178}}.

\bibitem{rayudu2026spectralgapleeyanghamiltonians}
Chaithanya Rayudu and Jun Takahashi.
\newblock Spectral gap of lee-yang hamiltonians, 2026.
\newblock URL: \url{https://arxiv.org/abs/2607.10765}, \href
  {https://arxiv.org/abs/2607.10765} {\path{arXiv:2607.10765}}.

\bibitem{wong2026leeyangtensorshamiltoniancomplexity}
Benjamin Wong, Sergey Bravyi, David Gosset, and Yinchen Liu.
\newblock Lee-yang tensors and hamiltonian complexity, 2026.
\newblock URL: \url{https://arxiv.org/abs/2602.03605}, \href
  {https://arxiv.org/abs/2602.03605} {\path{arXiv:2602.03605}}.

\bibitem{bourin2007matrixsubadditivityinequalityfab}
Jean-Christophe Bourin and Mitsuru Uchiyama.
\newblock A matrix subadditivity inequality for f(a+b) and f(a)+f(b), 2007.
\newblock URL: \url{https://arxiv.org/abs/math/0702475}, \href
  {https://arxiv.org/abs/math/0702475} {\path{arXiv:math/0702475}}.

\bibitem{bjorklund2008computingtuttepolynomialvertexexponential}
Andreas Björklund, Thore Husfeldt, Petteri Kaski, and Mikko Koivisto.
\newblock Computing the tutte polynomial in vertex-exponential time.
\newblock In {\em 2008 49th Annual IEEE Symposium on Foundations of Computer
  Science}, pages 677--686, 2008.
\newblock \href {https://doi.org/10.1109/FOCS.2008.40}
  {\path{doi:10.1109/FOCS.2008.40}}.

\end{thebibliography}

\newpage

\appendix
\onecolumngrid
\section{Exponential sign problem for sampling}\label{app:sign_problem}

Let us write the quantum partition function as a polymer model 

\begin{align}
    Z= \sum_{\Gamma \in \mathcal{G}} \prod_{\gamma \in \Gamma} w_\gamma.
\end{align}
If the weights are positive one can directly apply the framework of \cite{chen2021fastalgorithmslowtemperatures} to get efficient algorithms for observables and partition functions of polymer models when a condition akin to the Con. \ref{con:kotecky_preiss_criteria} is satisfied. One can also try to upgrade this Monte Carlo framework to complex-valued polymer models by defining the absolute value polymer model

\begin{align}
    Z'= \sum_{\Gamma \in \mathcal{G}} \prod_{\gamma \in \Gamma} \vert w_\gamma \vert,
\end{align}
such that any observable gets redefined as

\begin{align}
    \langle O \rangle =\frac{\sum_{\Gamma \in \mathcal{G}} \prod_{\gamma \in \Gamma}  w_\gamma  O(\Gamma)}{\sum_{\Gamma \in \mathcal{G}} \prod_{\gamma \in \Gamma}  w_\gamma }= \frac{\sum_{\Gamma \in \mathcal{G}} \prod_{\gamma \in \Gamma} \vert  w_\gamma \vert O(\Gamma) \cdot s(\Gamma) / \sum_{\Gamma \in \mathcal{G}} \prod_{\gamma \in \Gamma} \vert w_\gamma \vert}{\left(\sum_{\Gamma \in \mathcal{G}} \prod_{\gamma \in \Gamma} \vert  w_\gamma \vert \cdot s(\Gamma) \right) / \sum_{\Gamma \in \mathcal{G}} \prod_{\gamma \in \Gamma} \vert w_\gamma \vert }   = \frac{ \langle O s\rangle'}{\langle  s\rangle'},
\end{align}
where $s(\Gamma)= \prod_{\gamma \in \Gamma} w_\gamma / \vert w_\gamma \vert$ is the sign or phase associated to the compatible configuration $\Gamma \in \mathcal{G}$.  The mean sign can be written as

\begin{align}
    \langle s \rangle' = \frac{Z}{Z'} = \exp \left(\sum_{\Gamma \in \mathcal{G}_C }\varphi (H_\Gamma)  \prod_{\gamma \in \Gamma}\left[\frac{w_\gamma^{m_\gamma(\Gamma)}}{m_\gamma(\Gamma)!}\right] - \prod_{\gamma \in \Gamma} \left[\frac{\vert w_\gamma^{m_\gamma(\Gamma)} \vert}{m_\gamma(\Gamma)!}\right] \right).
\end{align}
When estimating observables via a Monte Carlo chain, the number of samples needed scales with the variance of the observable $\langle O^2 \rangle \equiv \langle O^2 \rangle' / \langle s \rangle'$. 

We thus see that here an exponential sign problem can emerge even if the Koteck\'y-Preiss criteria holds: while these criteria guarantee that the cluster expansion converges and $\log Z$ is extensive, the mean sign itself can still be an exponentially decaying quantity in $\vert V(G)\vert$ unless the weights are strictly positive (or at least positive up to sizes $m \propto \log (\vert V(G)\vert \varepsilon^{-1})$). In such cases, the number of samples required to estimate $\langle O \rangle$ grows exponentially as $\propto \exp(\vert V(G)\vert)$, resulting in an inefficient algorithm. This is analogous to the well-known sign problem, heavily studied in the QMC literature for thermal equilibrium \cite{troyer2005computational,signpbm_Pan_2024}.

The root of this sign problem in our context stems fundamentally from the non-commutativity of the local Hamiltonian terms. In the case of the quantum thermal partition function, the polymer weight is an alternating-sign sum of matrix exponential traces

\begin{equation}
    w_\gamma := (-1)^{\|\gamma\|} \sum_{T \subseteq E(\gamma)} (-1)^{|T|} \operatorname{Tr} \left[ \mathrm{e}^{-\beta \sum_{Z \in T} h_Z} \right].
\end{equation}

If the terms $h_Z$ commute, one can easily avoid the sign problem. By applying the shift $h_Z \rightarrow h_Z -2 \norm{h_Z} \mathbb{I}$, the exponents become positive semi-definite matrices. In the commuting case, the matrix exponentials factorize perfectly, and a straightforward rewriting turns the alternating sum into a manifestly positive quantity. When non-commutativity is introduced, we lose this global positivity guarantee. For polymers of size up to $\norm{\gamma} \leq 2$, the weights still remain positive. This is formalized in the following corollary:

\begin{cor}\label{cor_positiveweightN2}
    Given two positive semi-definite matrices $A, B \succeq 0$ 
    \begin{align}
        \Tr[\mathrm{e}^{A+B}]- \Tr[\mathrm{e}^A]- \Tr[\mathrm{e}^B]+ \Tr[\mathbb{I}] \geq 0
    \end{align}
\end{cor}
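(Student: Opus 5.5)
The plan is to recognise the claim as a trace superadditivity inequality for the convex function $f(x)=\mathrm{e}^x-1$ on $[0,\infty)$, which satisfies $f(0)=0$. With the normalized trace convention ($\Tr[\mathbb I]=1$), the left-hand side can be rearranged as
\begin{align*}
\Tr[\mathrm{e}^{A+B}]-\Tr[\mathrm{e}^A]-\Tr[\mathrm{e}^B]+\Tr[\mathbb I]
= \Tr f(A+B) - \Tr f(A) - \Tr f(B),
\end{align*}
since $\Tr[\mathbb I]$ appears once with a minus sign on the right and is cancelled by the $+\Tr[\mathbb I]$ on the left. The same identity holds for the unnormalized trace. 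So it suffices to show
\begin{align*}
\Tr f(A+B)\ \ge\ \Tr f(A)+\Tr f(B)
\end{align*}
for every convex $f:[0,\infty)\to\mathbb R$ with $f(0)=0$ and every pair of positive semi-definite $A,B$.

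\emph{Step 1.} Invoke the matrix subadditivity theorem of Bourin and Uchiyama \cite{bourin2007matrixsubadditivityinequalityfab}. It states that for concave $g\ge 0$ on $[0,\infty)$ one has $\Tr g(A+B)\le \Tr g(A)+\Tr g(B)$, and that for convex $f$ with $f(0)=0$ the reversed inequality holds. The convex case follows from the concave one by applying the concave statement to $-f$, or equivalently by rerunning their argument with the inequalities reversed. Taking $f(x)=\mathrm{e}^x-1$ then gives the claim directly.

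\emph{Step 2: a self-contained argument in case a direct proof is preferred.} Write $C=A+B$ and use the polar-type decompositions $A=C^{1/2}XC^{1/2}$ and $B=C^{1/2}(\mathbb I-X)C^{1/2}$ with $0\le X\le \mathbb I$ on the support of $C$. The operator $A=C^{1/2}XC^{1/2}$ has the same nonzero spectrum as $X^{1/2}CX^{1/2}$, so $\Tr f(A)=\Tr f(X^{1/2}CX^{1/2})$ because $f(0)=0$; similarly for $B$. Now take $x\in[0,1]$ and $\lambda\ge 0$. Convexity of $f$ together with $f(0)=0$ gives $f(x\lambda)\le x f(\lambda)$. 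The operator version of this bound is the key estimate: for a contraction $K$ and $C\ge0$, one has $\Tr f(K^*CK)\le \Tr K^*f(C)K$. This is a Jensen-type trace inequality for convex $f$ with $f(0)\le 0$ (Hansen--Pedersen / Brown--Kosaki). Applying it with $K=X^{1/2}$ and with $K=(\mathbb I-X)^{1/2}$ gives
\begin{align*}
\Tr f(A)+\Tr f(B)\le \Tr\!\left[\bigl(X+(\mathbb I-X)\bigr)f(C)\right]=\Tr f(C),
\end{align*}
which is the desired inequality.

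The main obstacle is this Jensen-type trace step. Naive monotonicity arguments fail here. For instance, differentiating $t\mapsto\Tr f(tA+B)$ would require $f'=\mathrm{e}^x$ to be operator monotone, and it is not. Golden--Thompson also points in the wrong direction. I would therefore rely primarily on the cited Bourin--Uchiyama result, and keep Step 2 as a backup justification.
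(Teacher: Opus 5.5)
Your Step 1 is the paper's proof: Bourin--Uchiyama's convex superadditivity theorem (their Thm.~1.2) applied to $f(x)=\mathrm{e}^x-1$ with the trace, which is legitimate because $f(A),f(B),f(A+B)\succeq 0$, so the trace equals the (symmetric) trace norm there. Drop the remark that the convex case follows by applying the concave theorem to $-f$, since that theorem requires $g\ge 0$ and $-f=1-\mathrm{e}^x$ violates this, so the convex statement has to be cited as its own result, as the paper does; your Step~2 via the Brown--Kosaki/Hansen--Pedersen trace Jensen inequality is a correct self-contained backup.
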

\begin{proof}
    It follows from applying Thm. 1.2 of \cite{bourin2007matrixsubadditivityinequalityfab} with the norm $\norm{\cdot}= \Tr[\cdot]$ and the convex function $f(x) = \mathrm{e}^{x}-1$.
\end{proof}

Unfortunately, this positivity breaks down for polymers of size $\norm{\gamma} \ge 3$. Even when the Hamiltonian terms are shifted to be positive definite matrices, the alternating sum of traces of exponentials can evaluate to a negative number due to non-commutativity. For instance, for $\norm{\gamma}=3$, the following set of three positive definite matrices serves as a counterexample

\begin{equation}
\begin{aligned}
A&=\begin{pmatrix}
0.2043328343 & -0.0830288469\\
-0.0830288469 & 0.0379359076
\end{pmatrix},\quad
B=\begin{pmatrix}
0.2021656944 & 0.1887743592\\
0.1887743592 & 0.1852969351
\end{pmatrix},\\[4pt]
C&=\begin{pmatrix}
0.0022878815 & -0.0131434550\\
-0.0131434550 & 0.0829966280
\end{pmatrix},
\end{aligned}
\end{equation}
as

\begin{align}
    \Tr[\mathrm{e}^{A+B+C}-\mathrm{e}^{A+B}-\mathrm{e}^{A+C}-\mathrm{e}^{B+C}+\mathrm{e}^{A}+\mathrm{e}^{B}+\mathrm{e}^{C}- \mathbb{I}] \approx -1.01 \cdot 10^{-3} 
\end{align}
The existence of this counterexample demonstrates that the sign problem is unavoidable for general non-commuting quantum systems at larger polymer sizes.

\section{Cluster Markov Chain}

\subsection{Convergence of the normalizing constant $A_m$}\label{app:A_convergence}

\begin{proof}(Lem. \ref{lemma_Aextensive})
    To establish the extensivity and convergence of $A_m$, we first bound the collective contributions of all cluster configurations connected to a reference root polymer $\gamma^*$. Specifically, by invoking Con. \ref{con:kotecky_preiss_criteria}, we bound these contributions in terms of polynomial weights scaled by the size $\norm{\gamma}$, reducing the global summation to locally controlled polymer expansions.
    
    \begin{figure}[t]
        \centering
        \includegraphics[width=0.65\linewidth]{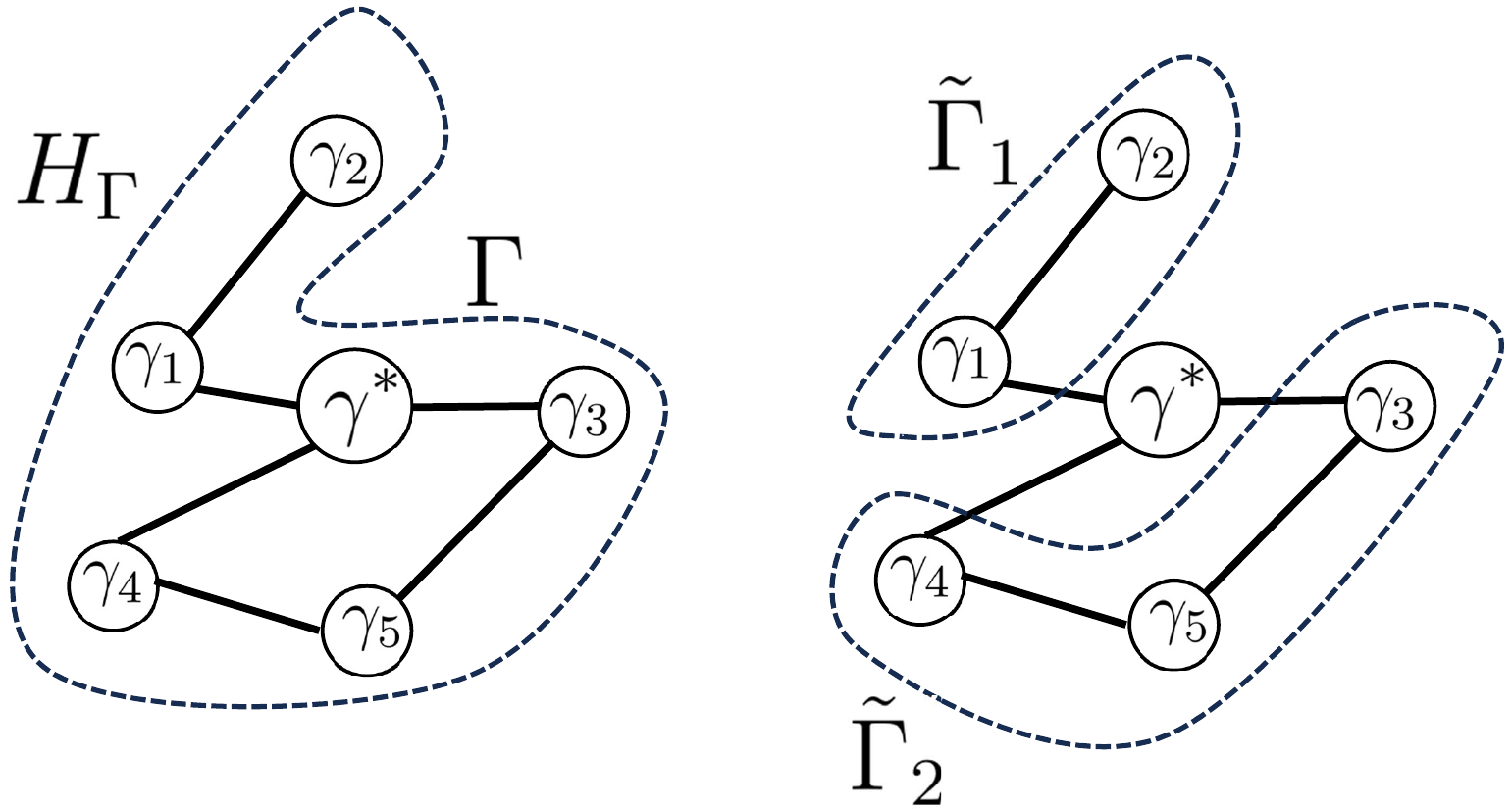}
        \caption{Example of decomposition of a cluster $\Gamma$ which contains $\gamma^*$ as a set of sub-clusters $\tilde{\Gamma}_i$.}
        \label{fig:proofA}
    \end{figure}
    To formalize this recursive structure, we first isolate the contribution of all sub-clusters that are pinned to a specific root. Let us define the following quantities
    \begin{align}
        A'(\gamma^*) = \sum_{\substack{(\Gamma, m_\Gamma) \in \mathcal{G}_C\\ \Gamma \nsim \gamma^* }} \prod_{\gamma \in \Gamma} \vert w_\gamma \vert^{m_\gamma (\Gamma)} \mathrm{e}^{a(\gamma) + d(\gamma) } ,
    \end{align}
    and 

    \begin{align}
        G_n(\gamma^*) = \sum_{\substack{(\Gamma, m_\Gamma) \in \mathcal{G}_C\\ \Gamma \ni \gamma^* \\ n_\gamma (\Gamma)\leq n }} \prod_{\gamma \in \Gamma} \vert w_\gamma \vert^{m_\gamma (\Gamma)} \mathrm{e}^{a(\gamma) },
    \end{align}
    where $n_\gamma (\Gamma)$ is the number of polymers of the cluster.

    We prove first that for all $\gamma$ and n, $G_n(\gamma) \leq \vert w_\gamma \vert \mathrm{e}^{a(\gamma) }$. We proceed by induction; for $n=1$ the statement is trivial

    \begin{align}
        G_1(\gamma^*) = \sum_{\substack{(\Gamma, m_\Gamma) \in \mathcal{G}_C\\ \Gamma \ni \gamma^* \\ n_\gamma (\Gamma)\leq 1 }} \prod_{\gamma \in \Gamma} \vert w_\gamma \vert^{m_\gamma (\Gamma)} = \vert w_{\gamma^*} \vert  \leq \vert w_{\gamma^*} \vert \mathrm{e}^{a(\gamma^*)  }
    \end{align}
    We assume, by the inductive hypothesis, that $G_n (\gamma^*) \leq \vert w_\gamma \vert \mathrm{e}^{a(\gamma)}$. For $n+1$ we realize that by extracting $\gamma^*$ contribution, one can always decompose $\Gamma$ in a set of sub-clusters $\{\tilde{\Gamma}_i\}$ connected to $\gamma^*$, as in Fig. \ref{fig:proofA}. This leads to the following upper bound 

    \begin{align}
        G_{n+1}(\gamma^*) =& \sum_{\substack{(\Gamma, m_\Gamma) \in \mathcal{G}_C\\ \Gamma \ni \gamma^* \\ n_\gamma (\Gamma)\leq n }} \prod_{\gamma \in \Gamma} \vert w_\gamma \vert^{m_\gamma (\Gamma)}, \\
        &\leq \vert w_{\gamma^*} \vert \sum_{k=0}^{n}\sum_{\tilde{\Gamma}_0, \dots \tilde{\Gamma}_k } \prod_{m}  \prod_{\gamma \in \tilde{\Gamma}_k} \vert w_\gamma \vert^{m_\gamma (\tilde{\Gamma}_k)} \\
        &\leq \vert w_{\gamma^*} \vert \sum_{k=0}^{n} \frac{1}{k!} \left( \sum_{e \nsim \gamma^*}G_{n}(e) \right)^k \\
        &\leq \vert w_{\gamma^*} \vert \sum_{k=0}^{n+1} \frac{1}{k!} \left( \sum_{e \nsim \gamma^*}\vert w_e \vert \mathrm{e}^{a(e)} \right)^k \leq  \vert w_{\gamma^*} \vert \mathrm{e}^{a(\gamma^*) }
    \end{align}
    where we use in the last step the inductive hypothesis and Con. \ref{con:kotecky_preiss_criteria}. The factor $1/k!$ avoids over-counting when summing over sets of sub-clusters.  Finally 

    \begin{align}
        A'(\gamma^*) \leq \sum_{\gamma \nsim \gamma^*} G(\gamma) \leq \sum_{\gamma \nsim \gamma^*} \vert w_\gamma \vert \mathrm{e}^{a(\gamma)+d(\gamma)} \leq a(\gamma^*),
    \end{align}
    where we use again Con. \ref{con:kotecky_preiss_criteria}. It follows that 

    \begin{align}
        A' = \sum_{(\Gamma, m_\Gamma) \in \mathcal{G}_C} \prod_{\gamma \in \Gamma} \vert w_\gamma \vert^{m_\gamma (\Gamma)} \mathrm{e}^{a(\gamma) + d(\gamma) } \leq \sum_{v \in \tilde{V}}A'(v) \leq \vert \tilde{V}(\tilde{G}) \vert,
    \end{align}
    and then for some $m'\leq m$ given our choice of $a(\gamma)=\norm{\gamma}/2$ and $d(\gamma)=2 \cdot \log 2 \norm{\gamma}$

    \begin{align}
        \vert A_m- A_{m'} \vert \leq \sum_{\Gamma , \norm{\Gamma} >m'} \prod_{\gamma \in \Gamma} \vert w_\gamma \vert^{m_\gamma} \leq 4^{-m'}\sum_{\Gamma , \norm{\Gamma} >m'} \prod_{\gamma \in \Gamma} \vert w_\gamma \vert^{m_\gamma} 4^{\norm{\gamma}} \leq 4^{-m'}\sum_{v \in \tilde{V}(\tilde{G})}A'(v)  \leq  4^{-m'}  \vert \tilde{V}(\tilde{G}) \vert ,
    \end{align}
    concluding the proof.

\end{proof}

\subsection{Proof of Lem. \ref{lemma_finitevariance}}\label{app:variance_convergence}

First, note that
\begin{align}
        \mathbb{E}_{p(\Gamma)}[\vert\xi (\Gamma)\vert^2] \leq \frac{1}{A_m} \sum_{v\in \tilde{V}} \sum_{\Gamma \nsim v} \left\vert \left(\frac{  \varphi(\Gamma) }{m_\gamma(\Gamma)!}\right)^2 \prod_{\gamma \in \Gamma}  w_\gamma \right \vert,  
    \end{align}
and we can simply rewrite
\begin{align}
         \sum_{\Gamma \nsim v} \left\vert \left(\frac{  \varphi(\Gamma) }{\prod_{\gamma \in \Gamma}m_\gamma(\Gamma)!}\right)^2 \prod_{\gamma \in \Gamma}  w_\gamma \right \vert &\leq 
         \sum_{\Gamma, \Gamma' \nsim v}   \left \vert   \varphi(\Gamma) \prod_{\gamma \in \Gamma}  \frac{   w_\gamma^{1/2}}{m_\gamma(\Gamma)!}\right \vert \cdot  \left \vert   \varphi(\Gamma') \prod_{\gamma \in \Gamma'}  \frac{   w_\gamma^{1/2}}{m_\gamma(\Gamma')!}\right \vert \nonumber\\
         &\leq \left( \sum_{\Gamma \nsim v} \left \vert   \varphi(\Gamma) \prod_{\gamma \in \Gamma}  \frac{   w_\gamma^{1/2}}{m_\gamma(\Gamma)!}\right \vert \right)^2 .
    \end{align}
Note that the summed quantity inside the square,
\begin{align}\label{eq:kpcriteria_raizcuadrada}
    \sum_{\Gamma \nsim v} \left \vert   \varphi(\Gamma) \prod_{\gamma \in \Gamma}  \frac{   w_\gamma^{1/2}}{m_\gamma(\Gamma)!}\right \vert,
\end{align}
can be regarded as a modified cluster expansion with weights $\tilde{w}_\gamma \equiv w_\gamma^{1/2}$. Consequently, its convergence can be bounded using Con. \ref{con:clustersamplingcondition}, which is equivalent to Con. \ref{con:kotecky_preiss_criteria} for this modified cluster expansion (see Lemma \ref{lem:kotecky_preiss}). Given our choices of $a(\gamma)$ and $d(\gamma)$ in Con. \ref{con:clustersamplingcondition} immediately implies:
\begin{align}
    \mathbb{E}_{p(\Gamma)}[\vert \xi (\Gamma)\vert^2] \leq& \frac{1}{A_m} \sum_{v \in \tilde{V}(\tilde{G})} \sum_{\Gamma \nsim v} \left \vert   \varphi(\Gamma) \prod_{\gamma \in \Gamma}  \frac{   w_\gamma^{1/2}}{m_\gamma(\Gamma)!}\right \vert\\
    \leq&  \frac{1}{A_m} \cdot  \sum_{v \in \tilde{V}(\tilde{G})} \sum_{\Gamma \nsim v} \left \vert   \varphi(\Gamma) \prod_{\gamma \in \Gamma}  \frac{   w_\gamma^{1/2}}{m_\gamma(\Gamma)!}\right \vert  \leq \frac{1}{A_m}  \sum_{v\in \tilde{V}(\tilde{G})} a(v) \leq \frac{\vert \tilde{V}(\tilde{G}) \vert}{A_m},
\end{align}
concluding the proof.

\subsection{Mixing of the cluster Markov chain}\label{app_rapidmixingproof}

To bound the mixing time of the chain, we first construct a deterministic canonical path $\mathcal{P}_{XY}$ between any two valid states $X, Y \in \mathcal{G}_C$ of the cluster Markov chain defined in Sec. \ref{subsec::cluster_markov_chain}. The state space is represented by the transition graph $G_{MC} = (\mathcal{G}_{C, \vert{}\Gamma\vert{}\leq m}, E_{MC})$, where edges correspond to valid MCMC transitions (the addition or deletion of a single polymer that maintains cluster connectivity). To ensure the path is uniquely defined, we establish a strict total order, denoted by $\prec$, on the set of all allowed polymers $\mathcal{C}_{\vert{}\gamma\vert{}\leq m}$. This ordering is defined primarily by the absolute weights of the polymers. Specifically, for any two distinct polymers $\gamma_a, \gamma_b$, we define $\gamma_a \succ \gamma_b$ if $\vert{}w_{\gamma_a}\vert{} > \vert{}w_{\gamma_b}\vert{}$. If $\vert{}w_{\gamma_a}\vert{} = \vert{}w_{\gamma_b}\vert{}$, we break the tie using a secondary lexicographical spatial ordering of their constituent vertices to guarantee that $\prec$ is a strict total order. We emphasize that this ordering is a theoretical construct for the congestion proof and is not evaluated by the algorithm itself.

The canonical path $\mathcal{P}_{XY}$ is divided into three sequential phases, illustrated in Fig. \ref{fig:congestion_proof}

\begin{figure}[t]
    \centering
    \includegraphics[width=1.0\linewidth]{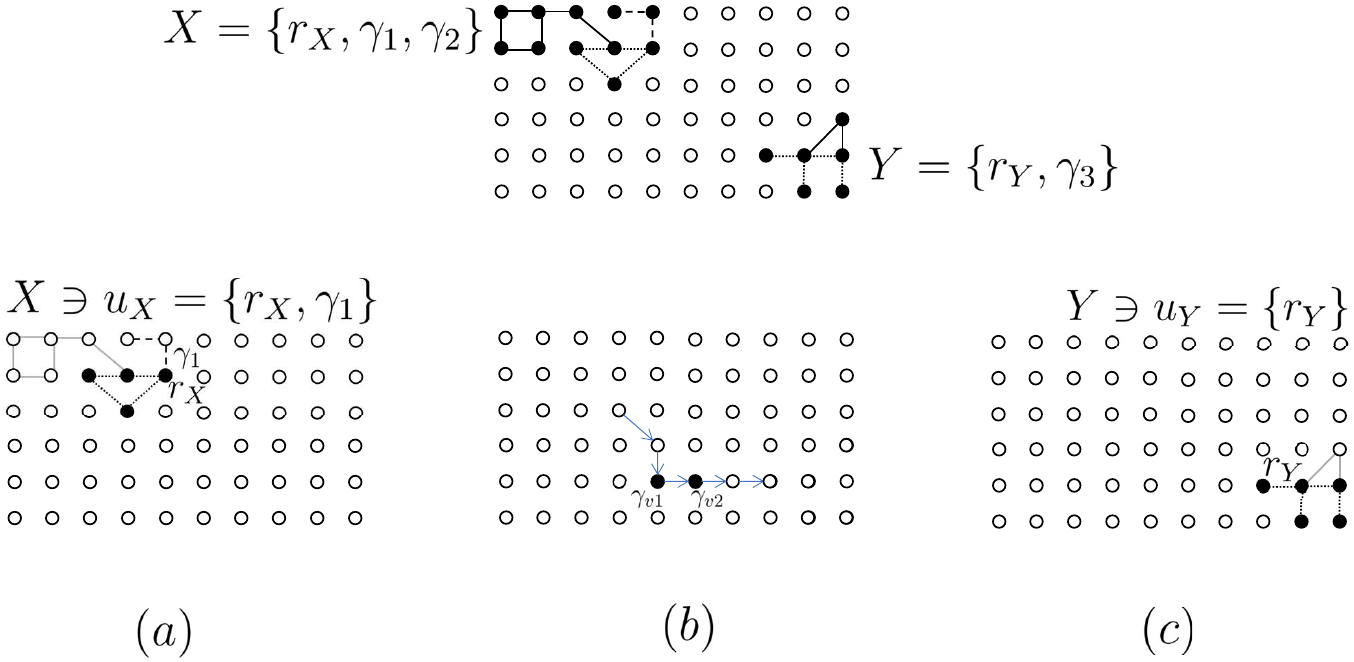}
    
    \caption{\justifying Illustration of the three phases of the canonical path $\mathcal{P}_{XY}$ between initial cluster $X$ and target cluster $Y$. (a) A step in the peeling phase, where the intermediate cluster $u_X$ remains strictly contained within $X$ ($u_X \subseteq X$). (b) A step in the transport phase, where the intermediate state contains at most two polymers propagating along the geodesic in the connected covering graph $\tilde{G}(\tilde{V}, \tilde{E})$ between the roots $r_X$ and $r_Y$. (c) A step in the construction phase, where the intermediate cluster $u_Y$ is assembled such that it remains strictly contained within $Y$ ($u_Y \subseteq Y$). In panels (a) and (c), the inactive polymers—those already removed from $X$ or yet to be added to $Y$—are shaded in gray to illustrate the full extent of the initial and target clusters, respectively.}
    \label{fig:congestion_proof}
\end{figure}

\begin{itemize}
    \item \textit{1. Peeling Phase (Deconstruction)} Given the initial cluster $X$, we reduce it to a single base polymer $r_X$ through a unique sequence of deletions. At any step, let $C$ be the current connected cluster. We define the set of removable polymers $L(C) \subseteq C$ as the set of polymers whose deletion does not disconnect the cluster (graph theoretically, these are the non-cut vertices of the incompatibility graph of $C$, which are mathematically guaranteed to exist). We uniquely select the polymer $\gamma^* = \max_{\prec} L(C)$ and delete it. This ensures we always remove the heaviest possible valid polymer first. We iterate this deterministic removal process until the cluster is reduced to a single polymer, denoted as $r_X$.
    \item \textit{2. Transport Phase (Translation)} This phase moves the single polymer $r_X$ to the target base polymer $r_Y$ (where $r_Y$ is the result of applying the peeling procedure to $Y$). To do this uniquely, we select a representative spatial vertex for both polymers. Let $v_X \in r_X$ and $v_Y \in r_Y$ be their respective minimal vertices according to the fixed spatial ordering of the underlying connected covering set of vertices $ \tilde{G}(\tilde{V},\tilde{E})$ of the polymer model. We then select the unique shortest path $p_{X \to Y} = (v_X, v_1, \dots, v_Y)$ in $ \tilde{G}(\tilde{V},\tilde{E})$ (breaking ties via the spatial ordering). The sequence of transitions in $G_{MC}$ is uniquely determined by walking this geodesic: we alternately add the minimal polymer associated with the next vertex in the path, and delete the previous polymer, maintaining connectivity at all times until $r_Y$ is reached.
    \item \textit{3. Growing Phase (Reconstruction)} Having reached $r_Y$, we reconstruct the target cluster $Y$. Because the peeling phase on $Y$ generates a strict, unique sequence of single-polymer deletions from $Y$ down to $r_Y$, we simply apply the exact time-reversed sequence of polymer additions. By definition, each addition maintains connectivity, generating a valid path in $G_{MC}$ that successfully terminates at state $Y$.
\end{itemize}
\begin{rem}[Path Uniqueness and Invertibility]\label{rem:path_invertibility}
    By construction, the canonical path $\mathcal{P}_{XY}$ is deterministic and strictly unique for any given pair of initial and target clusters $X, Y \in \mathcal{G}_C$. Furthermore, the path generation is mathematically invertible: given any intermediate transition $e=(u,v) \in \mathcal{P}_{XY}$ and the sets of inactive polymers (those already deleted from $X$ or yet to be added to $Y$), the mapping back to the exact pair $(X, Y)$ is injective.
\end{rem}

\begin{lemma}
    The maximum congestion $\bar{\rho}$ (See Eq. \eqref{eq:bar_rho_congestion}) of the Cluster Markov chain $P_{MC}$ is bounded as

    \begin{align}
        \bar{\rho} = \max_{e=(u,v) \in E} \frac{1}{p(u) P(u,v)} \sum_{\mathcal{P}_{xy} \ni e} p(x) p(y) \leq \mathcal{\tilde{O}}(\vert \tilde{V}(\tilde{G})\vert \varepsilon^{-1})
    \end{align}
    \label{lemma_congestionbound}
\end{lemma}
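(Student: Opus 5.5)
The plan is the standard encoding (injective-map) argument for canonical paths, organized by the three phases of $\mathcal{P}_{XY}$. Fix a transition $e=(u,v)$ and, for each phase, define a map $\eta_e$ sending a pair $(X,Y)$ whose path uses $e$ to a compact certificate. By Rem. \ref{rem:path_invertibility}, $(X,Y)$ can be recovered from $e$ together with this certificate. The goal is an inequality of the form
\begin{align}
p(X)\,p(Y) \le p(u)\, P(u,v)\cdot \mathrm{poly}\big(m,\vert \tilde V\vert\big)\cdot \varepsilon^{-1}\, \Psi(\eta_e(X,Y)),
\end{align}
where $\sum \Psi$ over certificates is $\mathcal{O}(1)$ or $\mathcal{O}(\vert\tilde V\vert)$. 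Summing over certificates then bounds $\bar\rho$.

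\emph{Peeling phase.} Here $u\subseteq X$, and the certificate is the set of already-deleted polymers $X\setminus u$ together with $Y$. Since weights are multiplicative, $p(X)=p(u)\prod_{\gamma\in X\setminus u}\vert w_\gamma\vert$ up to multiplicity factors. The transition probability is $P(u,v)\ge \tfrac12 \cdot \tfrac{1}{\vert u\vert}\cdot \tfrac{b}{\vert\supp\vert}$, which is $\Omega(1/m^{2})$ for deletions since $\vert \supp(u)\vert\le \norm{u}\le 2m$; for additions it carries an extra factor $\vert w_\gamma\vert\ge\kappa(\gamma)$. The contribution is therefore bounded by $m^{\mathcal{O}(1)}\, p(Y)\sum_{X\supseteq u}\prod_{X\setminus u}\vert w\vert$. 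Because $X\setminus u$ is a union of subclusters each incompatible with some polymer of $u$, Lem. \ref{lemma_Aextensive} (the pinned bound by $a(\gamma^*)$, applied through the $G_n$ induction as exponentiated sums) bounds this sum by $\prod_{\gamma\in u}e^{a(\gamma)}\le e^{\norm{u}/2}\le e^{m}$. This is where I must be careful. The $\mathrm{e}^{a}$ factors are too lossy, so I would instead use the $d(\gamma)=2\log 2\norm\gamma$ slack, together with $\norm{u}\le 2m=\mathcal{O}(\log(\vert\tilde V\vert/\varepsilon))$, to turn them into a factor $\mathrm{poly}(\vert\tilde V\vert\varepsilon^{-1})$. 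The sum over $Y$ with $p(Y)$ is $1$. The construction phase is the time-reverse and is handled symmetrically.

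\emph{Transport phase.} The intermediate states contain at most two polymers near the geodesic. The certificate is the pair $(X,Y)$ restricted by the knowledge that $r_X$ and $r_Y$ are endpoints of a geodesic through the current vertex. Here $p(u)\ge w_c^2/A_m$ by Con. \ref{ass::uniform_local_weights}, since the transported polymers are the representatives $\gamma_v$. Also $P(u,v)\ge \kappa\cdot\Omega(1/m^2)\ge \mathrm{poly}(\varepsilon/\vert\tilde V\vert)^{r}$ via Con. \ref{ass::weight_regularity}, with $r=\mathcal{O}(1)$. The relevant sum is $\sum_{X,Y}p(X)p(Y)\le 1$ after multiplying by $A_m=\Theta(\vert\tilde V\vert)$ (Lem. \ref{lemma_Aextensive}). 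This yields the $\vert\tilde V\vert$ factor, with the weight threshold contributing the $\varepsilon^{-1}$ up to logs.

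The main obstacle is the peeling and construction bookkeeping. Choosing the heaviest removable polymer must make the deleted set $X\setminus u$ re-summable by the Koteck\'y--Preiss bound without losing more than a polylog factor. I would first try bounding $P(u,v)^{-1}$ and the combinatorial multiplicity factors $m_\gamma!/n_\gamma$ by $m^{\mathcal{O}(1)}=\tilde{\mathcal{O}}(1)$. I would absorb everything else into the $4^{\norm\gamma}$ slack of Lem. \ref{lemma_Aextensive}, so that the largest term left is $\tilde{\mathcal{O}}(\vert\tilde V\vert\varepsilon^{-1})$.
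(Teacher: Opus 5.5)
Your three-phase architecture (peeling, transport along a geodesic of $\tilde G$, construction by reversibility, with $\sum_Y p(Y)=1$) is the paper's. So is your first peeling estimate: resumming the subclusters of $X\setminus u$ that hang off $u$ via the pinned Koteck\'y--Preiss bound gives a factor $e^{\norm{u}/2}\le e^{m}$. The paper phrases this with a single polymer $\gamma_u$, $\supp(\gamma_u)=\supp(u)$, and obtains $e^{a(\gamma_u)}\le e^{m}$. With $m=\log(2\Vt\varepsilon^{-1})$ this is exactly $2\Vt\varepsilon^{-1}$, which is the claimed bound; this is the reason for the choice $a(\gamma)=\norm{\gamma}/2$ (Remark~\ref{rem_achoice}). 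So this factor is not ``too lossy'', and you should stop there. The detour through the $d$-slack, as you state it, only promises an unspecified $\mathrm{poly}(\Vt\varepsilon^{-1})$, which would not prove the lemma. Applied to the hanging polymers, the $4^{-\norm{e}}$ slack could only shrink the exponent further, but you do not need it.

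The genuine gap is in the transport phase. You lower-bound $P(u,v)$ by $\kappa\cdot\Omega(1/m^{2})$ and assert that the threshold costs only $\varepsilon^{-1}$ up to logs. It does not. Since $\kappa(\gamma)=\frac{\varepsilon}{2\Vt}\bar w_\gamma$ with $\bar w_\gamma\le 1$, your estimate $\tilde{\mathcal O}(A_m\,\wc^{-2}\kappa^{-1})$ is never better than $\tilde{\mathcal O}(\Vt^{2}\varepsilon^{-1})$. Under Con.~\ref{ass::weight_regularity} it can be as large as $\Vt^{2+kr}\varepsilon^{-1-pr}$.

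This step also contradicts your own remark that the transported polymers are the representatives $\gamma_v$. Every polymer added in this phase has $|w_{\gamma_v}|\ge\wc$, so an addition step satisfies $P(u,v)\ge\wc\,\Omega(1/m^{2})$. Alternatively, as in the construction phase, use $p(u)P(u,v)=p(v)P(v,u)$ together with the $\Omega(1/m^{2})$ deletion probability. Either way, transport contributes only $\tilde{\mathcal O}(A_m\wc^{-2})=\tilde{\mathcal O}(\Vt)$, as in the paper. The $\varepsilon^{-1}$ of the lemma then comes entirely from the $e^{m}$ factor of peeling and construction, not from the weight threshold.

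One further detail, which the paper's write-up also passes over: the first and last transport states still contain $r_X$ or $r_Y$. These need not be representatives, so $p(u)\ge\wc^{2}/A_m$ fails there. Those edges are used only by pairs with $r_X$ (respectively $r_Y$) fixed, so you can charge them like peeling edges. Since $\sum_{X\ni r_X}p(X)/p(\{r_X\})\le e^{a(r_X)}\le e^{m/2}$, this keeps them within the bound.
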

\begin{proof}(Lem. \ref{lemma_congestionbound})
    To bound the maximum congestion $\bar{\rho}$ we employ the canonical paths method. Given any two clusters $X$ and $Y$ we consider the deterministic canonical path written above and evaluate $\bar{\rho}$ at peeling, transport and growing phases.

    \textit{Peeling phase:}
    At this stage any transition $(u,v)$ is the deletion of a polymer from $u$, yielding $v$, both clusters contained in $u,v \subseteq X$. This yields

        \begin{align}
        \bar{\rho} \leq& \max_{e=(u,v) \in E(G_{MC})} \frac{1}{p(u) P(u,v)} \cdot \left( \sum_{X \ni u} \prod_{\gamma \ni u} \vert w_\gamma^{m_\gamma (X)}   \vert \cdot \frac{1}{A_m} \right) \cdot \cancelto{1}{\left( \sum_{Y} p(Y) \right)} \leq   \max_{u \in \mathcal{G}_C} \frac{ \sum_{X \ni u} \prod_{\gamma \in X} \vert w_\gamma^{m_\gamma (X)}   \vert }{\prod_{\gamma \in u} \vert w_\gamma \vert^{m_\gamma (u)}   \cdot \frac{1}{ 2 \vert u \vert^2}} \cdot \\
        \leq& 2 \log^2 (\vert \tilde{V}(\tilde{G})\vert \varepsilon^{-1}) \cdot  \max_{u \in \mathcal{G}_C} \sum_{X \setminus u } \prod_{\gamma \in X \setminus u} \vert w_\gamma \vert^{m_\gamma (X \setminus u)} , 
    \end{align}
    where we used $P(u,v) \geq 1/\vert u \vert^2$, being $\vert u \vert \leq \mathcal{O}(\log (\Vt \varepsilon^{-1}))$ due to the truncation (recall the cluster dynamics of Sec. \ref{subsec::cluster_markov_chain}). 
    At this point we define the polymer $\gamma_u$ whose support is the same of $u$, i.e. supp$(\gamma_u)=$ supp$(u)$. We realize that this last sum can be upper bounded similarly as how we bound $G_{n}(\gamma)$ in App. \ref{app:A_convergence}; decomposing $X\setminus u$ in sets of clusters $\{\tilde{\Gamma}_i\}$ connected to $\gamma_u$, and summing over all possible sets), yielding

        \begin{align}
        \bar{\rho} \leq& 2 \log^2 (\vert \tilde{V}(\tilde{G})\vert \varepsilon^{-1}) \cdot  \max_{u \in \mathcal{G}_C} \sum_{m=0}^\infty  \frac{1}{m!}  \sum_{\tilde{\Gamma}_0 \dots \tilde{\Gamma}_m} \prod_i  \prod_{\gamma \in \tilde{\Gamma}_i} \vert w_\gamma \vert \leq 2 \log^2 (\vert \tilde{V}(\tilde{G})\vert \varepsilon^{-1}) \cdot \sum_{m=0}^\infty  \frac{1}{m!}  \left( \sum_{e \nsim \gamma_u} \vert w_e \vert \mathrm{e}^{a(e)} \right)^m \\
        \leq& 2 \log^2 (\vert \tilde{V}(\tilde{G})\vert \varepsilon^{-1}) \cdot \mathrm{e}^{a(\gamma_u)} \leq 4 \log^2 (\vert \tilde{V}(\tilde{G})\vert \varepsilon^{-1}) \cdot \vert \tilde{V}(\tilde{G}) \vert \varepsilon^{-1} \equiv \mathcal{\tilde{O}}(\vert \tilde{V}(\tilde{G}) \vert \varepsilon^{-1} ),
    \end{align}
where in the last equation we used that because of the truncation $a(\gamma_v) \leq \frac{1}{2} \cdot 2  \log (2\vert  \tilde{V}(\tilde{G}) \vert \varepsilon^{-1} )=\log (2\vert  \tilde{V}(\tilde{G}) \vert \varepsilon^{-1} )$.
\begin{rem}\label{rem_achoice}
Since $\gamma_u$ is supported on $\text{supp}(u)$, its size is only limited by
the cluster cutoff, $\norm{\gamma_u} \leq 2m$, and the bound above carries a
factor $\mathrm{e}^{a(\gamma_u)}$. Our choice $a(\gamma) = \norm{\gamma}/2$
keeps it at $\mathrm{e}^{m} = \vert \tilde{V}(\tilde{G})\vert \varepsilon^{-1}$,
a factor already present in the congestion bound, whereas the conventional
choice $a(\gamma) = \norm{\gamma}$ would square it. More general choices
$a(\gamma) = c \norm{\gamma}$ are of course admissible, trading the convergence
threshold of Con. \ref{con:kotecky_preiss_criteria} against a polynomial
overhead $(\vert \tilde{V}(\tilde{G})\vert \varepsilon^{-1})^{2c}$ in the mixing
time.
\end{rem}

\textit{Transport phase:} We now realize that, due to Con. \ref{ass::uniform_local_weights}, we have connected to each $v\in \tilde{V}(\tilde{G})$ a $\mathcal{O}(1)$ size polymer $\gamma$ with $\vert w_\gamma \vert \geq w_{\text{c}} \equiv \mathcal{O}(1)$. As the canonical path chooses the largest polymer anchored at each vertex through the geodesic we conclude that we never use a polymer in the transport phase with $\vert w_\gamma \vert \leq w_{\text{c}} \equiv \mathcal{O}(1)$. Thus

\begin{align}
    \bar{\rho} \leq \frac{\cancelto{1}{\left(\sum_{X} p (X) \sum_{Y} p (Y) \right)} }{w_{\text{c}}^2 \cdot\frac{1}{A_m} \cdot \frac{1}{2|u|^2}} \leq \mathcal{\tilde{O}}(\vert \tilde{V}(\tilde{G}) \vert)
\end{align}
Note that we could have relaxed Con. \ref{ass::uniform_local_weights}, and use the minimum provided by truncation $\vert  w_\gamma  \vert \geq \kappa (\gamma)$, which would incur a poly$(\vert \tilde{V} \vert, \varepsilon^{-1})$ overhead dictated by $\kappa (\gamma)$. However, for many cases of interest, (such as ours in Sec. \ref{sec::aplication}) Con. \ref{ass::uniform_local_weights} will be satisfied.

\textit{Construction phase:} In this last phase, the canonical path traverses polymer additions rather than deletions. While the forward addition probability $P(u,v)$ carries a weight factor $\vert w_\gamma \vert$ that can be as small as the algorithmic threshold $\kappa(\gamma)$, we seamlessly bypass this dependence by exploiting the detailed balance of the Markov chain. Because the chain is strictly reversible, the probabilistic flow is symmetric: $p(u)P(u,v) = p(v)P(v,u)$. By evaluating the congestion denominator in the reverse (deletion) direction, the relevant transition probability is governed by $P(v,u) \ge 1/\vert v \vert^2$. Consequently, bounding the congestion for the construction phase becomes mathematically identical to the peeling phase, yielding the exact same upper bound.

\end{proof}

\noindent Given Lem. \ref{lemma_congestionbound} we are in position to bound the mixing time

\begin{proof}(Thm. \ref{thm_mixingtime_clusterexp})

We recall Eq. \eqref{eq:mixing_time_bound};

\begin{align}
    t_{\text{mix}} \leq \bar{\rho} \cdot L \cdot \log \left(\frac{1}{p_{\text{min}} \, \varepsilon_{\text{mix}}}\right).
\end{align}
    The maximum length of the canonical path, $L$, can be bounded as
    \begin{align}
        L \leq& (\text{max. steps of peeling phase}) + (\text{max. steps of transport phase}) + (\text{max. steps of construction phase}) \\
        \leq& \mathcal{O}(\log (\vert \tilde{V}(\tilde{G}) \vert \varepsilon^{-1} ))+ \text{diam}(\tilde{G}(\tilde{V},\tilde{E})) + \mathcal{O}(\log (\vert \tilde{V}(\tilde{G}) \vert \varepsilon^{-1} ))  \leq \mathcal{\tilde{O}}(\text{diam}(\tilde{G}(\tilde{V},\tilde{E}))),
    \end{align}
    where $\text{diam}(\tilde{G}(\tilde{V},\tilde{E}))$ is the maximum diameter of the graph $\tilde{G}(\tilde{V},\tilde{E})$, which upper bounds the length of any geodesic from any point to any other.

    On the other hand, a valid lower bound on the minimum stationary probability corresponds to a valid cluster reaching the maximum truncation size $\norm{\Gamma} = m$, where every constituent polymer attains the minimum algorithmic weight threshold $\kappa(\gamma)$. Recalling the target distribution $p(\Gamma) = \frac{1}{A_m} \prod_{\gamma \in \Gamma} \vert w_\gamma \vert$, we bound this worst-case state as:
\begin{align}
    p_{\text{min}} = \min_{\Gamma \in \mathcal{G}_{\text{alg}} } p(\Gamma) \geq \frac{1}{A_m} \kappa_{\text{min}}^{2m},
\end{align}
where $\kappa_{\text{min}}$ absorbs the polynomial lower bounds of the algorithmic threshold (recall Sec. \ref{subsection_algorithmicpolymermodel}). While $p_{\text{min}}$ itself is vanishingly small, the mixing time penalty depends strictly on its logarithm, which neatly collapses this exponential dependence
\begin{align}
    \log \left( \frac{1}{p_{\text{min}}} \right) \leq \log(A_m) + 2m \log \left( \frac{1}{\kappa_{\text{min}}} \right). \label{eq:pmin_log_collapse}
\end{align}
Since the truncation size is $m = \mathcal{O}(\log(\vert \tilde{V}(\tilde{G}) \vert \varepsilon^{-1}))$ and $\kappa_{\text{min}}$ scales at worst as $\left(N^{-1} \varepsilon\right)^{\mathcal{O}(1)}$, this entire expression scales strictly as $\mathcal{O}(\log^3(\vert \tilde{V}(\tilde{G}) \vert \varepsilon^{-1})) \equiv \mathcal{\tilde{O}}(1)$. This, together with Lem. \ref{lemma_congestionbound} yields

\begin{align}
    t_{\text{mix}} \leq \mathcal{\tilde{O}}( \text{diam}(\tilde{G}(\tilde{V},\tilde{E})) \cdot  \vert \tilde{V}(\tilde{G}) \vert \varepsilon^{-1} ).
\end{align}
     
\end{proof}

\subsection{Auxiliary results for annealing procedure}

\subsubsection{Samples from modified distributions $p_\rho$}
 
\begin{lemma}[Annealed samples by filtering]
\label{lem_app_AnnealingFilter}
Let the polymer model satisfy Con. \ref{con:kotecky_preiss_criteria} and Con. \ref{ass::uniform_local_weights}, and let $\rho\ge0$. Consider
the procedure that draws $\Gamma$ from $p_0$ using the Cluster Markov chain of
Sec.\ref{subsec::cluster_markov_chain}, accepts it with probability
\begin{equation}
  f_\rho(\Gamma)\;=\;\mathrm{e}^{-\rho\left(\|\Gamma\|-r\right)},
  \label{eq:filterdef}
\end{equation}
and on rejection draws a fresh sample and repeats. Then the accepted cluster is
distributed exactly as $p_\rho$, and the expected number of draws per accepted
sample is at most $1/a_*$, where
\begin{equation}
  a_*\;=\;\wc\;=\;\Omega(1)
  \label{eq:astar}
\end{equation}
is independent of $\rho$, of $m$ and of $\Vt$.
\end{lemma}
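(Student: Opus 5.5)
The plan is a standard rejection-sampling argument in two parts: first show the filtered output has law exactly $p_\rho$, then bound the acceptance probability from below.

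\textbf{Correctness.} Since $w_\gamma(\rho)=w_\gamma\mathrm{e}^{-\rho\|\gamma\|}$, every $\Gamma\in\mathcal G_{\text{alg}}$ satisfies
\[
\prod_{\gamma\in\Gamma}|w_\gamma(\rho)|^{m_\gamma(\Gamma)}=\mathrm{e}^{-\rho\|\Gamma\|}\prod_{\gamma\in\Gamma}|w_\gamma|^{m_\gamma(\Gamma)}.
\]
Hence
\[
p_\rho(\Gamma)=\frac{A_m(0)}{A_m(\rho)}\,\mathrm{e}^{-\rho\|\Gamma\|}\,p_0(\Gamma)\;\propto\; f_\rho(\Gamma)\,p_0(\Gamma).
\]
The factor $\mathrm{e}^{\rho r}$ is a constant. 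By Con.~\ref{ass::uniform_local_weights} every cluster of nonzero weight has $\|\Gamma\|\ge r$, so $f_\rho\in[0,1]$ is a valid acceptance probability. Conditioning a draw from $p_0$ on acceptance yields density proportional to $f_\rho p_0$, which is exactly $p_\rho$. The retries are independent, so the accepted sample has law $p_\rho$, and the number of draws is geometric with mean $1/a(\rho)$.

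\textbf{Acceptance probability.} We have
\[
a(\rho)=\mathbb E_{p_0}[f_\rho]=\mathrm{e}^{\rho r}\,\frac{A_m(\rho)}{A_m(0)}.
\]
To bound the numerator from below, we keep only the single-polymer clusters of size $r$:
\[
A_m(\rho)\ge \mathrm{e}^{-r\rho}B_r \quad\Longrightarrow\quad a(\rho)\ge \frac{B_r}{A_m}.
\]
Eq.~\eqref{eq:Brchain} gives $B_r\ge \wc\Vt$, and Lem.~\ref{lemma_Aextensive} gives $A_m\le \Vt$, since the bound $A'\le\Vt$ dominates $A_m$. Combining these,
\[
a(\rho)\ge \wc,
\]
which is uniform in $\rho$, $m$ and $\Vt$.

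\textbf{Main obstacle.} The delicate point is the upper bound $A_m\le\Vt$ with constant $1$ rather than an $\mathcal O(1)$ factor. This needs the covering argument from App.~\ref{app:A_convergence}: every cluster is incompatible with some $v\in\tilde V$, and $A'(v)\le a(v)\le 1$ for the single-vertex polymer. If only $A_m=\mathcal O(\Vt)$ is available, the constant $a_*$ degrades to $\Omega(\wc)$, which is still $\Omega(1)$ as the statement requires.

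A secondary point is the Markov chain bias. The lemma treats draws from $p_0$ as exact; with $\varepsilon_{\mathrm{mix}}$-approximate samples, the coupling argument from the proof of Thm.~\ref{thm:sampling_cexp} absorbs the error. The expected $1/\wc$ draws per accepted sample only rescale the union bound by a constant.
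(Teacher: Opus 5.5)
Your proposal is correct and follows essentially the same route as the paper. The paper likewise normalizes the likelihood ratio $p_\rho/p_0\propto \mathrm{e}^{-\rho\|\Gamma\|}$ at the minimal size $r$ to obtain the filter, computes the acceptance rate $\mathrm{e}^{r\rho}A_m(\rho)/A_m(0)$, and bounds it below by keeping only size-$r$ clusters, which gives $B_r/A_m\ge \wc\Vt/\Vt$ via Eq.~\eqref{eq:Brchain} and Lem.~\ref{lemma_Aextensive}.
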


\begin{proof}
The laws $p_0$ and $p_\rho$ differ only through the factor
$\mathrm{e}^{-\rho\|\Gamma\|}$,
\begin{align}
    \frac{p_\rho(\Gamma)}{p_0(\Gamma)} = \frac{A_m(0)}{A_m(\rho)}\,
    \mathrm{e}^{-\rho\|\Gamma\|} .
\end{align}
By Con. \ref{ass::uniform_local_weights} every cluster of $\GC$ with
non-vanishing weight satisfies $\|\Gamma\|\ge r$, so this ratio is maximal on
the clusters of minimal size. Normalising by that maximum yields
\eqref{eq:filterdef}, which therefore takes values in $(0,1]$ and is a valid
acceptance probability.

The mean acceptance is
\begin{align}
  a_\rho&=\sum_{\Gamma\in\GC}p_0(\Gamma)\,\mathrm{e}^{-\rho(\|\Gamma\|-r)}\notag\\
  &=\frac{\mathrm{e}^{r\rho}}{A_m(0)}\sum_{\Gamma\in\GC}\prod_{\gamma\in\Gamma}
  \left|w_\gamma^{m_\gamma(\Gamma)}(\rho)\right|
  =\frac{\mathrm{e}^{r\rho}A_m(\rho)}{A_m(0)} ,
  \label{eq:arho}
\end{align}
so that, conditioning on acceptance,
\begin{align}
   \Pr[\Gamma \mid \text{accepted}] = \frac{p_0(\Gamma)\,f_\rho(\Gamma)}{a_\rho}
    = \frac{\prod_{\gamma\in\Gamma}\left|w_\gamma^{m_\gamma(\Gamma)}(\rho)\right|}
           {A_m(\rho)} = p_\rho(\Gamma),
\end{align}
the factors $\mathrm{e}^{r\rho}$ and $A_m(0)$ cancelling between numerator and
denominator. The procedure thus returns exact samples from $p_\rho$.

It remains to bound the acceptance probability from below. Discarding in
\eqref{eq:arho} all clusters of size larger than $r$, whose contributions are
non-negative, and using that $f_\rho\equiv1$ on those of size exactly $r$,
\begin{align}
    a_\rho \;\geq\; \sum_{\Gamma\,:\,\|\Gamma\|=r} p_0(\Gamma)
    \;=\; \frac{B_r}{A_m(0)} \;\geq\; \frac{\wc\Vt}{\Vt} \;=\; a_* ,
\end{align}
where we used $B_r \geq \wc \Vt$ from \eqref{eq:Brchain} and
$A_m \leq \Vt$ from Lem. \ref{lemma_Aextensive}. The number of draws per
accepted sample is geometric with success probability $a_\rho\ge a_*$, hence has
mean at most $1/a_*$.
\end{proof}

\subsubsection{Proof of Lemma \ref{lemm:Wi_expectedvalues_v2}}\label{app_annealingproofs}
Recall $W_i = \prod_{\gamma \in \Gamma} \left\vert \frac{w_\gamma(i)}
{w_\gamma(i+1)} \right\vert^{m_\gamma(\Gamma)} = \mathrm{e}^{\norm{\Gamma}}$
with $\Gamma \sim p_{i+1}$. It follows from the definition of $p_\rho$ that
\begin{align*}
\mathbb{E}_{p_{i+1}}[W_i] &= \sum_{\Gamma \in \mathcal{G}_C} p_{i+1}(\Gamma)
   W_i(\Gamma) = \frac{1}{A_m(i+1)} \sum_{\Gamma \in \mathcal{G}_C} \prod_{\gamma \in \Gamma}
   \vert w_\gamma \vert^{m_\gamma(\Gamma)} \mathrm{e}^{-(i+1)\norm{\Gamma}}
   \mathrm{e}^{\norm{\Gamma}} = \frac{A_m(i)}{A_m(i+1)} ,
\end{align*}
and that
\begin{align}
\mathbb{E}_{p_{i+1}}[W_i^2] &= \sum_{\Gamma \in \mathcal{G}_C} p_{i+1}(\Gamma) W_i(\Gamma)^2 = \frac{1}{A_m(i+1)} \sum_{\Gamma \in \mathcal{G}_C} \prod_{\gamma \in \Gamma} \vert w_\gamma \vert \mathrm{e}^{-(i+1)\norm{\gamma}} \prod_{\gamma \in \Gamma} \mathrm{e}^{2\norm{\gamma}} \\
&= \frac{1}{A_m(i+1)} \sum_{\Gamma \in \mathcal{G}_C} \prod_{\gamma \in \Gamma} \vert w_\gamma \vert \mathrm{e}^{-(i-1)\norm{\gamma}} = \frac{A_m(i-1)}{A_m(i+1)}.
\end{align}
Since $W_0, \dots, W_{\ell-1}$ are mutually independent, we obtain

\begin{align}
    \mathbb{E}[W] = \prod_{i=0}^{\ell-1} \mathbb{E}[W_i] = \prod_{i=0}^{\ell-1} \frac{A_m(i)}{A_m(i+1)} = \frac{A_m(0)}{A_m(\ell)}
\end{align}
and

\begin{align}
    \mathbb{E}[W^2] = \prod_{i=0}^{\ell-1} \mathbb{E}[W_i^2] = \prod_{i=0}^{\ell-1} \frac{A_m(i-1)}{A_m(i+1)} = \frac{A_m(-1)A_m(0)}{A_m(\ell)A_m(\ell-1)}.
\end{align}

\subsection{Computation of Ursell functions}\label{app:ursell_calculation_cost}

     \begin{lemma}\label{lemm::calculation_ursell}
Let the polymer model  $(\mathcal{C}, w, \sim)$ satisfy Con. \ref{con:kotecky_preiss_criteria}. Then the Ursell function $\varphi(\Gamma)$ can be computed in expected runtime $\mathcal{O}(\vert \tilde{V}(\tilde{G}) \vert /A_m)$.
\end{lemma}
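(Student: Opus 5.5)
The approach is to compute $\varphi(\Gamma)$ with an algorithm whose cost is exponential only in the number of polymers $|\Gamma|$ (equivalently in $\norm{\Gamma}$). We then average this cost against the stationary law $p(\Gamma)$ and use the Koteck\'y--Preiss bound with the specific choice $d(\gamma)=2\log 2\,\norm{\gamma}$ to absorb the exponential factor.

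\emph{Step 1: deterministic cost of one evaluation.} The Ursell function depends only on the incompatibility graph $H_\Gamma$. This is a graph on $n=|\Gamma|$ vertices, since repeated polymers are pairwise incompatible and so give cliques. Up to sign, $\varphi(\Gamma)$ is the evaluation $T_{H_\Gamma}(1,0)$ of the Tutte polynomial, i.e.\ a signed count of connected spanning subgraphs. I would invoke the vertex-exponential algorithm of Bj\"orklund--Husfeldt--Kaski--Koivisto, which computes it in $\mathcal{O}(2^{n}\mathrm{poly}(n))$ time. A simpler alternative is the standard recursion $\log$-of-connected-graph-counting over subsets, $\varphi(H)=-\sum_{S\ni v_0,\,S\subsetneq V}\varphi(H[S])\,\cdots$, which costs $\mathcal{O}(3^n)$. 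Building $H_\Gamma$ itself costs $\mathcal{O}(n^2)$ incompatibility checks. Since every polymer has size at least one, $n\le\norm{\Gamma}$, so the cost is at most $c\,4^{\norm{\Gamma}}$ for a constant $c$; the polynomial factors are absorbed because $2^n\mathrm{poly}(n)\le C\,4^n$.

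\emph{Step 2: expectation under $p$.} The expected runtime is
\[
\mathbb{E}_{p}[\mathrm{cost}]\le \frac{c}{A_m}\sum_{\Gamma\in\mathcal{G}_{\text{alg}}}\prod_{\gamma\in\Gamma}|w_\gamma|^{m_\gamma(\Gamma)}\,4^{\norm{\Gamma}}
=\frac{c}{A_m}\sum_{\Gamma}\prod_{\gamma\in\Gamma}\left(|w_\gamma|\mathrm{e}^{d(\gamma)}\right)^{m_\gamma(\Gamma)} .
\]
Every cluster is incompatible with some vertex of the covering set $\tilde V(\tilde G)$, so I split the sum over roots $v\in\tilde V$. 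I then apply the first inequality of Lem.~\ref{lemma_Aextensive}, using $\mathrm{e}^{a}\ge1$, which bounds each rooted sum by $a(v)=\mathcal{O}(1)$. This gives $\mathbb{E}_p[\mathrm{cost}]\le \mathcal{O}(|\tilde V(\tilde G)|/A_m)$, as claimed.

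\emph{Main obstacle.} The delicate point is matching the base of the exponential in Step~1 to $\mathrm{e}^{d(\gamma)}=4^{\norm{\gamma}}$. This requires an algorithm that is genuinely $2^{n}$ up to polynomial factors, or at least one whose exponential rate is at most $4$, together with the reduction from multiplicities to $n\le\norm{\Gamma}$. If the direct subset recursion ($3^n$) is used, it still fits inside the $4^n$ budget. The remaining care is to make sure the multiplicity factors $1/m_\gamma!$, absent from $p$, do not matter; they do not enter the cost bound at all. It also needs checking that the rooted sum in Lem.~\ref{lemma_Aextensive}, stated for polymers $\gamma^*$, applies to single vertices $v$. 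This is handled by treating $\{v\}$ as a size-one reference polymer with $a(v)=1/2$.
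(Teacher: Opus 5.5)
Your proposal is correct and follows the paper's proof. You write $\varphi(\Gamma)=\pm T_{H_\Gamma}(1,0)$, use the vertex-exponential Tutte algorithm of Bj\"orklund et al.\ together with $|H_\Gamma|\le\|\Gamma\|$ to fit the cost inside $\mathrm{e}^{d(\gamma)}=4^{\|\gamma\|}$ per polymer, and then average under $p$ and apply Lem.~\ref{lemma_Aextensive} rooted at the covering vertices to obtain $|\tilde V(\tilde G)|/A_m$. The only differences are cosmetic: you quote the $2^n\,\mathrm{poly}(n)$ exponential-space running time where the paper quotes $3^{|H_\Gamma|}|H_\Gamma|^{\mathcal{O}(1)}$, both absorbed by the $4^{\|\Gamma\|}$ budget, and you make explicit the treatment of $\{v\}$ as a size-one reference polymer, which the paper leaves implicit.
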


\begin{proof}
Our proof follows the one of \cite{mann2024algorithmic}. For a connected graph $H$, we have
\begin{equation}
    \varphi(\Gamma) =  \sum_{\substack{S \subseteq E(H_\Gamma) \\ \text{spanning} \\ \text{connected}}} (-1)^{|S|} = (-1)^{|V(H_\Gamma)|-1} T_H(1,0),
\end{equation}
where $T_H(x, y)$ denotes the Tutte polynomial of $H$ defined by
\begin{equation}
    T_H(x, y) := \sum_{S \subseteq E} (x - 1)^{k(S) - k(E)} (y - 1)^{k(S) + |S| - |H_\Gamma|}.
\end{equation}
Here $k(S)$ denotes the number of connected components of the subgraph with edge set $S$. The Ursell function can then be computed in time $3^{\vert H_\Gamma \vert} \vert H_\Gamma \vert^{\mathcal{O}(1)}$ by evaluating the Tutte polynomial via the algorithm of \cite{bjorklund2008computingtuttepolynomialvertexexponential}. Then, by Lem. \ref{lemma_Aextensive}
\begin{align*}
    \mathbb{E}_{p(\Gamma)} [\text{runtime of computing }\varphi(\Gamma)] \leq \frac{1}{A_m} \sum_{\Gamma \in \mathcal{G}_C} 4^{\norm{\gamma}} \prod_{\gamma \in \Gamma } \vert w_\gamma \vert^{m_\gamma (\Gamma)} \leq \frac{\vert \tilde{V}(\tilde{G}) \vert}{A_m}.
\end{align*}
Our choice of scaling parameter $d(\gamma) = 2 \log 2 \cdot \norm{\gamma}$ (yielding $\mathrm{e}^{d(\gamma)} = 4^{\norm{\gamma}}$) is designed to exponentially dominate the $\mathcal{O}(3^{\vert H_\Gamma \vert} \vert H_\Gamma \vert^{\mathcal{O}(1)})$ runtime of the Tutte polynomial evaluation, exploiting the bound $\vert H_\Gamma \vert \le \norm{\Gamma}$. We note that the strict algebraic inequality $3^n n^{\mathcal{O}(1)} \le 4^n$ holds only for polymer sizes $n$ exceeding a constant threshold $n_0$. For small polymers ($n < n_0$), the ratio between the exact runtime and our $4^n$ bound is strictly maximized by a finite universal constant. This $\mathcal{O}(1)$ prefactor is trivially absorbed into the overarching asymptotic notation ($\mathcal{O}$ or $\mathcal{\tilde{O}}$) of our final algorithmic complexity.
\end{proof}

\subsection{Confidence amplification}\label{app:amplification}

The proof of Thm.~\ref{thm:sampling_cexp} fixes the internal failure
probabilities to constants and then boosts the resulting constant confidence to
an arbitrary $1-\delta$. We record the standard argument here, and explain why a
Chernoff bound is the appropriate tool at this particular step.
 
\begin{lemma}[Median amplification]\label{lem:median_amplification}
    Let $\mathcal{A}$ be a randomized algorithm that, for a target precision
    $\varepsilon > 0$, outputs an estimate $\widehat{x}$ of a quantity
    $x \in \mathbb{C}$ satisfying $\vert \widehat{x} - x \vert \leq \varepsilon$
    with probability at least $2/3$, and whose expected runtime is
    $\bar{T}(\varepsilon)$. Then, for every $\delta \in (0,1)$, there is an
    algorithm with expected runtime
    $\mathcal{O}\big(\bar{T}(\varepsilon/\sqrt{2}) \log \delta^{-1}\big)$ whose
    output satisfies $\vert \widehat{x} - x \vert \leq \varepsilon$ with
    probability at least $1-\delta$.
\end{lemma}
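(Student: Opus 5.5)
We prove Lem.~\ref{lem:median_amplification} by the standard median trick, adapted to complex-valued outputs. The natural first attempt is to run $\mathcal{A}$ independently $K$ times and return the complex median, but in $\mathbb{C}$ there is no total order. So we reduce to the real case by taking coordinatewise medians. This reduction is what forces the precision $\varepsilon/\sqrt{2}$ in the statement.

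\emph{Algorithm.} Call $\mathcal{A}$ with precision $\varepsilon/\sqrt{2}$, $K=\lceil c\log\delta^{-1}\rceil$ times with independent randomness. This produces outputs $\widehat{x}_1,\dots,\widehat{x}_K$. Return
\[
\widehat{x}=\operatorname{med}_i \operatorname{Re}\widehat{x}_i+\mathrm{i}\,\operatorname{med}_i\operatorname{Im}\widehat{x}_i .
\]
Call run $i$ \emph{good} if $|\widehat{x}_i-x|\le\varepsilon/\sqrt{2}$; by hypothesis each run is good independently with probability $\ge 2/3$. A good run satisfies $|\operatorname{Re}\widehat{x}_i-\operatorname{Re}x|\le\varepsilon/\sqrt2$, and likewise for the imaginary part.

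\emph{Median step.} If more than $K/2$ runs are good, then more than half of the real parts lie in the interval $[\operatorname{Re}x-\varepsilon/\sqrt2,\operatorname{Re}x+\varepsilon/\sqrt2]$. The median of a list with a strict majority inside an interval lies in that interval, so $|\operatorname{Re}\widehat{x}-\operatorname{Re}x|\le\varepsilon/\sqrt2$. The same argument, using the same majority of good runs, gives $|\operatorname{Im}\widehat{x}-\operatorname{Im}x|\le\varepsilon/\sqrt2$. Combining the two coordinates,
\[
|\widehat{x}-x|\le\sqrt{\varepsilon^2/2+\varepsilon^2/2}=\varepsilon .
\]

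\emph{Probability bound.} This is the only quantitative step. The number of good runs $G$ is a sum of $K$ independent indicators with mean at least $2K/3$. By the multiplicative Chernoff (or Hoeffding) bound,
\[
\Pr(G\le K/2)\le \exp(-2K(1/6)^2)=\mathrm{e}^{-K/18}.
\]
Choosing $c=18$ makes this at most $\delta$. Chebyshev (Lem.~\ref{lemma:chebyshev}) would only give a $1/K$ decay, which is why an exponential concentration inequality is needed here.

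\emph{Runtime.} By linearity of expectation, the expected runtime is $K\,\bar T(\varepsilon/\sqrt2)$ plus $\mathcal{O}(K)$ time for the two medians (linear-time selection), i.e. $\mathcal{O}(\bar T(\varepsilon/\sqrt2)\log\delta^{-1})$. No step is a real obstacle; the points requiring care are the coordinatewise reduction for complex outputs and the use of Chernoff rather than Chebyshev.
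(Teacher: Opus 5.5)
Your proposal is correct and follows essentially the same route as the paper: $K=\mathcal{O}(\log\delta^{-1})$ independent runs at precision $\varepsilon/\sqrt{2}$, coordinatewise medians, the Hoeffding tail $\mathrm{e}^{-K/18}$, the strict-majority-implies-median-in-interval argument, and recombination of the two coordinates into total error $\varepsilon$. The only difference is minor. You call a run good according to its full complex error, so a single majority event controls both coordinates at once. The paper instead bounds the real and imaginary parts separately and takes a union bound, which is why it needs $K\ge 18\log(2\delta^{-1})$ rather than your $K\ge 18\log\delta^{-1}$.
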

 
\begin{proof}
    Run $\mathcal{A}$ independently $K$ times at precision $\varepsilon/\sqrt2$,
    with $K$ odd and $K \geq 18 \log (2\delta^{-1})$, obtaining estimates
    $\widehat{x}^{(1)}, \dots, \widehat{x}^{(K)}$, and output 
    \begin{align}
        \widehat{x} = \operatorname{med}_j \big\{ \operatorname{Re}
        \widehat{x}^{(j)} \big\} + i \operatorname{med}_j \big\{
        \operatorname{Im} \widehat{x}^{(j)} \big\},
    \end{align}
being $\operatorname{med}$ the median. Since the repetitions are independent executions of the same algorithm,
    linearity of expectation gives an expected runtime
    $K \cdot \bar{T}(\varepsilon/\sqrt{2})$.
 
    Consider the real parts. The indicator variables for a successful estimate,
    $X_j = \mathbb{1} \big[ \vert \operatorname{Re}(\widehat{x}^{(j)} - x) \vert \leq \varepsilon/\sqrt{2} \big]$,
    are independent Bernoulli variables with success probability $p \geq 2/3$. Let $S = \sum_j X_j$ be the number of successful estimates, with expected value $\mu = pK$. The standard Chernoff--Hoeffding bound states that $\Pr[S \le \mu - t] \le \mathrm{e}^{-\frac{2t^2}{K}}$. The condition for the median to fall within the error interval is $S > K/2$. Setting the deviation $t = \mu - K/2 = K(p-1/2)$, we obtain
    \begin{align}
        \Pr\Big[ \textstyle\sum_j X_j \leq K/2 \Big]
        \leq \mathrm{e}^{-2K(p-1/2)^2} \leq \mathrm{e}^{-K/18}
        \leq \delta/2 .
    \end{align}
    Whenever more than half of a finite set of reals lie in an interval, so does
    their median: were the median below the target interval, at least $\lceil K/2 \rceil$ of the
    values would lie below it as well, leaving fewer than $K/2$ inside it, and
    symmetrically from above. Hence,
    $\vert \operatorname{med}_j \{\operatorname{Re}\widehat{x}^{(j)}\} -
    \operatorname{Re} x \vert \leq \varepsilon/\sqrt{2}$ except with probability at most
    $\delta/2$. The same bound applies to the imaginary parts. Since each coordinate fails with probability at most $\delta/2$, the total probability that either median exceeds its allowed error is bounded by $\delta$. When both coordinates succeed, the total error in the complex plane is bounded by
    $\vert \widehat{x} - x \vert^2 \leq 2 (\varepsilon/\sqrt{2})^2 = \varepsilon^2$, yielding the claim.
\end{proof}
We apply Lem.~\ref{lem:median_amplification} to the final estimator
$\widehat{T}_m = \widehat{\xi} \cdot \widehat{A_m}$.
 
\begin{rem}[Why a Chernoff bound at this step]\label{rem:why_chernoff}
    Throughout this work sample complexities are controlled with Chebyshev's
    inequality, Eq.~\eqref{eq:sample_complexity}, which requires only a bound on
    the variance. The reason for a different tool here is structural. The
    variables $X_j$ above are indicator functions, hence bounded in $\{0,1\}$ by
    construction, which is precisely the hypothesis a Chernoff bound requires,
    and the exponential tail it provides is what converts a $\delta^{-1}$
    dependence into $\log \delta^{-1}$.
    Applying a Chernoff--Hoeffding bound directly to $\xi$ instead would be
    substantially weaker, because such bounds are governed by the \emph{range}
    of the variable rather than by its variance. While
    Lem.~\ref{lemma_finitevariance} guarantees a second moment
    $\mathbb{E}[\vert \xi \vert^2] = \mathcal{O}(1)$, the range of $\xi$ is not
    $\mathcal{O}(1)$: for a cluster of size $\norm{\Gamma} \leq 2m$ the
    incompatibility graph $H_\Gamma$ has at most $\binom{m}{2}$ edges, so in general
    $\vert \varphi(\Gamma)\vert \leq 2^{\mathcal{O}(m^2)}$, which for our choice
    of $m$ is quasi-polynomial in $\vert V(G)\vert \varepsilon^{-1}$. However, using
    Chebyshev's inequality within each repetition and a Chernoff bound across
    repetitions yields the desired exponential decrease in runtime $\delta^{-1} \rightarrow \log \delta^{-1}$.
\end{rem}

    \section{Long-range quantum interactions: Proofs}

      \subsection{Long-range interactions: Proof of Koteck\'y-Preiss criteria} \label{app:kp_longrange}

	\begin{proof}(Lem. \ref{lem::koteckypreiss_longrange})
	    Let $a(\gamma)= \norm{\gamma}/2$ and $d(\gamma)= 2\log2 \cdot \norm{\gamma}$. To prove Koteck\'y-Preiss criteria it suffices to show that for any $v\in G(V,E)\equiv Z$

        \begin{equation}\label{eq:app_koteckypreiss_longrange_1st}
            \sum_{\gamma \nsim Z^*}  \vert w_\gamma \vert \mathrm{e}^{\frac{1}{2}\norm{\gamma}  + 2\log2\norm{ \gamma}} \leq \frac{1}{2},
            \end{equation}

            so it is necessary to obtain a bound on $\vert w_\gamma \vert $.
	
	\begin{lemma}\label{lemma_boundweight}
		Let $H$ be a Hamiltonian as in Sec. \ref{sec:quantumHamilton}, then

        \begin{equation}\label{eq:app_weightbound_longrange}
			\vert w_\gamma \vert \leq \sum_{n= \norm{\gamma}}^\infty\stirlingII{n}{\norm{\gamma}} \frac{\norm{\gamma}!}{n!} \vert \beta g \vert^n \prod_{Z \in \gamma} \frac{\norm{h_Z}}{g} \leq 
			\left(\mathrm{e}^{\vert \beta g \vert}-1\right)^{\norm{\gamma}} \prod_{Z \in \gamma} \frac{\norm{h_Z}}{g},
		\end{equation}
	\end{lemma}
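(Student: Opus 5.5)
We need to bound the polymer weight
\[
w_\gamma=(-1)^{\norm{\gamma}}\sum_{T\subseteq\gamma}(-1)^{|T|}\operatorname{Tr}\big[\mathrm e^{-\beta\sum_{Z\in T}h_Z}\big].
\]
The plan is to Taylor expand the exponentials and use inclusion–exclusion to keep only the words in which every term of $\gamma$ appears. Set $x_Z:=-\beta h_Z$. Then
\[
\operatorname{Tr}\big[\mathrm e^{\sum_{Z\in T}x_Z}\big]=\sum_{n\ge0}\frac{1}{n!}\sum_{(Z_1,\dots,Z_n)\in T^n}\operatorname{Tr}[x_{Z_1}\cdots x_{Z_n}],
\]
with absolute convergence since all operators are bounded.

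\textbf{Step 1: isolate surjective words.} Insert this expansion into the alternating sum and exchange the order of summation. A word $(Z_1,\dots,Z_n)\in\gamma^n$ whose set of letters is $S\subseteq\gamma$ is counted with coefficient $\sum_{T\supseteq S}(-1)^{|T|}$. This equals $(-1)^{\norm{\gamma}}$ if $S=\gamma$ and vanishes otherwise. It follows that
\[
w_\gamma=\sum_{n\ge\norm{\gamma}}\frac{1}{n!}\sum_{\substack{(Z_1,\dots,Z_n)\\ \{Z_i\}=\gamma}}\operatorname{Tr}[x_{Z_1}\cdots x_{Z_n}].
\]
This cancellation is the only non-routine ingredient. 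I would carry it out carefully as a Möbius inversion on the subset lattice, checking the sign conventions against the definition of $w_\gamma$.

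\textbf{Step 2: bound each word.} With the normalized trace, $|\operatorname{Tr}[A]|\le\norm{A}$, and submultiplicativity gives $|\operatorname{Tr}[x_{Z_1}\cdots x_{Z_n}]|\le\beta^n\prod_i\norm{h_{Z_i}}$. Suppose the multiplicities are $k_Z\ge1$ with $\sum_Z k_Z=n$. Writing $\norm{h_Z}^{k_Z}=g^{k_Z}(\norm{h_Z}/g)^{k_Z}$ and using $\norm{h_Z}\le g$ (which follows from Eq.~\eqref{eq:longrange_definition}, or from the normalization of Sec.~\ref{sec:quantumHamilton}), we get
\[
\prod_Z\norm{h_Z}^{k_Z}\le g^n\prod_{Z\in\gamma}\frac{\norm{h_Z}}{g}.
\]

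\textbf{Step 3: count words and sum.} The number of words of length $n$ that are surjective onto $\gamma$ is $\norm{\gamma}!\,\stirlingII{n}{\norm{\gamma}}$. Multiplying by the bound of Step 2 and summing over $n$ yields the first inequality. For the second, use the exponential generating function
\[
\sum_{n}\stirlingII{n}{k}\frac{y^n}{n!}=\frac{(\mathrm e^y-1)^k}{k!}
\]
with $k=\norm{\gamma}$ and $y=|\beta g|$. Multiplying by $k!$ turns it into exactly $(\mathrm e^{|\beta g|}-1)^{\norm{\gamma}}$, which gives the second inequality.
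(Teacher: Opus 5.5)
Your proof is correct and follows the same route as the paper. Both arguments Taylor-expand the exponentials, use inclusion--exclusion to cancel every word whose letter set is not all of $\gamma$, bound each surviving word by norms, count the $\norm{\gamma}!\,\stirlingII{n}{\norm{\gamma}}$ surjective words, and resum with the exponential generating function of the Stirling numbers. You are slightly more careful than the paper: you write out the cancellation explicitly, and you correctly place the use of $\norm{h_Z}\le g$ in the first inequality, since the second is an identity. Note that for single-site terms this bound does not follow from Eq.~\eqref{eq:longrange_definition}, because that condition only constrains terms containing an edge; like the paper, you must take it as a standing normalization of $H$.
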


	\begin{proof}
        The proof proceeds similarly to the one of Lemma 23 of \cite{mann2024algorithmic}. Let $P$ denote the set of all sequences of edges in $\gamma$.
        By the Taylor series
		\begin{equation}
			\vert w_\gamma \vert \leq \norm{\sum_{T \subseteq E(\gamma)} (-1)^{\vert T \vert} \mathrm{e}^{-\beta \sum_{Z \in T } h_Z}} \leq \sum_{\substack{\rho \in P \\ \text{supp}(\rho)= \gamma}} \frac{\vert \beta \vert^{\vert \rho \vert}}{\vert \rho \vert !} \prod_{Z \in \rho} \norm{h_Z} = \sum_{\substack{\rho \in P \\ \text{supp}(\rho)= \gamma}} \frac{\vert \beta g \vert^{\vert \rho \vert}}{\vert \rho \vert !} \prod_{Z \in \rho} \frac{\norm{h_Z}}{g}.
		\end{equation}
		There are precisely $\stirlingII{n}{\norm{\gamma}}\,\norm{\gamma}!$ sequences $\rho$ of length $n$ whose support is $\gamma$, where $\stirlingII{n}{k}$ denotes the Stirling number of the second kind. Hence
		
		\begin{equation}
			\vert w_\gamma \vert \leq \sum_{n= \norm{\gamma}}^\infty\stirlingII{n}{\norm{\gamma}} \frac{\norm{\gamma}!}{n!} \vert \beta g \vert^n \prod_{Z \in \gamma} \frac{\norm{h_Z}}{g} \leq 
			\left(\mathrm{e}^{\vert \beta g \vert}-1\right)^{\norm{\gamma}} \prod_{Z \in \gamma} \frac{\norm{h_Z}}{g},
		\end{equation}
where in the last inequality we used the fact that $\norm{h_Z}/ g \leq 1$.
	\end{proof}

    Plugging Lem. \ref{lemma_boundweight} in Eq. \eqref{eq:app_koteckypreiss_longrange_1st} yields

	\begin{equation}\label{eq:app_koteckypreiss_longrange_2nd}
		\sum_{\gamma \nsim Z^*}  \vert w_\gamma \vert \mathrm{e}^{\norm{\gamma}  + 2\log2\norm{ \gamma}} \leq \sum_{m=1}^{\infty} \left(\mathrm{e}^{\left(\frac{1}{2}+ 2\log2 \right)}\left(\mathrm{e}^{\vert \beta g \vert}-1\right)\right)^{m}   \underbrace{\sum_{\substack{\gamma \nsim Z^* \\ \norm{\gamma} =m}}  \prod_{Z \in \gamma} \frac{\norm{h_Z}}{g}}_{\mathbf{S}_{Z^*}} 
	\end{equation}
To upper bound $\mathbf{S}_{Z^*}$, it is convenient to divide the product into $k$ subpolymers $\{\gamma_j\}_{j=1}^k$ with multiplicities $\{m_j\}$, each connected to vertices $\{i_j\}_{j=1}^k$ . To sum over all connected polymers of cardinality $m$ connected to $Z^*$ it suffices to count over all the sets of polymers connected to the vertices $\{i_j\}$ such that $m=\sum_j m_j$. This can always be done for all $\gamma$ (see Appendix A.1 of \cite{Segovia_PRXQuantum2025} for a more detailed explanation). This leads to a (tolerable) over-counting, as different combinations of the sub-polymers $\{\gamma_j\}_{j=1}^k$ can correspond to the same polymer $\gamma$. With this in mind, we write 
\begin{align}
    \mathbf{S}_{Z^*} \leq \sum_{\substack{m_1, \dots m_j, \dots m_k \\ \sum_j m_j =m}} \prod_{j}  \sum_{ \substack{\gamma \in \Upsilon_{i_j}\\ \norm{\gamma}=m_j }}  \prod_{Z \in \gamma} \frac{\norm{h_Z}}{g}, \\
    \label{eq::S_Z*_beforebound}
\end{align}
where we now sum over all the elements of the set $\Upsilon_{i_j}$ of polymers $\gamma$ connected to site $i_j$. To continue we thus need to control the sum over all polymers, or equivalently tuples of hyperedges, of certain multiplicity $m_i$ which are connected to a given site $i$,
\begin{align}
    \mathbf{S}_i  \equiv \sum_{ \substack{\gamma \in \Upsilon_{i}\\ \norm{\gamma}=m_i }}  \prod_{Z \in \gamma}\frac{\norm{h_Z}}{g}.
    \label{eq::S_i}
\end{align}
At this point it is important to see that each polymer $\gamma$ can be associated with an auxiliary graph, labeled as $\tilde{\gamma}$, which has an edge for each hyperedge of $\gamma$, and such that it connects all the hyperedges $Z$. We refer to it as the \emph{connectivity polymer}. As it is always possible to make this association we can, using that $\sum_{Z\ni e} \norm{h_Z}\leq g/(1+d_e)^{\alpha}$, write

\begin{align}
\label{eq:app_longrangekp_conpoly}
    \mathbf{S}_{i} \leq \sum_{\substack{\tilde{\gamma} \in \tilde{\Upsilon}_i\\ \norm{\tilde{\gamma}}=m_i}} \prod_{e \in \tilde{\gamma}} \sum_{Z \nsim e} \frac{\norm{h_Z}}{g} \leq \sum_{\substack{\tilde{\gamma} \in \tilde{\Upsilon}_i\\ \norm{\tilde{\gamma}}=m_i}} \prod_{e \in \tilde{\gamma}} (1+d_e)^{-\alpha}.
\end{align}
where $\tilde{\Upsilon}_i$ is analogously the set of all connectivity polymers $\tilde{\gamma}$ connected to site $i$. Since in Eq.~\eqref{eq:app_longrangekp_conpoly} we sum over all possible connectivity polymers and, within each, over all its associated polymers, this leads to an overestimation. The benefit is that we turn the sum into one over polymers defined on graphs rather than hypergraphs, which simplifies our task considerably. For simplicity, we refer to connectivity polymers $\tilde{\gamma}$ simply as polymers from now on.

At this point we use the long-range decay properties for $\alpha >D$ and a mapping from polymers to trees (For more details see \cite{Segovia_PRXQuantum2025}) so we can bound
	
	\begin{equation}
		\sum_{\gamma \nsim Z^*}  \vert w_\gamma \vert \mathrm{e}^{\left(1/2+ 2\log2 \right)\norm{\gamma}} \leq \sum_{m=1}^{\infty} \left(\mathrm{e}^{\left(1/2+ 2\log2 \right)} \cdot 4uk\cdot\left(\mathrm{e}^{\vert \beta g \vert}-1\right)\right)^{m},
	\end{equation}
	
	and consequently for $\beta < \beta' = g^{-1} \log(1+ ( 24\sqrt{\mathrm e}\,uk)^{-1})$
	
	\begin{equation}
		\sum_{\gamma \nsim Z^*}  \vert w_\gamma \vert \mathrm{e}^{\left(1+ 2\log2 \right)\norm{\gamma}} \leq \frac{1}{2}
	\end{equation}

	so taking an arbitrary $\gamma^*$ and choosing $a(\gamma) =\norm{\gamma}/2$ and $d(\gamma) = 2\log 2 \norm{\gamma}$ yields

    \begin{align}
        \sum_{\gamma \nsim \gamma^*} \vert w_\gamma \vert \mathrm{e}^{a(\gamma)+d(\gamma)} \leq \sum_{Z \in \gamma^*} \sum_{\gamma \nsim Z} \vert w_\gamma \vert \mathrm{e}^{a(\gamma)+d(\gamma)} \leq  \frac{\norm{\gamma}}{2} = a(\gamma^*),
    \end{align}

    concluding that Lem. \ref{lem:kotecky_preiss} is satisfied. Thereby for $\beta \leq \beta^*= (2g)^{-1} \log(1+ ( 24\sqrt{\mathrm e}\,uk)^{-1})$
    \begin{align}
        \vert \log Z - T_m \vert = \sum_{\substack{(\Gamma, m_\gamma ) \in \mathcal{G}_C  \\ \norm{\Gamma}\geq m}} \vert \varphi(\Gamma) \prod_{\gamma \in \Gamma} \frac{w_\gamma^{m_\gamma(\Gamma)}}{m_\gamma(\Gamma)!} \vert \leq  \sum_{\substack{(\Gamma, m_\gamma ) \in \mathcal{G}_C  \\ \norm{\Gamma}\geq m}} \left| \frac{2 \beta}{\beta'} \right|^{\norm{\Gamma}}  \cdot  \vert \varphi(\Gamma) \prod_{\gamma \in \Gamma} \frac{w_\gamma^{m_\gamma(\Gamma)}}{m_\gamma(\Gamma)!} [\beta = \beta'/2]  \vert \leq N \left| \frac{ \beta}{\beta^*} \right|^m .
    \end{align}

	\end{proof}

    \subsection{Long-range interactions: Proof of cluster sampling condition (Con. \ref{con:clustersamplingcondition})} \label{app:clustersamplingcondition_longrange}

	\begin{proof}(Cor. \ref{coro::clustersampling_longrange})
	    Let $a(\gamma)= \norm{\gamma}/2$ and $d(\gamma)= 2\log2 \cdot \norm{\gamma}$. Proving Con. \ref{con:clustersamplingcondition} is very similar to proving Con. \ref{con:kotecky_preiss_criteria}. More concretely it suffices to show that for any $v\in G(V,E)\equiv Z$

        \begin{equation}\label{eq:app_csc_longrange_1st}
            \sum_{\gamma \nsim Z^*}  \vert w_\gamma \vert^{1/2} \mathrm{e}^{\left(1+ 2\log2 \right)\norm{\gamma}} \leq 1,
            \end{equation}

            so it is necessary to obtain a bound on $\vert w_\gamma \vert $. Plugging Lem. \ref{lemma_boundweight} in Eq. \eqref{eq:app_csc_longrange_1st} yields

	\begin{equation}\label{eq:app_csc_longrange_2nd}
		\sum_{\gamma \nsim Z^*}  \vert w_\gamma \vert^{1/2} \mathrm{e}^{\left(1+ 2\log2 \right)\norm{\gamma}} \leq \sum_{m=1}^{\infty} \left(\mathrm{e}^{\left(1+ 2\log2 \right)}\left(\mathrm{e}^{\vert \beta g \vert}-1\right)^{m/2}\right)   \underbrace{\sum_{\substack{\gamma \nsim Z^* \\ \norm{\gamma} =m}}  \prod_{Z \in \gamma}  \left(\frac{\norm{h_Z}}{g}\right)^{1/2}}_{\mathbf{S'}_{Z^*}} .
	\end{equation}
We can repeat the steps from Eq. \eqref{eq:app_koteckypreiss_longrange_2nd} to Eq. \eqref{eq:app_longrangekp_conpoly}, with the difference that in this case the more strict definition of long range from Eq.~\eqref{eq:longrange_definition} and $\alpha >2D$ are needed. This yields
	\begin{equation}
		\sum_{\gamma \nsim Z^*}  \vert w_\gamma \vert^{1/2} \mathrm{e}^{\left(\frac{1}{2}+ 2\log2 \right)\norm{\gamma}} \leq \sum_{m=1}^{\infty} \left(\mathrm{e}^{2\left(\frac{1}{2}+ 2\log2 \right)} \cdot 16u^2k^2\cdot\left(\mathrm{e}^{\vert \beta g \vert}-1\right)^{1/2}\right)^{m},
	\end{equation}
	and consequently for $\beta < \beta''= g^{-1} \log(1+ ( 24\sqrt{\mathrm e}\,uk)^{-2})$
	
	\begin{equation}
		\sum_{\gamma \nsim Z^*}  \vert w_\gamma \vert^{1/2} \mathrm{e}^{\left(1+ 2\log2 \right)\norm{\gamma}} \leq 1,
	\end{equation}
    yielding Con. \ref{con:clustersamplingcondition};

    \begin{equation}
        \sum_{\gamma \nsim \gamma^*}  \vert w_\gamma \vert^{1/2} \mathrm{e}^{\left(1+ 2\log2 \right)\norm{\gamma}} \leq \sum_{Z \in \gamma^*} \frac{1}{2} = \frac{\norm{\gamma}}{2} = a(\gamma^*)
    \end{equation}

	\end{proof}

\subsection{Long range polymer sampler}

\subsubsection{Proof of Lem. \ref{lem:sampling_algorithm}}
\label{app:longrangesampler}

The algorithm is given as input a vertex $v\in V(G,E)$ and it gives as output a polymer according the following distribution
\begin{align}
    \nu_v (\gamma)= \begin{cases}
         \vert w_\gamma \vert \quad  \forall \gamma \nsim v\\ 1- \sum_{\gamma \nsim v}  \vert w_\gamma \vert \quad  \text{if } \gamma = \emptyset
    \end{cases}
\end{align}
It proceeds as follows:

	\begin{widetext}
\begin{itemize}
		\item \textit{Step 1 (Maximum radius choice)} Sample an interaction radius $R$ according to the discrete probability distribution
\begin{equation}
    \Pr(R) = \frac{R^{-\tilde{\alpha}}}{\zeta(\tilde{\alpha})},
\end{equation}
where $\tilde\alpha$ is a free algorithmic parameter, denoted with a tilde to
distinguish it from the physical interaction exponent $\alpha$.

The admissible window for this parameter is restricted to $\bigl(D(k-1)+1,\ \alpha+1-D\bigr)$ due to algorithmic reasons (See Eqs. \eqref{eq_constrainAlphaTilde1}, \eqref{eq_constrainAlphaTilde2}). It is non-empty if and only if $\alpha>Dk$, which for two-body interactions ($k=2$) reads
$\alpha>2D$. Thus, while Con. \ref{con:clustersamplingcondition} impose the restriction $\alpha>2D$, our polymer sampler is more restricted for the $k-$local case $\alpha > k D$.

After $R$ is fixed, for the subsequent steps, we restrict our algorithm to consider only Hamiltonian terms $h_Z$ whose maximum spatial extent satisfies $d(Z)=\max_{(i,j)\in Z} d_{ij} \leq R$. This truncation reduces our model to an ``effectively local'' subgraph $G_{\text{int},R}\subset G_{\text{int}}$ with a strictly bounded maximum degree $\Delta_R$.

To rigorously bound $\Delta_R$ for a $k$-local Hamiltonian, note that any term $Z'$ incompatible (i.e., overlapping) with a fixed term $Z$ must share at least one of its $\le k$ physical sites. Once this anchoring site is fixed, the remaining $j \le k-1$ sites of $Z'$ must be selected from the $c_D R^D$ sites available within a ball of radius $R$. Summing over all valid sizes of $Z'$ yields the upper bound
\begin{equation} \label{eq:DeltaR}
    \Delta_R \le k \sum_{j=0}^{k-1} \binom{c_D R^D}{j} = \mathcal{O}\!\left(R^{D(k-1)}\right).
\end{equation}

		\item \textit{Step 2 (Seed choice)}Let $\mathcal{R}_R(v) = \{ v' \in V(G_{\text{int},R}) : v' \nsim v \}$ denote the neighborhood of $v$ within the bounded interaction graph $G_{\text{int},R}$. For each term $Z$, we introduce the probability weight 
$q_Z= \tau \left( \mathrm{e}^{\beta g}-1 \right)\frac{\norm{h_Z}}{g \cdot \text{Pr}(d_Z \leq R)}$, 
where $\tau \geq e$ is a constant. We select a term $u \in \mathcal{R}_R(v)$ to act as a seed with probability $\text{Pr}(u) = q_u$, returning $\emptyset$ with the residual probability $1 - \sum_{u \in \mathcal{R}_R(v)} q_u$. This is achieved by drawing $U \sim \text{Unif}[0,1)$ and selecting $u$ via the cumulative distribution of $q_u$. Lemma \ref{lem:appPolymerSampler_qwell} guarantees that this probability distribution is strictly well-defined.

		\item \textit{Step 3: (weighted) BFS search:} Let $Q$ be an initially empty queue used to perform a Breadth-First Search (BFS) in $G_{\text{int},R}$. Set $\gamma \leftarrow\{u\} ,\, Q \leftarrow \{u\}$, and let $u$ be our seed.
        While the queue $Q \neq \emptyset$, repeat the following:
\begin{enumerate}
    \item Extract a vertex $v'$ from $Q$. Calculate on the fly $q_{v''}$ for all neighbors $v''$.
    \item For each unexplored neighbor $v''$ of $v'$ (restricted to the effective local graph defined by $R$), mark $v''$ as explored. Then, with probability $q_{v''}$, add $v''$ to $Q$.
\end{enumerate}\textbf{}
       
Once the queue is empty, the proposed polymer $\gamma$ is precisely the connected subgraph induced by all the vertices that were successfully added to $Q$ during this process. The whole frontier $\partial_R \gamma$; all vertices $v\in V(G_{\text{int},R}) $ connected to $\gamma$, is explored but not added to the final subgraph $\gamma$. Thereby, given the seed $u$ and maximum radius $R$, the probability of obtaining a certain polymer by the BFS search is

        \begin{align}
            \text{Pr}(\gamma | u, R)=\prod_{Z \in \gamma \setminus\{ u \}}q_Z \prod_{Z' \in \partial_R \gamma}(1-q_{Z'}).
        \end{align}

		\item \textit{Step 4 (Final filter):} We compute $d^*(\gamma)=\max_{Z\in \gamma}d(Z)$, and the number of vertices of $\gamma$ connected to the vertex $v$;  $n_v(\gamma)=\vert \gamma \cap \mathcal{R}_R(v)\vert $.
        Let $\gamma$ be the proposed polymer generated in the previous step. Note that $\gamma \in S(G,v|R)\cup\{\emptyset\}$, where $S(G,v|R)$ denotes the set of all valid polymers connected to $v$ and restricted to interaction distances $d(Z) \leq R$. We accept this proposed polymer with probability
        \begin{align}\label{eq_filterfunction}
        f(\gamma |R) = \frac{\vert w_\gamma \vert}{\text{Pr}(R\geq d^*(\gamma))\cdot n_v(\gamma)\prod_{Z\in \gamma}q_Z \prod_{Z' \in \partial_R \gamma }(1-q_{Z'})}, 
\end{align}
and if the filter is not succesfully surpased return $\gamma = \emptyset$. Notice that for a probability filter to be valid it must be $\leq 1$. This is true because of Lemma \ref{lem:auxiliarProof_longRangeSampler_filter}.

To obtain the final probability of obtaining $\gamma$ we may marginalize over the possible seed vertices

\begin{align}
    \sum_{v' \in \mathcal{R}_R(V)} q_{v'} \text{Pr}(\gamma |v', R)  = m_{v'}(\gamma) \prod_{Z \in \gamma }q_Z \prod_{Z' \in \partial_R \gamma}(1-q_{Z'}),
\end{align}
and after applying the filter

\begin{align}
    \Pr[\gamma \text{ as output} |R]=m_{v'}(\gamma) \prod_{Z \in \gamma }q_Z \prod_{Z' \in \partial_R \gamma}(1-q_{Z'}) \cdot  \frac{\vert w_\gamma \vert}{\text{Pr}(R\geq d^*(\gamma))\cdot n_v(\gamma)\prod_{Z\in \gamma}q_Z \prod_{Z' \in \partial_R \gamma }(1-q_{Z'})} = \frac{\vert w_\gamma  \vert}{\text{Pr}(R\geq d^*(\gamma))}.
\end{align}
Finally, we marginalize over the random choice of $R$, remarking that if $R \leq d^*(\gamma)$ the probability of obtaining $\gamma$ as output is 0, yielding
\begin{align}
    \Pr[\gamma \text{ as output} ]= \sum_R p(R) \Pr[\gamma \text{ as output} |R] = \sum_{R\geq d^*(\gamma)} p(R) \frac{\vert w_\gamma  \vert}{\text{Pr}(R\geq d^*(\gamma))} = \vert w_\gamma \vert,
\end{align}

yielding a perfect sampler according to the distribution

\begin{align}
    \nu_v (\gamma)= \begin{cases}
         \vert w_\gamma \vert \quad  \forall \gamma \nsim v\\ 1- \sum_{\gamma \nsim v}  \vert w_\gamma \vert \quad  \text{if } \gamma = \emptyset
    \end{cases}
\end{align}

\end{itemize}

	\begin{proof}(Lem. \ref{lem:sampling_algorithm})
		Once we designed the algorithm and proved it is well defined, we examine the runtime step by step.
        
        \textit{Step 1:}
         The computational complexity is strictly $\mathcal{O}(1)$, as sampling the interaction radius $R \sim \Pr(R) \propto R^{-\tilde{\alpha}}$ via standard scalar techniques (e.g., inverse transform sampling) requires strictly $\mathcal{O}(1)$ time, being completely independent of the system size $N$ and the current polymer state.

        \textit{Step 2:}
The expected runtime of generating a polymer is bounded by (i) the exploration of its bulk and boundary, scaling as $\mathcal{O}\!\left(\norm{\gamma} + |\partial_R \gamma|\right)$, and (ii) the computation of $q_Z$ at each explored vertex.

(i) By Con.~\ref{con:kotecky_preiss_criteria}, the expected polymer size is strictly bounded:
\begin{align}
    \mathbb{E}_{\nu_{v}} \left[ \norm{\gamma} \right] \leq \sum_{\gamma \nsim v} |w_\gamma| \,\norm{\gamma} \leq \mathcal{O}\!\left( \sum_{\gamma \nsim v} |w_\gamma| \,\mathrm{e}^{\norm{\gamma}} \right) \leq \mathcal{O}(1).
\end{align}
Furthermore, the boundary size is bounded by $|\partial_R \gamma| \leq \Delta_R \norm{\gamma} = \mathcal{O}\!\left(R^{D(k-1)} \norm{\gamma}\right)$. Since $\tilde{\alpha} > D(k-1) + 1$ guarantees convergence over the radial distribution, marginalizing over $R$ yields:
\begin{align}\label{eq_constrainAlphaTilde1}
    \mathbb{E}_{R, \nu_{v}} \left[ \mathcal{O}\!\left(|\partial_R \gamma|\right) \right] &\leq \sum_{R=1}^{\infty} \Pr(r=R) \,\mathcal{O}\!\left(R^{D(k-1)}\right) \sum_{\gamma \nsim v} |w_\gamma| \,\norm{\gamma} \nonumber \\
    &\leq \mathcal{O}(1) \sum_{\gamma \nsim v} |w_\gamma| \,\mathrm{e}^{\norm{\gamma}} \leq \mathcal{O}(1).
\end{align}

(ii) Evaluating $q_Z$ essentially reduces to computing the spectral norm $\norm{h_Z}$. Since the local Hilbert space dimension at any vertex is bounded by a constant $d_\mathcal{H}$, extracting this norm via exact diagonalization (which scales as $\mathcal{O}(d_\mathcal{H}^3)$) requires strictly $\mathcal{O}(1)$ time. Consequently, the total expected runtime of this generation step is safely bounded by $\mathcal{O}(1)$.

        \textit{Step 4:}
In this final step, we must account for the computational cost $C_w$ of calculating the exact polymer weight $w_\gamma$. This calculation requires exactly diagonalizing the local Hamiltonian matrix $h_\gamma \equiv \sum_{Z \in \gamma} h_Z$. Given the local Hilbert space dimension at any vertex is bounded by $d_\mathcal{H}$, the total dimension of $h_\gamma$ is bounded by $d_\mathcal{H}^{\norm{\gamma}}$. Thus, by exact diagonalization, the cost for a given polymer is $\mathcal{O}\!\left(d_\mathcal{H}^{3\norm{\gamma}}\right)$. 

The expected computational cost is therefore:
\begin{align}
    \mathbb{E}_{\nu_{v}} \left[ C_w \right] \leq \mathbb{E}_{\nu_{v}} \left[ d_\mathcal{H}^{3\norm{\gamma}} \right] \leq \sum_{\gamma \nsim v} |w_\gamma| \,d_\mathcal{H}^{3\norm{\gamma}}.
\end{align}
Bounding this expectation follows the exact same structure as Con. \ref{con:kotecky_preiss_criteria} in Eq.~\eqref{eq:app_koteckypreiss_longrange_1st}. To ensure this sum converges to $\mathcal{O}(1)$, we strictly require the inverse temperature to satisfy the more stringent condition $\beta \leq \beta_{\text{diag}} = (2g)^{-1} \log \left(1 + (4uk d_\mathcal{H}^3)^{-1}\right)$, as stated in the main theorem. Under this condition, the expected cost of computing the weights introduces no asymptotic overhead, yielding $\mathbb{E}[C_w] = \mathcal{O}(1)$.

In conclusion, we have demonstrated that the polymer sampler correctly targets the distribution $\nu_v (\gamma)$. By defining the threshold 

\begin{align}\label{eq:def_betasampling}
	\beta &\leq \beta_{\text{sampling}} = \min (\beta_{\text{diag}}, \beta_{\text{loc}}, \beta^{**}) \equiv  g^{-1} c_\beta , \nonumber \\
	c_\beta &= \min \left( \log\!\Bigl(1+\tfrac{1}{2\,\mathrm{e}\,k\,u'}\Bigr), \ \tfrac{1}{2}\log\!\Bigl(1+\tfrac{1}{4uk d_\mathcal{H}^3}\Bigr), \ \tfrac{1}{2}\log\!\Bigl(1+\tfrac{1}{(24\sqrt{\mathrm e}\,uk)^2}\Bigr) \right),
\end{align}
where $u$ is defined in \eqref{eq:u} and $u'$ in \eqref{eq:Ldef},
we guarantee that this entire subroutine operates in strictly $\mathcal{O}(1)$ expected time.

\end{proof}

\begin{rem}
\label{rem_kdependence}
    For $k$-local Hamiltonians, the local dimension scales as $d_\mathcal{H} \propto \exp(k)$, which restricts the efficient sampling regime to $\beta_{\text{sampling}} \propto \exp(-k)$. In contrast, the fundamental convergence threshold is significantly broader, scaling as $\beta^{**} \le \mathcal{O}(k^{-2})$. We note, however, that one could bypass this randomized sampling bottleneck by accepting the worst-case deterministic cost of the polymer sampler, which scales as $\exp(m)$. While this alternative introduces an additional $\text{poly}(N, \varepsilon^{-1})$ overhead to the total runtime, it successfully enables the overall algorithm to operate at the tighter convergence threshold $\beta^{**} \le \mathcal{O}(k^{-2})$.
\end{rem}

\subsubsection{Auxiliary lemmas for the polymer sampler}

Let us introduce some additional notation for this subsection. The radius $R$ of Step 1 is random, so we write
\begin{equation}
  F(d)\;:=\;\Pr\bigl(R\ge d\bigr)\;=\;\frac{1}{\zeta(\tilde\alpha)}\sum_{s\ge d}s^{-\tilde\alpha}.
  \label{eq:Fdef}
\end{equation}
With it, the local weights of Step 2 read
\begin{equation}
  q_Z\;=\;\frac{\tau\left( \mathrm{e}^{\beta g}-1 \right)}{F\bigl(d(Z)\bigr)}\cdot\frac{\lVert h_Z\rVert}{g},
  \qquad \qquad \tau\ge e .
  \label{eq:qdef}
\end{equation}
Let us introduce the constant $u'$, the reweighted analogue of the geometric constant $u$ of Eq.\ref{eq:u}
\begin{equation}
  u'\;:=\;\max_{i\in\Lambda}\ \sum_{Z\ni i}\ \frac{\lVert h_Z\rVert}{g\,F\bigl(d(Z)\bigr)} .
  \label{eq:Ldef}
\end{equation}

\begin{lemma}[Local weights are well defined]\label{lem:appPolymerSampler_qwell}
Let $\tilde\alpha<\alpha+1-D$ be the radial exponent of Step 1.
Then $u'<\infty$. Moreover, if the inverse temperature satisfies 
\begin{equation}
  \beta\;\leq \beta_{\text{loc}}= \frac{1}{g}\log\!\Bigl(1+\frac{1}{2\,\mathrm{e}\,k\,u'}\Bigr),
  \label{eq:betaloc}
\end{equation}
then
\begin{equation}
  \mathcal B\;:=\;\max_{Z\in V(G_{\text{int}})}\ \sum_{Z'\nsim Z} q_{Z'}
  \label{eq:Bdef}
\end{equation}
obeys $\mathcal B\le\tfrac12$. In particular, for every term $Z$, every vertex $v$ and every radius $R$,

\begin{align}
    \sum_{u\in\mathcal R_R(v)}q_u\le\tfrac12\le1
\end{align}
so Step 2 is a well-defined sub-probability distribution on $\mathcal R_R(v)$, with null outcome of probability
        $1-\sum_{u\in\mathcal R_R(v)}q_u\ge\tfrac12$.
\end{lemma}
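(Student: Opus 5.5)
The plan is to prove the two claims in turn: first the finiteness of $u'$, which is a purely geometric statement, and then the bound $\mathcal B\le\tfrac12$ under $\beta\le\beta_{\text{loc}}$.

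\textbf{Finiteness of $u'$.} We first control the tail function $F$. Since $F(d)=\zeta(\tilde\alpha)^{-1}\sum_{s\ge d}s^{-\tilde\alpha}$ and $\tilde\alpha>1$ (it lies in the admissible window), an integral comparison gives $F(d)\ge c\,d^{1-\tilde\alpha}$ for a constant $c>0$. Hence $1/F(d(Z))\le C\,(1+d(Z))^{\tilde\alpha-1}$.

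Fix a site $i$. We organise the sum over $Z\ni i$ by the pair $(i,j)$ that realises the extent $d(Z)=\max d_{ij}$; one such pair may always be chosen to contain $i$, up to a factor $2$ from the triangle inequality. This gives
\[
\sum_{Z\ni i}\frac{\|h_Z\|}{gF(d(Z))}\le C'\sum_{j}\,(1+d_{ij})^{\tilde\alpha-1}\sum_{Z\ni(i,j)}\frac{\|h_Z\|}{g}.
\]
By Eq.~\eqref{eq:longrange_definition}, and since $\sum x\le(\sum x^{1/2})^2$, the inner sum is at most $(1+d_{ij})^{-\alpha}$. Counting $\mathcal O(t^{D-1})$ sites $j$ at distance $t$, we obtain
\[
u'\le C''\sum_t t^{D-1+\tilde\alpha-1-\alpha}<\infty,
\]
which converges exactly when $\tilde\alpha<\alpha+1-D$. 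The main care here is making sure that pinning the maximal pair at $i$ does not lose more than a constant factor. I expect this to be routine, at the cost of replacing $F(d(Z))$ by $F(2d_{ij})$ and using $F(2d)\ge 2^{1-\tilde\alpha}F(d)$-type comparisons.

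\textbf{Bound on $\mathcal B$.} Fix a term $Z$. Any $Z'\nsim Z$ shares one of the at most $k$ sites of $Z$, so
\[
\sum_{Z'\nsim Z}q_{Z'}\le\sum_{i\in Z}\sum_{Z'\ni i}q_{Z'}\le k\,\tau(\mathrm e^{\beta g}-1)\,u'.
\]
Taking $\tau=\mathrm e$ (the minimal admissible value) and using $\beta\le\beta_{\text{loc}}$, so that $\mathrm e^{\beta g}-1\le 1/(2\mathrm e k u')$, we get $\mathcal B\le\tfrac12$.

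\textbf{Consequence for Step 2.} For any $v$ and $R$, the set $\mathcal R_R(v)$ is contained in the set of all terms incompatible with $v$. Since all $q_u\ge0$, we have $\sum_{u\in\mathcal R_R(v)}q_u\le\mathcal B\le\tfrac12$, so the null outcome has probability at least $\tfrac12$.

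One subtlety remains: the $q$'s themselves should not depend on $R$ beyond restricting the summation set, which holds by the definition in Eq.~\eqref{eq:qdef}.
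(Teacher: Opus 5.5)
Your proposal is correct and follows essentially the same route as the paper. Both use an integral comparison for $1/F$, assign each $Z\ni i$ to its farthest site $j^\ast$ with $d(Z)\le 2d_{ij^\ast}$, and use the $t^{D-1}$ surface count to get convergence exactly when $\tilde\alpha<\alpha+1-D$, followed by the $k$-site covering bound for $\mathcal B$. Two small remarks. Your explicit choice $\tau=\mathrm e$ is what the paper's step ``$\beta\le\beta_{\text{loc}}$ is precisely $(\mathrm e^{\beta g}-1)\le(2\tau k u')^{-1}$'' silently requires. You should also add the single-site term $Z=\{i\}$ separately, since it has no pair $(i,j)$; it is a finite constant $\|h_{\{i\}}\|/g$ because $F(0)=1$.
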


\begin{proof}
\textit{Finiteness of $u'$.} Comparing the tail with an integral,
$\sum_{s\ge d}s^{-\tilde\alpha}\ge\int_d^{\infty}s^{-\tilde\alpha}\,ds=d^{1-\tilde\alpha}/(\tilde\alpha-1)$,
so that
\begin{equation}
  \frac{1}{F(d)}\;\le\;(\tilde\alpha-1)\,\zeta(\tilde\alpha)\,d^{\,\tilde\alpha-1}.
  \label{eq:Ftail}
\end{equation}
Fix a site $i$ and a term $Z\ni i$ with $\lvert Z\rvert\ge2$, and let $j^\ast\in Z$ maximise
$d_{ij}$. By the triangle inequality every pair of sites of $Z$ is within
$2d_{ij^\ast}$ of each other, so $d(Z)\le 2d_{ij^\ast}$ and, $F$ being non-increasing,
$1/F(d(Z))\le 1/F(2d_{ij^\ast})$. Assigning each such $Z$ to the edge $(i,j^\ast)$ and
summing over all edges incident to $i$ can only over-count, whence by Eq. \eqref{eq:longrange_definition}
\begin{align}\label{eq_constrainAlphaTilde2}
  \sum_{\substack{Z\ni i\\ \lvert Z\rvert\ge2}}\frac{\lVert h_Z\rVert}{g\,F(d(Z))}
  \;\leq &\;\sum_{j\ne i}\frac{1}{F(2d_{ij})}\sum_{Z\ni(i,j)}\frac{\lVert h_Z\rVert}{g} \\
  \;\le\;\sum_{j\ne i}\frac{(1+d_{ij})^{-\alpha}}{F(2d_{ij})} .
\end{align}
Using \eqref{eq:Ftail} and grouping the sites $j$ by their distance $r=d_{ij}$, of which there are
at most $c_D\,r^{D-1}$ at each $r$ ,
\begin{align}
  \sum_{\substack{Z\ni i\\ \lvert Z\rvert\ge2}}&\frac{\lVert h_Z\rVert}{g\,F(d(Z))}
  \;\le \\
  &\;c_D\,2^{\tilde\alpha-1}(\tilde\alpha-1)\zeta(\tilde\alpha)
  \sum_{r\ge1} r^{\,\tilde\alpha-1}\,r^{D-1}\,(1+r)^{-\alpha},
\end{align}
and the series converges since $\tilde{\alpha}<\alpha+1-D$. Adding the single-site term, bounded by
$\lVert h_{\{i\}}\rVert/g\le1$ since $F(0)=1$, gives $u'<\infty$.

\smallskip
\textit{Bound on $\mathcal B$.} Two terms are incompatible exactly when their supports on
$\Lambda$ overlap, so every $Z'\nsim Z$ contains at least one of the at most $k$ sites of
$\text{supp}(Z)$. Hence
\begin{align}
  \sum_{Z'\nsim Z} q_{Z'}
  \;\le& \;\sum_{i\in\text{supp}(Z)}\ \sum_{Z'\ni i} q_{Z'}
  \; \\
  =&\;\tau\left( \mathrm{e}^{\beta g}-1 \right)\sum_{i\in\text{supp}(Z)}\ \sum_{Z'\ni i}\frac{\lVert h_{Z'}\rVert}{g\,F(d(Z'))}\\
  \;\le&\;\tau\,\left( \mathrm{e}^{\beta g}-1 \right)\,k\,u' ,
\end{align}
and $\beta\le \beta_{\text{loc}}$ is precisely $\left( \mathrm{e}^{\beta g}-1 \right)\le(2\tau ku')^{-1}$, giving $\mathcal B\le\tfrac12$.

\smallskip
\textit{(i) and (ii).} Every term is incompatible with itself, so $q_Z$ is one of the
summands defining $\mathcal B$ and therefore $q_Z\le\mathcal B\le\tfrac12$. For (ii),
$\mathcal R_R(v)=\{v'\in V(G_{\text{int, R}}):v'\nsim v\}$ is a subset of the incompatibility
neighbourhood of $v$ in $G_{\text{int}}$, so $\sum_{u\in\mathcal R_R(v)}q_u\le\mathcal B\le\tfrac12$,
uniformly in $R$.
\end{proof}

\begin{lemma}\label{lem:frontier}
Assume $\beta\le\beta_{\text{loc}}$, so that $\mathcal B\le\tfrac12$ by Lem.\ \ref{lem:appPolymerSampler_qwell}. Then, for every
polymer $\gamma$ and every radius $R$,
\begin{equation}
  \prod_{Z'\in\partial \gamma_R}\bigl(1-q_{Z'}\bigr)\;\ge\;e^{-2\mathcal B\lVert\gamma\rVert}\;\ge\;e^{-\lVert\gamma\rVert}.
  \label{eq:frontier}
\end{equation}
\end{lemma}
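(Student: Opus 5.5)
The plan is to bound the product termwise with an elementary inequality and then control the total frontier weight by $\mathcal B$. Lem.~\ref{lem:appPolymerSampler_qwell} gives $0\le q_{Z'}\le\mathcal B\le\tfrac12$ for every term. On $[0,\tfrac12]$ the concave function $x\mapsto\log(1-x)+2x$ vanishes at $x=0$ and equals $\log\tfrac12+1>0$ at $x=\tfrac12$, so it is nonnegative throughout, i.e.
\begin{equation*}
  1-q_{Z'}\;\ge\;\mathrm e^{-2q_{Z'}}\qquad\text{for all } Z' .
\end{equation*}
Multiplying over the frontier then gives
\begin{equation*}
  \prod_{Z'\in\partial\gamma_R}\bigl(1-q_{Z'}\bigr)\;\ge\;\exp\Bigl(-2\sum_{Z'\in\partial\gamma_R}q_{Z'}\Bigr),
\end{equation*}
which reduces everything to an upper bound on the frontier weight.

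Next we bound the frontier weight. By definition, every $Z'\in\partial\gamma_R$ is a vertex of $G_{\text{int},R}$ that is incompatible with at least one $Z\in\gamma$. Assigning each frontier term to such a $Z$ can only over-count, so
\begin{equation*}
  \sum_{Z'\in\partial\gamma_R}q_{Z'}\;\le\;\sum_{Z\in\gamma}\ \sum_{Z'\nsim Z}q_{Z'}\;\le\;\lVert\gamma\rVert\,\mathcal B .
\end{equation*}
The second inequality uses the definition \eqref{eq:Bdef} of $\mathcal B$ as a maximum over $Z$. It also uses that restricting to $G_{\text{int},R}$ only removes nonnegative summands, which makes the bound uniform in $R$.

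Combining the two steps gives the first inequality $\mathrm e^{-2\mathcal B\lVert\gamma\rVert}$. The second inequality follows from $2\mathcal B\le 1$.

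The only delicate point is the meaning of $q_{Z'}$ inside $\partial\gamma_R$. The quantity $q_Z$ in \eqref{eq:qdef} depends on $F(d(Z))$ but not on $R$, and $\mathcal B$ is defined over the full $G_{\text{int}}$. The bound from Lem.~\ref{lem:appPolymerSampler_qwell} therefore applies verbatim to any subset of the neighbourhood, and no $R$-dependence enters.

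We also need that $\partial\gamma_R$ excludes $\gamma$ itself. Otherwise the self-incompatible terms $Z'=Z$ would be counted, but these are already included in the sum defining $\mathcal B$, so the bound holds either way.
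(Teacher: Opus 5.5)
Your proof is correct and is essentially the paper's argument: the elementary bound $1-x\ge \mathrm{e}^{-2x}$ on $[0,\tfrac12]$, applied termwise, followed by over-counting the frontier weight by $\sum_{Z\in\gamma}\sum_{Z'\nsim Z}q_{Z'}\le\mathcal B\lVert\gamma\rVert$. The only difference is how you verify the scalar inequality: you use concavity of $\log(1-x)+2x$ together with its nonnegative endpoint values, whereas the paper uses the sign of its derivative on $[0,\tfrac12]$; both are valid.
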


\begin{proof}
For $x\in[0,\tfrac12]$ one has $1-x\ge e^{-2x}$: the function $h(x)=\log(1-x)+2x$ vanishes at
$x=0$ and has $h'(x)=2-(1-x)^{-1}\ge0$ on $[0,\tfrac12]$. By Lem.\ \ref{lem:appPolymerSampler_qwell}(i) every
$q_{Z'}$ lies in that range, so
\begin{equation}
  \prod_{Z'\in\partial \gamma_R}(1-q_{Z'})\;\ge\;\exp\Bigl(-2\sum_{Z'\in\partial \gamma_R}q_{Z'}\Bigr).
\end{equation}
The frontier $\partial \gamma_R$ consists of vertices of $G_{\text{int, R}}$ adjacent to $\gamma$, hence incompatible
with at least one of its terms, so $\partial \gamma_R\subseteq\bigcup_{Z\in\gamma}\{Z':Z'\nsim Z\}$ and
\begin{equation}
  \sum_{Z'\in\partial \gamma_R}q_{Z'}\;\le\;\sum_{Z\in\gamma}\ \sum_{Z'\nsim Z}q_{Z'}\;\le\;\mathcal B\,\lVert\gamma\rVert .
  \qedhere
\end{equation}
\end{proof}

\begin{lemma}[The filter $f(\gamma |R)$ is a valid probability]\label{lem:auxiliarProof_longRangeSampler_filter}
Let $\tilde\alpha<\alpha+1-D$, let $\tau\ge e$ and let $\beta\le\beta_{\text{loc}}$. Then the filter function of Eq. \ref{eq_filterfunction}
satisfies $0\le f(\gamma\mid R)\le1$ for every polymer $\gamma\nsim v$ and every radius
$R\ge d^\ast(\gamma)$.
\end{lemma}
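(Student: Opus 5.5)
Nonnegativity is immediate, since every factor in Eq.~\eqref{eq_filterfunction} is nonnegative: $q_Z\le\tfrac12$ and $n_v(\gamma)\ge1$ because $\gamma\nsim v$. The real content is the upper bound, and I would obtain it by bounding numerator and denominator separately.

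\textbf{Upper bound on the numerator.} I would use Lem.~\ref{lemma_boundweight}:
\[
|w_\gamma|\le \bigl(\mathrm{e}^{\beta g}-1\bigr)^{\|\gamma\|}\prod_{Z\in\gamma}\frac{\|h_Z\|}{g}.
\]

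\textbf{Lower bound on the denominator.} There are three pieces to control.
\begin{itemize}
\item The frontier product satisfies $\prod_{Z'\in\partial_R\gamma}(1-q_{Z'})\ge \mathrm{e}^{-\|\gamma\|}$ by Lem.~\ref{lem:frontier}.
\item We have $n_v(\gamma)\ge 1$.
\item By the definition Eq.~\eqref{eq:qdef},
\[
\prod_{Z\in\gamma} q_Z=\tau^{\|\gamma\|}\bigl(\mathrm{e}^{\beta g}-1\bigr)^{\|\gamma\|}\prod_{Z\in\gamma}\frac{\|h_Z\|}{g\,F(d(Z))}.
\]
\end{itemize}

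Taking the ratio, the factors $(\mathrm{e}^{\beta g}-1)^{\|\gamma\|}$ and $\prod_Z \|h_Z\|/g$ cancel exactly. What remains is
\[
f(\gamma\mid R)\le \frac{\mathrm{e}^{\|\gamma\|}}{\tau^{\|\gamma\|}}\cdot\frac{\prod_{Z\in\gamma}F(d(Z))}{F(d^\ast(\gamma))}.
\]
Since $\tau\ge \mathrm{e}$, the first factor is at most $1$.

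\textbf{The radial factor.} This is the step needing care. Because $F$ is non-increasing and $d(Z)\le d^\ast(\gamma)$ for every $Z\in\gamma$, each $F(d(Z))\le 1$. Moreover, the term $Z_{\max}$ attaining $d^\ast(\gamma)$ contributes exactly $F(d^\ast(\gamma))$. So I would drop all factors except that of $Z_{\max}$, using $F\le 1$ and $F(0)=1$ for single-site terms, to get
\[
\prod_{Z\in\gamma}F(d(Z))\le F(d^\ast(\gamma)),
\]
and the ratio is at most $1$.

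The condition $R\ge d^\ast(\gamma)$ ensures that $F(d^\ast(\gamma))=\Pr(R\ge d^\ast(\gamma))>0$ is the correct conditioning and that all terms of $\gamma$ live in $G_{\text{int},R}$. The role of $\tilde\alpha<\alpha+1-D$ and $\beta\le\beta_{\text{loc}}$ is only to make Lem.~\ref{lem:appPolymerSampler_qwell} and Lem.~\ref{lem:frontier} applicable: they guarantee $q_Z\le\tfrac12$ and hence the frontier bound.

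\textbf{Main obstacle.} I expect the main difficulty to be the bookkeeping of which terms enter with $1/F$. One has to check that the numerator bound from Lem.~\ref{lemma_boundweight} is taken over the same set of terms $Z\in\gamma$ as the $q_Z$ product in the denominator, including the seed. If the seed is excluded from the BFS product but included in the filter normalization, the cancellation above still goes through, but it must be verified against Eq.~\eqref{eq_filterfunction}, where the product runs over all $Z\in\gamma$.
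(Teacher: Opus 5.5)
Your proposal is correct and follows the paper's proof. You bound $|w_\gamma|$ via Lem.~\ref{lemma_boundweight}, rewrite it through the definition of $q_Z$ so that the factors $(\mathrm{e}^{\beta g}-1)$ and $\|h_Z\|/g$ cancel, and then use $\prod_{Z\in\gamma}F(d(Z))\le F(d^\ast(\gamma))$, $n_v(\gamma)\ge 1$ and Lem.~\ref{lem:frontier} to conclude $f(\gamma\mid R)\le(\mathrm{e}/\tau)^{\|\gamma\|}\le 1$. The only differences are cosmetic: the paper carries the factor $\mathrm{e}^{2\mathcal B}$ before invoking $\mathcal B\le\tfrac12$, and your seed concern resolves because the seed is drawn with probability $q_u$, so the marginal product runs over all $Z\in\gamma$, exactly as in Eq.~\eqref{eq_filterfunction}.
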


\begin{proof}
Non-negativity is clear. For the upper bound we treat the three factors of the denominator in turn.

\smallskip
\textit{(a) } By Lem. \ref{lemma_boundweight}, Eq.\ (C2),
\begin{equation}
  \lvert w_\gamma\rvert\;\le\;\left( \mathrm{e}^{\beta g}-1 \right)^{\lVert\gamma\rVert}\prod_{Z\in\gamma}\frac{\lVert h_Z\rVert}{g}
  \;=\;\frac{1}{\tau^{\lVert\gamma\rVert}}\ \prod_{Z\in\gamma} q_Z\,F\bigl(d(Z)\bigr),
\end{equation}
where the equality is just the definition \eqref{eq:qdef} of $q_Z$ solved for
$\left( \mathrm{e}^{\beta g}-1 \right)\lVert h_Z\rVert/g$.

\smallskip
\textit{(b)} Every factor $F(d(Z))$ lies in
$(0,1]$, and the term $Z_0\in\gamma$ attaining $d(Z_0)=d^\ast(\gamma)$ contributes exactly
$F(d^\ast(\gamma))$. Hence
\begin{equation}
  \prod_{Z\in\gamma}F\bigl(d(Z)\bigr)\;\le\;F\bigl(d^\ast(\gamma)\bigr)\;=\;\Pr\bigl(R\ge d^\ast(\gamma)\bigr),
\end{equation}
and combining with (a),
\begin{equation}
  \frac{\lvert w_\gamma\rvert}{\Pr\bigl(R\ge d^\ast(\gamma)\bigr)\prod_{Z\in\gamma}q_Z}
  \;\le\;\frac{1}{\tau^{\lVert\gamma\rVert}} .
  \label{eq:stepb}
\end{equation}

\smallskip
\textit{(c)} By Lem.\ \ref{lem:frontier},
$\prod_{Z'\in\partial \gamma_R}(1-q_{Z'})\ge e^{-2\mathcal B\lVert\gamma\rVert}$.

\smallskip
Putting (a)--(c) together with $n_v(\gamma)\ge1$,
\begin{align}
  f(\gamma\mid R)\;=&\;\frac{\lvert w_\gamma\rvert}{\Pr(R\ge d^\ast(\gamma))\,n_v(\gamma)
  \prod_{Z\in\gamma}q_Z\prod_{Z'\in\partial \gamma_R}(1-q_{Z'})} \\
  \;\le&\;\Bigl(\frac{e^{2\mathcal B}}{\tau}\Bigr)^{\lVert\gamma\rVert}\;\le\;1,
\end{align}
the last inequality because $\mathcal B\le\tfrac12$ gives $e^{2\mathcal B}\le e\le\tau$.
\end{proof}

\end{widetext}

\end{document}